\documentclass[twocolumn,a4paper]{quantumarticle}
\pdfoutput=1

\usepackage[utf8]{inputenc}
\usepackage[T1]{fontenc}
\usepackage{amsmath}
\usepackage{amssymb}
\usepackage[numbers,sort&compress]{natbib}
\usepackage{microtype}
\usepackage{tikz}
\usetikzlibrary{calc}

\usepackage{hyperref}

\usepackage{amsthm}

\newif\ifdraftmode
\draftmodefalse         

\ifdraftmode
  \newcommand{\todo}[1]{\textcolor{red}{\textsf{\footnotesize [TODO: #1]}}}
  \newcommand{\tofill}[1]{\textcolor{blue!70!black}{\textsf{\footnotesize $\langle$#1$\rangle$}}}
  \newcommand{\note}[1]{\textcolor{gray}{\textsf{\footnotesize (#1)}}}
\else
  \newcommand{\todo}[1]{}
  \newcommand{\tofill}[1]{}
  \newcommand{\note}[1]{}
\fi

\theoremstyle{plain}
\newtheorem{theorem}{Theorem}
\newtheorem{proposition}[theorem]{Proposition}
\newtheorem{lemma}[theorem]{Lemma}
\newtheorem{corollary}[theorem]{Corollary}
\theoremstyle{definition}
\newtheorem{definition}[theorem]{Definition}

\theoremstyle{remark}

\newcommand{\C}{\mathbb{C}}
\newcommand{\Z}{\mathbb{Z}}
\newcommand{\Herm}{\mathrm{Herm}}
\newcommand{\PSD}{\mathrm{PSD}}
\newcommand{\SWAP}{\mathbb{S}}
\newcommand{\Tr}{\operatorname{Tr}}
\newcommand{\id}{\mathrm{id}}
\newcommand{\Id}{\mathbb{I}}

\newcommand{\Th}{\mathsf{T}}                    
\newcommand{\sTh}{\mathsf{sT}}                  
\newcommand{\QT}{\mathsf{QT}}
\newcommand{\RQT}{\mathsf{RQT}}
\newcommand{\CPT}{\mathsf{CPT}}

\newcommand{\stateset}{\Omega}                  
\newcommand{\effset}{\mathcal{E}}               
\newcommand{\opset}{\mathcal{O}}                
\newcommand{\effmax}{\effset_{\max}}

\newcommand{\PT}[1]{\Gamma_{#1}}                
\newcommand{\ptr}[2]{#1^{\mathsf{T}_{#2}}}      
\newcommand{\cpptp}{\text{C-PPT-P}}

\newcommand{\xieff}{\xi}                        
\newcommand{\corr}{\mathrm{corr}}

\newcommand{\ket}[1]{\lvert #1 \rangle}
\newcommand{\bra}[1]{\langle #1 \rvert}
\newcommand{\ketbra}[2]{\lvert #1 \rangle\!\langle #2 \rvert}
\newcommand{\proj}[1]{\ketbra{#1}{#1}}

\begin{document}

\title{Restricting the effects hides a nonphysical symmetry
       from every causal structure}

\author{Chon-Fai Kam}
\email{dubussygauss@gmail.com}
\affiliation{Dipartimento di Fisica e Chimica ``Emilio Segr\`e'',
             Universit\`a degli Studi di Palermo, Via Archirafi 36,
             I-90123 Palermo, Italy}
\affiliation{DSIMB, Inserm, BIGR U1134, Universit\'e Paris Cit\'e and
             Universit\'e de La R\'eunion, 75015 Paris, France}

\date{7 September 2026}   

\begin{abstract}
Real-amplitude quantum theory is the subtheory of quantum theory invariant
under complex conjugation, and experiments in networks of independent sources
have measured correlations above the real bound. We construct a theory on which
the same conjugation fails complete positivity and still leaves every
probability in the unit interval, and whose correlations are exactly those of
its own conjugation-invariant subtheory in every causal structure, the
bilocality scenario included. Its states are all the density matrices, its
effects are the operators every partial transpose of which is again a quantum
effect, and its operations are the completely PPT-preserving maps. The
symmetrized subtheory simulates it once each source carries a reference frame
rather than each system. One map therefore receives three different verdicts in
three theories, so the symmetry alone marks no boundary at all. Quantum theory
admits every effect its states permit, and no theory with that property can
hide a symmetry this way, so what sustains the separation measured in the
network experiments is the absence of a restriction rather than any feature of
conjugation. What does have a boundary is the class of theories whose
correlations coincide with those of their symmetrized subtheory. We show that
sectorial closure, meaning invariance of the effects and the operations under
the symmetry acting independently on each source, suffices for the absence of a
gap under any finite group, and that it cannot be weakened on the effects.
Fixing unrestricted states and conjugation makes the effects of this theory the
largest the symmetry admits and its operations the largest sectorially closed
ones, so it is not one construction among several. Locating where sectorial
closure fails, for a candidate effect set built from a fixed bound entangled
state, is a finite computation on a single ray of effects.
\end{abstract}

\maketitle

\section{Introduction}
\label{sec:intro}

Quantum theory can be written over the real numbers. Stueckelberg shows that a
real Hilbert space of twice the dimension, carrying an operator that commutes
with every observable, reproduces the predictions of the complex theory for any
single system~\cite{Stueckelberg1960}. That the single-system
axioms cannot by themselves select the complex field was already visible in the
classification of formally real Jordan algebras~\cite{JvNW1934}. The construction extends
to many systems as long as all of them share one reference frame, which
McKague, Mosca and Gisin make precise~\cite{McKague2009}, and which can be
implemented with a single universal shared qubit~\cite{ABW2013}. What defeats
it is a network~\cite{Tavakoli2022networks}. Renou \emph{et al.} show that in the bilocality
scenario~\cite{Branciard2010}, where two independent sources feed a common
party, real-amplitude quantum theory and quantum theory obey different bounds,
and that the difference is large enough to measure~\cite{Renou2021}. The separation can be made arbitrarily large~\cite{Sarkar2025gap}. Three
experiments have since reported violations of the real
bound~\cite{LiReal2022,ChenReal2022,WuReal2022}.

Ying \emph{et al.} propose a way of asking the question that does not
turn on where one draws the boundary of real quantum
theory~\cite{Ying2025foil}. Real-amplitude quantum theory is the subtheory of
quantum theory invariant under complex conjugation, and conjugation is a
symmetry of order two. One may then ask, of any theory and any symmetry,
whether passing to the invariant subtheory costs correlations in some causal
structure. Their main positive result is that it never does when the symmetry
is physical, meaning that each of its transformations is already an operation
of the theory. When the symmetry is not physical they distinguish two cases.
Adjoining its transformations may leave every probability in the unit interval,
in which case the symmetry is weakly nonphysical, or it may not, in which case
it is strongly nonphysical. Conjugation is strongly nonphysical in quantum
theory, and the separation found by Renou \emph{et al.} is the corresponding
instance. Whether a strongly nonphysical symmetry must always produce a
separation somewhere for a nonclassical theory is their central conjecture, and
it is open.

The middle case is the one that has stayed empty. Ying \emph{et al.} give a
single weakly nonphysical example, a fragment of quantum theory whose
operations are restricted by hand, and the simulation showing that it has no
gap is patched to fit that restriction. The scarcity has a reason. In
quantum theory a transformation that is logically possible is thereby
physical, and the reason, as those authors observe, is the no-restriction
hypothesis. Anything that could be added to quantum theory without producing a
number outside the unit interval is in it already. A weakly nonphysical
symmetry therefore cannot be exhibited inside quantum theory at all, and
finding one means giving up the no-restriction hypothesis.

This paper takes that step in the simplest way available, by restricting the
effects and leaving the states alone. What emerges is that the robustness of
the real-amplitude separation rests on the no-restriction hypothesis rather
than on any property of conjugation, and that a theory lacking it can carry the
same symmetry undetectably. Section~\ref{sec:theory} constructs a
theory $\Th$ whose states are all the density matrices, whose effects are the
operators every partial transpose of which is again a quantum effect, and
whose operations are the completely PPT-preserving maps of Rains and of Ishizaka and
Plenio~\cite{Rains1999,Rains2001,IshizakaPlenio2005}. Complex conjugation is a
symmetry of $\Th$. Section~\ref{sec:conj} shows that it fails complete
positivity, since conjugating one half of a maximally entangled state does not
give a state, and that it is nonetheless only weakly nonphysical. The two
properties come apart because $\Th$ restricts its effects, which leaves room
between the states of the theory and the operators that its effects cannot
tell apart from states.

Restricting the effects rather than the states is forced, and the argument is
a statement about cones. The middle case needs a theory whose symmetry carries a state out of the state cone
while remaining invisible to every effect, so it needs room between the states
and the operators that the effects cannot tell apart from states. A restriction
on states cannot open that room, because the bidual of the effect set returns
the state cone one started from. A restriction on effects can, and it is also
the only kind that leaves an effect set closed under a map that does not
preserve positivity, which is what the simulation needs. The same cone
computation therefore locates the room and names the restriction that makes it.

That choice also fixes the theory, which meets the objection that a
construction of this kind has been chosen to make its own argument work.
Fix unrestricted states and take conjugation as the symmetry, and the effect
set of $\Th$ is the largest one that admits the symmetry at all, in the precise
sense of Corollary~\ref{cor:emaxclosed}, while Corollary~\ref{cor:opsmaximal}
does the same for its operations among the sectorially closed classes. Theorem~\ref{thm:scprime} closes the other
side, since any effect set the encoding of Sec.~\ref{sec:enc} can reach is
contained in that one. So $\Th$ is the maximal admissible theory rather than one among several, and
Secs.~\ref{sec:theory} and~\ref{sec:enc} reach it by different routes.

Section~\ref{sec:enc} builds a simulation of $\Th$ inside its symmetrized
subtheory, and the construction turns on one choice. Earlier work attaches a
reference frame to each system, and the cross terms that result involve a
partial transpose, which does not preserve positivity. We attach one frame to
each source instead. A source is the object that a causal structure permits to emit
correlated systems, so the choice is allowed, and the
transpositions that then survive act on whole sources, which does preserve
positivity. Section~\ref{sec:main} concludes that $\Th$ and its symmetrized
subtheory have the same correlations in every causal structure. The bilocality
scenario is among them, so a structure that separates real-amplitude from
complex quantum theory fails to separate $\sTh$ from $\Th$, although it
satisfies every condition known to be necessary for a separation.

Section~\ref{sec:sectorial} abstracts the construction. What it requires is
that the effect and operation sets be invariant under the symmetry acting
independently on each source, a property we call sectorial closure, and that
property is sufficient for the absence of a gap under any finite group. A weakly nonphysical symmetry always has its effect set inside the largest
sectorially closed one. That containment runs the argument in reverse, given an exact flag and an
operation class meeting the hypotheses of Theorem~\ref{thm:sectorial}, so every
weakly nonphysical symmetry that supplies those comes with a gap-free companion
theory. The containment
alone does not suffice, a point we return to in
Sec.~\ref{sec:sectorial:not}. The theory $\Th$ is that companion for
conjugation, which is why its effect set could not have been chosen otherwise.

The nearest neighbour to this work is that of Erba and Perinotti, who modify
the rule by which quantum systems compose and obtain a family of theories
agreeing with quantum theory on Bell-type correlations even when the causal
structure is held fixed~\cite{ErbaPerinotti2024}. Here the composition rule is
the ordinary tensor product and only the effects and the operations are
restricted, and the comparison is between a theory and its own symmetrized
subtheory rather than between a new theory and quantum theory. Their
contribution is an example. Ours is a criterion, together with the theory that
saturates it.

\section{Preliminaries}
\label{sec:prelim}

This section fixes terminology and notation. Nothing in it is new, and
everything is stated only in the generality the rest of the paper needs. The theories of Secs.~\ref{sec:theory} to~\ref{sec:main} all have quantum
systems composed by the ordinary tensor product, so that we may work inside the
Hermitian operators rather than in an abstract ordered vector space.
Section~\ref{sec:sectorial} and the discussion return to the general
setting.

\subsection{Theories and restrictions}
\label{sec:prelim:gpt}

We follow the standard formulation of generalized probabilistic
theories~\cite{Hardy2001,Barrett2007,CDP2011}, for which several reviews are
available~\cite{JanottaHinrichsen2014,Plavala2023}.

A generalized probabilistic theory assigns to each system a convex set of
states $\stateset$, a convex set of effects $\effset$ and a set of operations
$\opset$, together with a rule for composing systems. Three conditions are
imposed. Every state pairs with every effect to give a number in $[0,1]$. The
effect set contains $0$ and the unit effect and is closed under
$E \mapsto \Id - E$. And states and effects are tomographic for one
another. For the theories considered here $\stateset$ is a set of density
matrices, $\effset$ a set of operators in $[0,\Id]$, $\opset$ a set of channels
and instruments, and the pairing is the trace.

Given a state set, write
\begin{equation}
  \label{eq:maximaleffects}
  \stateset^{\vee} \;=\; \bigl\{ E \;:\; 0 \le \langle E,\omega\rangle \le 1
  \ \ \forall \omega \in \stateset \bigr\}
\end{equation}
for the largest effect set consistent with it. A theory satisfies the
\emph{no-restriction hypothesis} for effects if $\effset = \stateset^{\vee}$,
and for states if $\stateset$ is likewise the largest state set consistent
with $\effset$~\cite{Chiribella2010,Selby2023accessible}. Quantum theory
satisfies both. A theory that does not is sometimes called a fragment, and the
gap between what a theory admits and what its structure would permit is the
subject of Sec.~\ref{sec:conj:dual}.

\subsection{Symmetrized worlds}
\label{sec:prelim:sym}

A \emph{symmetry} of a theory $T$ is a reversible transformation of its states
and effects that preserves the pairing between them and maps $\opset$ to itself
under conjugation. Given a finite group $G$ acting by symmetries, the
\emph{symmetrized world} $sT$ is the subtheory whose states, effects and
operations are the $G$-invariant ones. Within quantum theory, symmetrization with respect to a symmetry implemented by
unitaries is called \emph{twirling} and with respect to a nonunitary one
\emph{swirling}~\cite{Ying2025foil}. For a general theory the two names track
the physicality of the symmetry rather than its unitarity, and the two distinctions do not coincide. Ying \emph{et al.} exhibit a theory in which every symmetry is nonphysical,
unitary ones included, and another in which every symmetry is physical, the
antiunitary conjugation included. Twirling is the
operation that implements a superselection rule~\cite{WWW1952}, and the
subject has a long history in the theory of quantum reference
frames~\cite{BRS2007}.

The example that motivates the subject is $G = \Z_{2}$ acting on quantum theory
by complex conjugation. The invariant subtheory is real-amplitude quantum
theory, and since conjugation is antiunitary this is a swirled world.

\subsection{Causal structures and gaps}
\label{sec:prelim:gap}

A causal structure specifies which systems are emitted by which sources and
which are held by which parties, together with the classical settings and
outcomes of each party. For a theory $T$ we write $\corr_{T}$ for the set of
distributions over the outcomes, conditioned on the settings, realisable in
that causal structure using states, operations and effects of $T$. We write $\CPT$ for classical probability theory, which is a subtheory of
every theory considered here, $\QT$ for quantum theory and $\RQT$ for
real-amplitude quantum theory.

Because $sT$ is a subtheory of $T$ one always has
$\corr_{sT} \subseteq \corr_{T}$, and the question is whether the inclusion can
be strict. When it is, the causal structure is said to admit a
\emph{symmetrized--nonsymmetrized causal compatibility gap} between $sT$ and
$T$~\cite{Ying2025foil}. Renou \emph{et al.} show that the bilocality scenario of Branciard, Gisin
and Pironio~\cite{Branciard2010,Tavakoli2022networks} admits one between
real-amplitude and complex quantum theory~\cite{Renou2021}, and that no gap arises in the ordinary Bell scenario,
where real-amplitude quantum theory already saturates the Tsirelson bound.

\paragraph*{The status of the falsification claim.}
Several authors argue that the falsification does not go through, by
replacing the tensor product with another composition rule or by weakening the
assumption that
independent sources are described by a product
state~\cite{BarriosHita2026,HoffreumonWoods2026}, and the assumption of
independence has been examined and weakened in its own
right~\cite{Yao2024causal,Weilenmann2025partial}. The debate is live, and the compatibility of those proposals with fermionic information theory has been
questioned in turn~\cite{FermionComment2026}. There is also an older point
that cuts across the dispute. Every operational theory admits a representation
over a real vector space, so the question can never be whether real numbers
suffice in the abstract~\cite{HardyWootters2012}. What the real bound
constrains is a real theory that keeps the other axioms of the textbook
formulation, in particular that probabilities are bilinear in the vectors,
which a generic real representation does not. We take no position on the
falsification debate, and nothing below depends on how it is settled.

Absence of a gap is established by exhibiting a simulation. That means a way
of replacing each source, operation and measurement of a $T$-protocol by one of
$sT$, respecting the causal structure and reproducing the statistics. The
simulations in the literature, and the one constructed in
Sec.~\ref{sec:enc}, all proceed by adjoining a reference-frame register to
each wire, in the manner familiar from the theory of quantum reference
frames~\cite{BRS2007}.

\subsection{Physical, weakly and strongly nonphysical symmetries}
\label{sec:prelim:trichotomy}

A symmetry of $T$ is \emph{physical} if each of its transformations is an
operation of $T$. If it is not, two cases are distinguished. It is
\emph{weakly nonphysical} if adjoining the transformations to $T$ produces no
logical inconsistency, in the sense that no closed circuit returns a number
outside $[0,1]$ where a probability is expected, and \emph{strongly
nonphysical} if some circuit does~\cite{Ying2025foil}. The framework in which
these are defined assumes tomographic locality, which is why
Sec.~\ref{sec:theory:tomo} verifies that property for the theory constructed
below.

Two results of Ref.~\cite{Ying2025foil} frame what follows. A physical symmetry
never produces a gap, in any causal structure, and the proof is by explicit
simulation with independent local reference frames. Ying \emph{et al.} conjecture that a strongly
nonphysical one always produces a gap somewhere for a nonclassical theory, and
the conjecture is open.
The weakly nonphysical case is settled by neither, and rests on a single
example, a prepare-measure fragment of quantum theory for which the simulation
is patched by hand.

That raises the question of which pairs the classification says anything about,
and the answer is not all of them. If a theory has a factor on which the
symmetry acts trivially, the classification can be made to report anything at
all, since the correlations may already be saturated by the inert factor while
the symmetry misbehaves elsewhere. The following lemma says that the family
studied here is free of that degeneracy, and it is what makes the question
below well posed rather than a remark about our particular construction.

\begin{lemma}[Uniformity]
\label{lem:uniformity}
Let $T$ have quantum systems composed by the tensor product with all density
matrices as states, and let $\alpha$ be a symmetry of $T$. By the
Wigner--Kadison theorem~\cite{Bargmann1964,Kadison1965} the action on each wire
is $\mathrm{Ad}_{U}$ or
$\mathrm{Ad}_{U} \circ \PT{}$, and the two kinds cannot coexist. Either every
wire carries a unitary action or every wire carries an antiunitary one.
\end{lemma}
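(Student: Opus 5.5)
The plan has three steps: establish the per-wire dichotomy, show that a symmetry respects the tensor product, and then show that a mixed pattern of actions destroys positivity. Throughout, $\PT{}$ denotes transposition in a fixed basis.

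First, the per-wire dichotomy. A symmetry $\alpha$ acts reversibly on the states of each wire. That state set is the full set of density matrices, so the action restricted to one wire is an affine bijection of the state space that preserves the pairing with effects. An affine bijection of density matrices preserves the extreme points, the pure states. I will check that it also preserves their transition probabilities $\Tr(\rho\sigma)$ for pure $\rho,\sigma$; this follows because it maps faces to faces and so is an automorphism of the state space. Wigner's theorem, or Kadison's theorem on order automorphisms of $\Herm$, then gives the form $\mathrm{Ad}_U$ or $\mathrm{Ad}_U\circ\PT{}$ on each wire. The pairing with effects is not needed here beyond fixing the dual action.

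Second, compatibility with composition. I apply the same theorem to a composite of wires $A_1\cdots A_n$. Its state set is again all density matrices on $\bigotimes_i \mathcal H_{A_i}$, so the joint action is $\mathrm{Ad}_V$ or $\mathrm{Ad}_V\circ\PT{}$ with $\PT{}$ the full transpose. I then use the standing assumption that a symmetry acts wire by wire and respects the tensor product, i.e. that $\alpha_{AB}=\alpha_A\otimes\alpha_B$ on product operators. This is part of what a symmetry of a GPT with tomographic locality means. Product operators span $\Herm(AB)$, so the joint action is determined by the local ones. If $A$ carries $\mathrm{Ad}_U$ and $B$ carries $\mathrm{Ad}_W\circ\PT{}$, the joint action is $\mathrm{Ad}_{U\otimes W}\circ(\id\otimes\PT{})$, a partial transpose up to a local unitary.

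Third, and this is the main step, I show that $\id\otimes\PT{}$ composed with a unitary is of neither global form. Applied to the maximally entangled state, $\id\otimes\PT{}$ produces $\SWAP/d$, which has a negative eigenvalue when $d\ge 2$. So it does not map density matrices to density matrices. That contradicts the second step, where the joint action, being of Wigner--Kadison form, must preserve the full state set. Every pair of wires therefore carries the same kind of action, and connectedness over all pairs gives uniformity across the theory.

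The obstacle I expect is justifying the factorisation $\alpha_{AB}=\alpha_A\otimes\alpha_B$. If the definition of symmetry in Sec.~\ref{sec:prelim:sym} allows non-local actions on composites, I would instead run the argument on the composite directly. The joint map is then $\mathrm{Ad}_V$ or $\mathrm{Ad}_V\circ\PT{}$, and each kind restricts to a local action of the same kind by tracing out; local unitaries under $\mathrm{Ad}_V$ factorise up to phase. That again forces a common kind. One-dimensional wires are excluded, since there unitary and antiunitary actions coincide.
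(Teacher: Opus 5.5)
Your proof is correct and is essentially the paper's argument: with the action taken to factorise wire by wire, a mixed pattern gives $\mathrm{Ad}_{U_{1}\otimes U_{2}}\circ\PT{2}$, which sends $\proj{\phi^{+}}$ to a unitary conjugate of $\SWAP/d$, and that operator is not a state. Your fallback paragraph is unnecessary and loosely argued, since an entangling $\mathrm{Ad}_{V}$ does not induce local actions by tracing out unless a compatibility-with-marginals condition is added. It is cleaner to take $\alpha_{AB}=\alpha_{A}\otimes\alpha_{B}$ as part of what a symmetry is, as the paper does implicitly.
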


\begin{proof}
Suppose the first wire carries $\mathrm{Ad}_{U_{1}}$ and the second
$\mathrm{Ad}_{U_{2}} \circ \PT{}$. The collective action on the pair is
$\mathrm{Ad}_{U_{1} \otimes U_{2}} \circ \PT{2}$, which sends
$\proj{\phi^{+}}$ to a unitary conjugate of $\SWAP/d$, with least eigenvalue
$-1/d$. That is not a state, so $\alpha$ does not preserve the state space and
is not a symmetry.
\end{proof}

So a symmetry of a theory with unrestricted states and tensor-product
composition acts the same way on every wire or not at all, and no inert factor
is available to absorb the classification.

\section{A quantum theory restricted by partial transposition}
\label{sec:theory}

\subsection{States, effects, operations}
\label{sec:theory:def}

Throughout, a \emph{wire} carries a finite-dimensional complex Hilbert space,
and a composite is formed by the ordinary tensor product. We emphasise this at the outset. The theory constructed below differs from
quantum theory only in
which effects and which operations it admits, and not in how systems compose.

Partial transposition enters below as the operation defining the restriction,
rather than as a separability test in the sense of Peres and the
Horodeckis~\cite{Peres1996,HHH1996,HHHH2009}, although the two coincide often
enough that the distinction is worth keeping in view.

Fix once and for all an orthonormal basis on every wire. Complex conjugation
$C$ in that basis acts on Hermitian operators as transposition, and for a
subset $S$ of the wires of a composite we write $\PT{S}$ for the corresponding
partial transposition and $\ptr{X}{S} := \PT{S}(X)$. Two elementary properties are used constantly and we record them here. $\PT{S}$ is an involution, and
\begin{equation}
  \label{eq:ptcompose}
  \PT{S} \circ \PT{V} \;=\; \PT{S \triangle V} ,
\end{equation}
where $\triangle$ is the symmetric difference. So $S \mapsto \PT{S}$ is an
isomorphism from the power set of the wires under symmetric difference onto a
group of linear maps, and the partial transpositions of a composite of $n$
wires form the elementary abelian group $\Z_2^{\,n}$, commutative and with
every element an involution.

Two consequences are used repeatedly. Because $\PT{S^{c}}$ differs from
$\PT{S}$ by the global transpose, which preserves spectra, any spectral
condition imposed for $S$ is equivalent to the one imposed for $S^{c}$, so
there are $2^{\,n-1}$ inequivalent conditions rather than $2^{\,n}$. And of the
$2^{n}$ elements of the group exactly one, the identity, is an operation of
quantum theory: the global transpose $\PT{W}$ is positive but not completely
positive, and every other element fails even positivity. Definition~\ref{def:T}
below can be read off the group on that account. Its effect set is the largest
subset of the unit order interval the group stabilises, and its operation set
the largest class of channels stable under the conjugation action
$\Lambda \mapsto \PT{S}\Lambda\PT{S}$.

Adjoining the wire relabellings enlarges the group to the semidirect product
$\Z_{2}^{\,n} \rtimes S_{n}$, since $\pi \PT{S} \pi^{-1} = \PT{\pi(S)}$, but
relabelling belongs to the wiring rather than to the operations and plays no
further part.

\begin{definition}[The theory $\Th$]
\label{def:T}
$\Th$ is the generalized probabilistic theory whose systems are
finite-dimensional quantum systems composed by the tensor product, and whose
states, effects and operations are
\begin{align}
  \stateset
    &= \bigl\{\, \rho \;:\; \rho \ge 0, \ \Tr\rho = 1 \,\bigr\},
      \label{eq:defstates}\\[2pt]
  \effset
    &= \bigl\{\, E \in \Herm \;:\; 0 \le \ptr{E}{S} \le \Id \ \ \forall S \,\bigr\},
      \label{eq:defeffects}\\[2pt]
  \opset
    &= \bigl\{\, \Lambda \;:\; \PT{V} \circ \Lambda \circ \PT{V}
       \ \text{is CP} \ \ \forall V \,\bigr\} ,
      \label{eq:defops}
\end{align}
where $\Lambda$ ranges over the trace-preserving maps, $S$ over the subsets of
the wires an effect acts on, and $V$ over the subsets of the wire labels of an
operation.
\end{definition}

The three lines are of quite different character and each needs a comment.

\paragraph*{States.} Equation~\eqref{eq:defstates} imposes no restriction at all. Every density
matrix on every system and every composite is a state of
$\Th$. This is deliberate. A restriction on states could not produce the
phenomenon we are after, because the state cone would then be recovered as the
bidual of the effect set, and the two conditions examined in
Sec.~\ref{sec:conj} would not come apart. We return to this in
Proposition~\ref{prop:dualcone}.

\paragraph*{Effects.} Equation~\eqref{eq:defeffects} is the restriction that
does the work. Taking $S = \emptyset$ recovers the ordinary requirement
$0 \le E \le \Id$, and the remaining subsets impose the same requirement on every
partial transpose of $E$. Equivalently, and more transparently, $E$ is an
effect of $\Th$ precisely when $\ptr{E}{S}$ is a legitimate quantum effect for
every $S$. Section~\ref{sec:theory:whynot} explains why the weaker and more obvious condition, positivity of every partial
transpose with the upper bound imposed only on $E$ itself, does not define a
generalized
probabilistic theory at all.

\paragraph*{Operations.} Equation~\eqref{eq:defops} is the class of completely
PPT-preserving maps of Rains~\cite{Rains1999,Rains2001,Eggeling2001}, in the
multipartite form of Ishizaka and Plenio~\cite{IshizakaPlenio2005}.
It has not, to our knowledge, been taken before as the operation set of a
generalized probabilistic theory. Appendix~\ref{app:cpptp} fixes the
conventions and records what the class was introduced for.

The convention governing $\PT{V}$ in~\eqref{eq:defops} matters and is not the
only one available. An operation $\Lambda: \mathcal{L}(\mathcal{H}_{\rm in})
\to \mathcal{L}(\mathcal{H}_{\rm out})$ has its input and output wires labelled
by a common index set, and $\PT{V}$ transposes the input \emph{and} the output
wires carrying a label in $V$, simultaneously. That the labels can be matched
across the operation is a substantive assumption rather than a bookkeeping convenience. Section~\ref{sec:enc:frames} shows that
allowing an output wire to
carry a label independent of its input would force the Choi matrix to be PPT
across the input:output cut and would thereby exclude every unitary.

Three further points of convention. Permutations of wires belong to the wiring
of a circuit rather than to the operations performed inside it, as is usual in
a process theory, and the encoding of Sec.~\ref{sec:enc} carries them without
comment because a wire's frame register travels with it. Next, Ishizaka and
Plenio state the multipartite
condition with the transposition ranging over \emph{single parties}. For three parties this coincides with ranging over all bipartitions, since $V$ and
$V^{c}$ give the same condition, but from four parties onwards the
single-party family is strictly weaker, because positivity across a $2$:$2$
cut is not implied by positivity across the $1$:$3$ cuts.
Definition~\ref{def:T} takes $V$ over all subsets, which is what the sector
decomposition of Sec.~\ref{sec:enc:ops} requires. We also follow the standard
terminology in distinguishing \emph{completely} PPT-preserving maps, for which
$\PT{V}\Lambda\PT{V}$ is required to be completely positive, from the weaker
PPT-preserving maps, for which it is required only to be positive. It is the former that we need.

A word on the basis. Both the effect restriction and the symmetry are defined
relative to the basis fixed above, and it is the same basis in each case, since
conjugation in a basis acts on Hermitian operators as transposition in that
basis. A single choice therefore fixes $\Th$ and $C$ together, and choices
related by a unitary give unitarily equivalent pairs. In this sense $\Th$ is
not a theory that happens to carry a conjugation symmetry. The conjugation determines it, a statement made precise in
Sec.~\ref{sec:sectorial:emax}.

Finally we fix the object of comparison.

\begin{definition}[The symmetrized subtheory]
\label{def:sT}
$\sTh$ is the subtheory of $\Th$ invariant under $C$: its states are the real
density matrices, its effects the real elements of $\effset$, and its
operations the elements of $\opset$ commuting with $C$.
\end{definition}

Since every state, effect and operation of $\sTh$ is real, $\sTh$ is a
subtheory of real-amplitude quantum theory. Any valid upper bound on the
correlations of $\RQT$ in a given causal structure therefore bounds those of $\sTh$ as well. The main theorem does not need this, being
constructive, but it does mean that a violation found on the $\Th$ side could
never be dismissed as an artefact of the restriction.

\subsection{Why ``PPT effects'' is not the right effect set}
\label{sec:theory:whynot}

The effect set of Definition~\ref{def:T} may look needlessly elaborate, and the obvious simplification of it fails for an instructive reason. Two features of
$\Th$ constrain what that effect set can be. It must be a proper restriction,
since the states are unrestricted and we want conjugation to be nonphysical. And it must be closed under partial transposition, since that is what makes the
push-through argument of Sec.~\ref{sec:conj:weak} available. The natural
candidate meeting both demands is the set of \emph{PPT effects},
\begin{equation}
  \label{eq:pptnaive}
  \effset_{\mathrm{PPT}}
  \; := \;
  \bigl\{\, E \in \Herm \;:\; 0 \le E \le \Id , \ \
  \ptr{E}{S} \ge 0 \ \ \forall S \,\bigr\},
\end{equation}
the effect-side analogue of the PPT cone of states. It is convex and closed, it
contains every product effect, and every pairing $\Tr[E\rho]$ with
$\rho \in \stateset$ lies in $[0,1]$. It is nevertheless not the effect set of
any generalized probabilistic theory.

The obstruction is complementation. An effect that belongs to no measurement is not an effect. For a two-outcome
measurement to register $E$ on one outcome
there must be a complementary effect $\Id - E$ registering the other. The
following lemma shows that $\effset_{\mathrm{PPT}}$ fails this already for the
most familiar two-outcome measurement on a pair of identical systems.

\begin{lemma}
\label{lem:psym}
Let $\SWAP$ denote the swap operator on $\C^{d}\otimes\C^{d}$ and let
$P_{\rm sym} = \tfrac12(\Id + \SWAP)$ and
$P_{\rm asym} = \tfrac12(\Id - \SWAP)$ be the projectors onto the symmetric and
antisymmetric subspaces. Then, for every $d \ge 2$,
\begin{equation}
  P_{\rm sym} \in \effset_{\mathrm{PPT}},
  \qquad
  P_{\rm asym} \notin \effset_{\mathrm{PPT}} .
\end{equation}
The set $\effset_{\mathrm{PPT}}$ is therefore not closed under
$E \mapsto \Id - E$.
\end{lemma}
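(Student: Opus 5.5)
The plan is a direct spectral computation based on a single identity. For two wires the partial transposes come only from $S=\emptyset,\{1\},\{2\},\{1,2\}$. By the remark after \eqref{eq:ptcompose}, $\PT{S^c}$ differs from $\PT{S}$ by the global transpose, which preserves spectra. So the four positivity conditions in \eqref{eq:pptnaive} reduce to two: positivity of $E$ itself and positivity of $\ptr{E}{2}$. The key identity is $\ptr{\SWAP}{2} = d\,\proj{\phi^{+}}$, where $\ket{\phi^{+}} = d^{-1/2}\sum_i \ket{ii}$ is built in the fixed basis. This follows from writing $\SWAP = \sum_{i,j} \ketbra{i}{j}\otimes\ketbra{j}{i}$ and transposing the second factor, which gives $\sum_{i,j}\ketbra{i}{j}\otimes\ketbra{i}{j} = d\,\proj{\phi^{+}}$. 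The identity is already used implicitly in the proof of Lemma~\ref{lem:uniformity}, and $\Id$ is invariant under every partial transpose.

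For $P_{\rm sym}$ the two conditions are handled separately. It is a projector, so $0\le P_{\rm sym}\le\Id$ holds, and the $S=\emptyset$ condition is immediate. For $S=\{2\}$ the identity gives
\begin{equation*}
  \ptr{P_{\rm sym}}{2} = \tfrac12\bigl(\Id + d\,\proj{\phi^{+}}\bigr),
\end{equation*}
whose eigenvalues are $\tfrac12$, with multiplicity $d^2-1$, and $\tfrac{1+d}{2}$. Both are nonnegative, so $P_{\rm sym}\in\effset_{\mathrm{PPT}}$. The eigenvalue $\tfrac{1+d}{2}$ exceeds $1$, but this is harmless because \eqref{eq:pptnaive} imposes the upper bound only on $E$ itself. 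This is exactly the point the lemma is meant to exhibit.

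For $P_{\rm asym}$ the same computation gives
\begin{equation*}
  \ptr{P_{\rm asym}}{2} = \tfrac12\bigl(\Id - d\,\proj{\phi^{+}}\bigr),
\end{equation*}
whose eigenvalue on $\ket{\phi^{+}}$ is $\tfrac{1-d}{2}$. This is strictly negative for every $d\ge2$, so $P_{\rm asym}\notin\effset_{\mathrm{PPT}}$. Since $P_{\rm asym} = \Id - P_{\rm sym}$, the set is not closed under $E\mapsto \Id-E$.

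No step is a genuine obstacle. The only point needing care is the index bookkeeping in the identity $\ptr{\SWAP}{2} = d\,\proj{\phi^{+}}$. It must be checked that the transposition is taken in the same fixed basis that defines $\ket{\phi^{+}}$, which the conventions of Sec.~\ref{sec:theory:def} guarantee.
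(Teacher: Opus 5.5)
Your proposal is correct and is essentially the paper's proof. The paper likewise reduces the four subsets to $S=\emptyset$ and a single wire using the global transpose, computes $\ptr{\SWAP}{A} = d\,\proj{\phi^{+}}$, and reads off the spectra $\tfrac{1\pm d}{2}$ and $\tfrac12$ (multiplicity $d^{2}-1$); it transposes the first wire rather than the second, which makes no difference since the two partial transposes differ by the global transpose.
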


\begin{proof}
Write $\SWAP = \sum_{i,j} \ketbra{ij}{ji}$. Partial transposition on the first
factor sends $\ketbra{i}{j} \otimes \ketbra{j}{i}$ to
$\ketbra{j}{i} \otimes \ketbra{j}{i}$, whence
\begin{equation}
  \label{eq:swappt}
  \ptr{\SWAP}{A} \;=\; \sum_{i,j} \ketbra{jj}{ii}
  \;=\; d \, \proj{\phi^{+}} ,
\end{equation}
where $\ket{\phi^{+}} = d^{-1/2}\sum_{i}\ket{ii}$. Since $\proj{\phi^{+}}$ is a
rank-one projector,
\begin{equation}
  \label{eq:psympt}
  \ptr{P_{\rm sym/asym}}{A}
  \;=\; \tfrac12\bigl(\Id \pm d \proj{\phi^{+}}\bigr) ,
\end{equation}
where the upper sign goes with $P_{\rm sym}$, have the spectra
\begin{equation}
  \Bigl\{\, \tfrac{1 \pm d}{2} \,\Bigr\}
  \;\cup\;
  \Bigl\{\, \tfrac12 \,\Bigr\}^{\times (d^{2}-1)} .
\end{equation}
For $d = 2$ these read
$\{3/2,\, 1/2,\, 1/2,\, 1/2\}$ for $P_{\rm sym}$
and $\{-1/2,\, 1/2,\, 1/2,\, 1/2\}$ for $P_{\rm asym}$.

Only $S = \emptyset$ and $S = \{A\}$ need examination, the other two reducing
to these. For $S = \emptyset$ one has
$P_{\rm sym}, P_{\rm asym} \ge 0$, both being projectors. For $S = \{A,B\}$ the map is
the global transpose, which preserves the spectrum. And
$\ptr{X}{B} = \bigl(\ptr{X}{A}\bigr)^{\mathsf T}$ for Hermitian $X$, so
$S = \{B\}$ gives the same spectrum as $S = \{A\}$. Hence
$\ptr{P_{\rm sym}}{S} \ge 0$ for every $S$, so $P_{\rm sym} \in \effset_{\mathrm{PPT}}$,
while $\ptr{P_{\rm asym}}{A}$ carries the eigenvalue $(1-d)/2 < 0$, so
$P_{\rm asym} \notin \effset_{\mathrm{PPT}}$.
\end{proof}

The pair $\{P_{\rm sym}, P_{\rm asym}\}$ discriminates between the symmetric and
the antisymmetric subspace, which is what a swap test performs. There is nothing exotic about it, and the violation
is not marginal, the offending eigenvalue being $-1/2$ for a pair of qubits.
What Lemma~\ref{lem:psym} exhibits is therefore a structural defect of the definition~\eqref{eq:pptnaive} rather than
an accident of a badly chosen example.

The source of the defect is visible in the two displayed spectra. The partial
transpose $\PT{S}$ is not a positive map, which is why demanding
$\ptr{E}{S} \ge 0$ is a nontrivial restriction rather than a consequence of
$E \ge 0$. By the same token $E \le \Id$ does not imply $\ptr{E}{S} \le \Id$, and
indeed $\ptr{P_{\rm sym}}{A}$ has an eigenvalue $3/2$. Definition~\eqref{eq:pptnaive}
imposes the lower bound on every partial transpose of $E$ but the upper bound
on none of them except $S = \emptyset$. Complementation is precisely the
operation that exchanges the two bounds,
\begin{equation}
  \label{eq:complbounds}
  \ptr{(\Id - E)}{S} \;=\; \Id - \ptr{E}{S}
  \quad \forall S ,
\end{equation}
so no set specified by such an asymmetric list of constraints can be closed
under it. The defect is not peculiar to the positive partial transpose. Any effect set
defined by a one-sided cone condition has it, and the separable cone gives the
same witness: $P_{\rm sym}$ is separable, being a multiple of the average of
$\proj{\psi}^{\otimes 2}$ over pure states, while $P_{\rm asym}$ is not even
positive under partial transposition.

Symmetrising the list is the only repair, and it returns the effect set of
Definition~\ref{def:T}, which may be restated in one line:
\begin{equation}
  \label{eq:effrestated}
  E \in \effset
  \iff
  \ptr{E}{S} \ \text{is a quantum effect } \forall S .
\end{equation}

Two closure properties are then immediate, and both are used later. First,
$\effset$ is closed under complementation, by~\eqref{eq:complbounds} applied
uniformly in $S$. This is what Proposition~\ref{prop:validgpt} of Sec.~\ref{sec:theory:valid} needs. Second,
$\effset$ is closed under partial transposition, since
$\bigl(\ptr{E}{S}\bigr)^{\mathsf T_{V}} = \ptr{E}{S \triangle V}$ and the
defining condition~\eqref{eq:effrestated} quantifies over all subsets. This is what the push-through argument of Sec.~\ref{sec:conj:weak} needs.

The repair costs nothing in expressive power where it matters. For a product
effect $E_{A} \otimes E_{B}$ one has
$\ptr{(E_{A} \otimes E_{B})}{A} = E_{A}^{\mathsf T} \otimes E_{B}$, and
transposition preserves the spectrum, so
$0 \le E_{A}^{\mathsf T} \otimes E_{B} \le \Id$. Every product effect survives
in $\effset$, and with it the tomographic locality established in
Sec.~\ref{sec:theory:tomo}. What is lost is the swap test. Neither $P_{\rm sym}$ nor $P_{\rm asym}$ belongs to $\effset$, the first because $\ptr{P_{\rm sym}}{A} \not\le \Id$ and
the second because $\ptr{P_{\rm asym}}{A} \not\ge 0$.

We record finally that the upper bound in~\eqref{eq:effrestated} is not
imposed here merely to rescue complementation. It is demanded a second time,
and independently, by the encoding of Sec.~\ref{sec:enc}: by
Lemma~\ref{lem:positivity} the encoded effect $\xieff(E)$ is bounded above by
the unit effect of $\sTh$ if and only if $\ptr{E}{S} \le \Id$ for every $S$. We
return to this coincidence in Sec.~\ref{sec:enc:pos}.

\subsection{\texorpdfstring{$\Th$}{T} is a valid GPT}
\label{sec:theory:valid}

Checking that Definition~\ref{def:T} specifies a generalized probabilistic
theory is not a formality here. Lemma~\ref{lem:psym} has already shown that a
natural candidate for the effect set fails, so the question has content, and
the inheritance arguments available for foil theories built by twirling do not
transfer to a theory specified by cone restrictions. Appendix~\ref{app:validgpt}
explains why and supplies the proofs.

\begin{proposition}
\label{prop:validgpt}
$\Th$ is a valid generalized probabilistic theory. Specifically:
\begin{enumerate}
  \item[(a)] $\stateset$ and $\effset$ are convex and closed, and
        $\Tr[E\rho] \in [0,1]$ for all $\rho \in \stateset$, $E \in \effset$;
  \item[(b)] $0, \Id \in \effset$ and $\effset$ is closed under
        $E \mapsto \Id - E$;
  \item[(c)] states and effects are tomographic for one another;
  \item[(d)] $\stateset$ and $\effset$ are closed under steering;
  \item[(e)] $\opset$ contains the identity, is convex, and is closed under
        sequential and parallel composition;
  \item[(f)] every $\Lambda \in \opset$ maps $\stateset$ into $\stateset$,
        and $\Lambda^{*}$ maps $\effset$ into $\effset$;
  \item[(g)] $\effset$ is closed under tensor products, so that separate
        parties may measure independently.
\end{enumerate}
\end{proposition}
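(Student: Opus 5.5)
I will verify the seven items in turn. Each is checked either directly from the defining condition~\eqref{eq:effrestated}, that $\ptr{E}{S}$ is a quantum effect for every $S$, or from the group structure~\eqref{eq:ptcompose} of the partial transpositions.

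\emph{Items (a), (b), (g).} Convexity and closedness of $\effset$ hold because it is an intersection, over $S$, of preimages of the closed convex set $[0,\Id]$ under the linear maps $\PT{S}$. The set $\stateset$ is standard. The bound $\Tr[E\rho]\in[0,1]$ is the case $S=\emptyset$. Item (b) follows from $\PT{S}(\Id)=\Id$, $\PT{S}(0)=0$ and~\eqref{eq:complbounds}.

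For (g), let $E\in\effset_{A}$ and $F\in\effset_{B}$, and split $S=S_A\cup S_B$. Then
\begin{equation}
  \ptr{(E\otimes F)}{S}=\ptr{E}{S_A}\otimes\ptr{F}{S_B}.
\end{equation}
This is a tensor product of two operators in $[0,\Id]$, so it lies in $[0,\Id]$.

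\emph{Item (c).} Tomography needs product effects to span the Hermitian operators. Local effects $E_A$ on a single system form a full-dimensional subset of $[0,\Id]$: on one wire the only conditions are $0\le E\le\Id$ and $0\le E^{\mathsf T}\le\Id$, and these coincide. Item (g) puts their tensor products in $\effset$, and these span. States are unrestricted, so they span trivially.

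\emph{Item (d).} For steering, I will show that for any effect $E_B\in\effset_B$ the conditional map
\begin{equation}
  \rho_{AB}\mapsto\Tr_B[(\Id\otimes E_B)\rho_{AB}]
\end{equation}
lands in the positive cone on $A$. This holds because $E_B\ge 0$, so states stay states after normalisation. On the effect side, I will show that $\Tr_B[(E_{AB})(\Id\otimes\sigma_B)]$ lies in $\effset_A$ for every state $\sigma_B$. For a subset $S\subseteq A$,
\begin{equation}
  \PT{S}\Tr_B[E(\Id\otimes\sigma)]=\Tr_B[\ptr{E}{S}(\Id\otimes\sigma)],
\end{equation}
and this is a compression of an operator in $[0,\Id]$ by a state, hence itself in $[0,\Id]$.

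\emph{Items (e), (f).} This is the main obstacle. For (e), the identity and convexity are immediate. For sequential composition,
\begin{equation}
  \PT{V}(\Lambda_2\Lambda_1)\PT{V}=(\PT{V}\Lambda_2\PT{V})(\PT{V}\Lambda_1\PT{V}),
\end{equation}
which is a composition of CP maps, provided the labels match across the composition; I will take that matching convention from the paper. For parallel composition, $\PT{V}$ factorises as $\PT{V_1}\otimes\PT{V_2}$, and a tensor product of CP maps is CP.

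For (f), the case $V=\emptyset$ shows that $\Lambda$ is CPTP, so it maps states to states. For effects, take $E\in\effset$ and $S\subseteq\text{out}$, and let $V$ be the label set corresponding to $S$ on the inputs. Then
\begin{equation}
  \PT{S}\Lambda^{*}(E)=(\PT{V}\Lambda\PT{V})^{*}(\ptr{E}{S}).
\end{equation}
Here $\PT{V}\Lambda\PT{V}$ is CPTP, so its adjoint is unital and positive, and it sends $\ptr{E}{S}\in[0,\Id]$ into $[0,\Id]$.

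The delicate points sit here. One is the case where an operation acts on only some wires of a larger composite. I handle it by noting that $\Lambda\otimes\id$ is again in $\opset$ by the parallel-composition argument, with the identity trivially in $\opset$. The other is adjoint-commuting with $\PT{V}$, which is self-adjoint under the Hilbert–Schmidt pairing. I expect the label-matching bookkeeping between input and output wires to be what Appendix~\ref{app:validgpt} spells out.
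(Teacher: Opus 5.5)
Your proposal is correct and follows the paper's proof essentially item by item: the preimage description of $\effset$ for (a), the complement identity for (b), spanning by product effects for (c), commuting $\PT{S}$ past the partial trace for (d), the factorisations $\PT{V}\Lambda_{2}\Lambda_{1}\PT{V}=(\PT{V}\Lambda_{2}\PT{V})(\PT{V}\Lambda_{1}\PT{V})$ and $\PT{V}=\PT{V_{1}}\otimes\PT{V_{2}}$ for (e), and, for (f), the positive unital adjoint of $\PT{V}\Lambda\PT{V}$ applied to $\ptr{E}{S}$. Your explicit remark that $\Lambda\otimes\id\in\opset$ by parallel composition, which covers operations acting on only part of a composite, is a point the paper leaves implicit.
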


Items (a)--(e) and (g) are routine once the definitions are unwound and are
proved in Appendix~\ref{app:validgpt}. Item (f) is not routine, and it is the step at
which the naive theory of Sec.~\ref{sec:theory:whynot} would have collapsed
even if its effect set had been repaired. With unrestricted operations a
\textsc{cnot} pulls the product effect $\proj{+0}$ back to $\proj{\phi^{+}}$,
whose partial transpose has an eigenvalue $-1/2$. We therefore give the second
half of (f) here.

\begin{proof}[Proof of the second half of {\rm (f)}]
Fix $\Lambda \in \opset$ and $E \in \effset$, and fix a subset $S$ of the
labels. Write
\begin{equation}
  \Lambda_{S} \;:=\; \PT{S}^{\rm out} \circ \Lambda \circ \PT{S}^{\rm in} ,
\end{equation}
which is completely positive by Definition~\ref{def:T} and trace preserving
because partial transposition preserves the trace. Partial transposition is
self-adjoint for the Hilbert--Schmidt pairing, so
\begin{equation}
  \label{eq:adjointsector}
  (\Lambda_{S})^{*}
  \;=\; \PT{S}^{\rm in} \circ \Lambda^{*} \circ \PT{S}^{\rm out} ,
\end{equation}
and consequently, using $\bigl(\PT{S}\bigr)^{2} = \id$,
\begin{align}
  (\Lambda_{S})^{*}\bigl( \ptr{E}{S} \bigr)
  &= \PT{S}^{\rm in}\Bigl( \Lambda^{*}\bigl( \PT{S}^{\rm out}(\ptr{E}{S}) \bigr) \Bigr)
     \nonumber\\
  &= \PT{S}^{\rm in}\bigl( \Lambda^{*}(E) \bigr)
   \;=\; \bigl( \Lambda^{*}(E) \bigr)^{\mathsf{T}_{S}} .
  \label{eq:sectorchain}
\end{align}
Now $(\Lambda_{S})^{*}$ is completely positive and unital, and
$0 \le \ptr{E}{S} \le \Id$ because $E \in \effset$. Positive unital maps
preserve the order interval, so
\begin{equation}
  0 \;\le\; \bigl( \Lambda^{*}(E) \bigr)^{\mathsf{T}_{S}} \;\le\; \Id .
\end{equation}
As $S$ was arbitrary, $\Lambda^{*}(E) \in \effset$.
\end{proof}

The argument uses the label-matching convention of Sec.~\ref{sec:theory:def} in
an essential way.
$\PT{S}$ acts on the input wires on one side of~\eqref{eq:adjointsector} and
on the output wires on the other, and the two are identified by their common
labels. It requires complete positivity of $\Lambda_{S}$ for \emph{every} $S$,
not merely for $S = \emptyset$. This is what forces the operation class to be
the completely PPT-preserving one rather than all channels. And it goes through verbatim for instruments. Take a collection $\{\Lambda_{k}\}$ of completely positive trace-non-increasing
maps, each satisfying~\eqref{eq:defops} and summing to an element of $\opset$.
Each $(\Lambda_{k,S})^{*}$ is then completely positive and subunital, so
that
\begin{equation}
  0 \;\le\; \bigl( \Lambda_{k}^{*}(E) \bigr)^{\mathsf{T}_{S}}
    \;\le\; (\Lambda_{k,S})^{*}(\Id) \;\le\; \Id .
\end{equation}
Instruments are needed at intermediate nodes of a causal structure, and we take
$\opset$ to include them throughout.

The proof also invites a question about the definition. Why not simply take
the operations of $\Th$ to be those channels whose adjoint preserves
$\effset$? What we have just shown is that the completely PPT-preserving class
is contained in that one. The containment is in fact strict, as
Sec.~\ref{sec:main:scope} shows with an explicit witness, but nothing in the
present section depends on that. We keep Definition~\ref{def:T} for three reasons. The class is standard, and
membership in it is decidable by a semidefinite program. Decisively for what
follows, the sector decomposition of Sec.~\ref{sec:enc:ops} produces exactly
the conditions that $\PT{V}\Lambda\PT{V}$ be completely positive, one per
sector, and nothing weaker.

One consequence of the operation restriction should be recorded before we go
on, since it is the kind of thing that can be mistaken for a defect. The state
$\proj{\phi^{+}}$ belongs to $\stateset$, which is unrestricted, but no operation of $\Th$ prepares it. The channel taking the trivial system to
$\proj{\phi^{+}}$ has Choi matrix $\proj{\phi^{+}}$, whose partial transpose
$\SWAP/d$ is not positive. The state space of $\Th$ is therefore strictly
larger than what its own dynamics generates.

There is a precedent for this. The prepare-measure quantum theory of
Ref.~\cite{Ying2025foil}, whose operations are convex mixtures of the identity,
swaps and measure-and-reprepare channels, likewise has every quantum state
available while preparing almost none of them, and is treated there as a
legitimate GPT. In the setting of causal structures the point is in any case immaterial. The states emitted by sources are
primitive data of the scenario,
not outputs of operations performed within it.

\subsection{\texorpdfstring{$\Th$}{T} is tomographically local and nonclassical}
\label{sec:theory:tomo}

Almost everything about $\effset$ follows from one observation. Partial
transposition acts on a product operator factorwise,
\begin{equation}
  \label{eq:prodpt}
  \ptr{(E_{A} \otimes E_{B})}{A} \;=\; E_{A}^{\mathsf T} \otimes E_{B} ,
\end{equation}
and transposition leaves a spectrum unchanged. If $0 \le E_{A} \le \Id$ and
$0 \le E_{B} \le \Id$, then every partial transpose of $E_{A} \otimes E_{B}$ is
again a product of operators with those same spectra, and so lies in
$[0,\Id]$. Every product of quantum effects is therefore an effect of $\Th$,
and the same argument applies to any number of factors.

An elementary system therefore carries no restriction at all. The only
nontrivial subset of a single wire is the wire itself, and the resulting
condition $0 \le E^{\mathsf T} \le \Id$ duplicates $0 \le E \le \Id$. And since the products of quantum effects span
$\Herm(\mathcal{H}_{A} \otimes \mathcal{H}_{B})$, distinct states of a
composite are separated by local measurements. Hence $\Th$ is tomographically
local. This is what places $\Th$ within the scope of the trichotomy of
Sec.~\ref{sec:prelim:trichotomy}, which is formulated for tomographically
local theories~\cite{Ying2025foil}.

The restriction thus bites only on composites, and it bites hard there. The
Bell projector $\proj{\phi^{+}}$ is excluded, since
$\ptr{\proj{\phi^{+}}}{A} = \SWAP/d$ has negative eigenvalues, so the
Bell-state measurement is not available in $\Th$, and by Lemma~\ref{lem:psym}
neither is the swap test. One point of bookkeeping follows. A $d^{2}$-dimensional
elementary system of $\Th$ has all quantum effects, while a $d \otimes d$
composite does not, so a system of $\Th$ is specified by a list of wires and
not merely by a dimension. Composition concatenates the lists, and since the
subsets $S$ in~\eqref{eq:defeffects} range over wires, the effect set of a
multipartite composite does not depend on how the composition is bracketed.

For nonclassicality it is enough to run the standard Bell argument, which the
first paragraph has already licensed. Take $\proj{\phi^{+}}$ from the source
and the settings
\begin{equation}
  A_{0} = Z, \quad A_{1} = X, \quad
  B_{0} = \tfrac{Z+X}{\sqrt2}, \quad
  B_{1} = \tfrac{Z-X}{\sqrt2} ,
\end{equation}
whose associated effects $(\Id \pm A_{x})/2$ and $(\Id \pm B_{z})/2$ are
quantum effects, so that all sixteen products
$\tfrac14(\Id \pm A_{x}) \otimes (\Id \pm B_{z})$ belong to $\effset$. The
resulting correlator is
\begin{equation}
  \langle A_{0}B_{0}\rangle + \langle A_{0}B_{1}\rangle
  + \langle A_{1}B_{0}\rangle - \langle A_{1}B_{1}\rangle
  \;=\; 2\sqrt2 ,
\end{equation}
which exceeds the value $2$ available to any classical model of the Bell
causal structure. $\Th$ is therefore not a subtheory of classical probability
theory, which is the nonclassicality required by Corollary~1 of
Ref.~\cite{Ying2025foil}. We note that the argument is self-contained and does
not rely on the characterisation of classicality that
Ref.~\cite{Ying2025foil} takes from a work listed there as forthcoming.

The same computation makes a second point, about the Bell scenario rather than
about the theory. By
Proposition~2 of Appendix~J of Ref.~\cite{Ying2025foil} a causal structure can exhibit an
$\sTh$--$\Th$ gap only if it exhibits a gap between $\Th$ and classical
probability theory, and the Bell scenario does. The same has to be checked
separately for the bilocality scenario, which is done in
Sec.~\ref{sec:main:necessary} and Appendix~\ref{app:bilocal}. But the four settings above
are real matrices and so is $\proj{\phi^{+}}$, so the entire protocol runs
inside $\sTh$, which reaches $2\sqrt2$ as well. The Bell scenario passes the
necessary condition and is nonetheless blind to the distinction we are after.
This is the same obstacle that led Renou \emph{et al.} to the bilocality
scenario~\cite{Renou2021}, and we return to it in Sec.~\ref{sec:main:necessary}.

\section{Weak nonphysicality without complete positivity}
\label{sec:conj}

\subsection{Failure of complete positivity}
\label{sec:conj:notcp}

On the real vector space of Hermitian operators, complex conjugation in the
basis fixed in Sec.~\ref{sec:theory:def} is a linear involution and coincides
with the transpose,
\begin{equation}
  \label{eq:conjistranspose}
  C(X) \;=\; \bar X \;=\; X^{\mathsf T}
  \qquad (X = X^{\dagger}),
\end{equation}
so that $C = \PT{W}$ with $W$ the full set of wires, and the partial action of
$C$ on a subset $S$ of the wires of a composite is $\PT{S}$.

That $C$ is a symmetry of $\Th$ is a matter of five checks, each of which has
already been prepared. It is reversible, being an involution. It preserves
$\stateset$, since $\bar\rho$ is a density matrix whenever $\rho$ is. It
preserves $\effset$, because $\bigl(\ptr{E}{W}\bigr)^{\mathsf T_{S}} =
\ptr{E}{S^{c}}$ and the defining condition~\eqref{eq:effrestated} quantifies
over all subsets, so that $S \mapsto S^{c}$ merely permutes the conditions. The same argument with $\PT{V} \mapsto \PT{V^{c}}$ shows that
$\Lambda \mapsto C \Lambda C$ preserves $\opset$. And it preserves the pairing,
since $\Tr[\bar E \bar\rho] = \Tr[(\rho E)^{\mathsf T}] = \Tr[E \rho]$.

It is not, however, an operation of $\Th$. A transformation of a generalized
probabilistic theory must act not only on a system but on that system as part
of a composite, and conjugation fails at the first opportunity. Applying
$\PT{A}$ to~\eqref{eq:swappt} and using $\PT{A}^{2} = \id$,
\begin{equation}
  \label{eq:conjphiplus}
  (C_{A} \otimes \id)\bigl( \proj{\phi^{+}} \bigr)
  \;=\; \ptr{\proj{\phi^{+}}}{A}
  \;=\; \frac{\SWAP}{d} ,
\end{equation}
whose spectrum is $\{ +1/d, -1/d \}$, with multiplicities $d(d+1)/2$ and
$d(d-1)/2$. Since $\stateset$ is unrestricted, $\proj{\phi^{+}}$ is a state of
$\Th$, and its image is not. Hence $C$ is nonphysical in $\Th$.

The rest of the paper turns on two readings
of~\eqref{eq:conjphiplus}. The failure belongs to the composite and not to $C$
itself. On a single system $C$ maps states to states,
and in the language of quantum channels it is positive but not completely
positive. The second is that $\SWAP/d$ is Hermitian and has unit trace, so of
the three defining properties of a state it violates exactly one. Nothing in
the argument so far rules out the possibility that the missing positivity is
invisible to the effects of $\Th$, and Sec.~\ref{sec:conj:weak} shows that it
is.

\subsection{Adjoining the symmetry breaks nothing}
\label{sec:conj:weak}

Section~\ref{sec:conj:notcp} established that $C$ is nonphysical. We now show
that it is only weakly so, in the sense of
Sec.~\ref{sec:prelim:trichotomy}. Table~\ref{tab:classification}, in
Sec.~\ref{sec:sectorial:emax}, records what that buys and what it leaves
open.
The proof needs nothing beyond two closure properties already in hand. The
effect set $\effset$ is closed under partial transposition, by the discussion
following Eq.~\eqref{eq:effrestated}, and it is closed under the Heisenberg
action of the operations of $\Th$, by Proposition~\ref{prop:validgpt}(f). What
makes the argument work is that these two closures are available
simultaneously, so that a conjugation inserted anywhere in a circuit can be
absorbed into the effect.

\begin{proposition}
\label{prop:weaklynonphysical}
Complex conjugation is weakly nonphysical in $\Th$. Explicitly, let a closed
circuit be assembled from a state $\rho \in \stateset$, operations
$\Lambda_{1},\dots,\Lambda_{n} \in \opset$ and a measurement
$\{E_{k}\} \subset \effset$ with $\sum_{k} E_{k} = \Id$, and let partial
conjugations $\PT{S}$ be inserted at arbitrary points of the circuit and on
arbitrary subsets $S$ of the wires present there. Then the resulting numbers
$p(k)$ lie in $[0,1]$ and satisfy $\sum_{k} p(k) = 1$.
\end{proposition}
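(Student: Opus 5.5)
The plan is a push-through argument in the Heisenberg picture. We walk every inserted partial conjugation forward through the circuit until it reaches the measurement. There we absorb it into the effects, using the two closure properties recalled just before the statement.

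We begin by fixing a closed circuit and writing $p(k) = \Tr[E_{k}\, \Phi(\rho)]$, where $\Phi$ is the composite of the operations $\Lambda_{j}$ (tensored with identities on idle wires) interleaved with the inserted maps $\PT{S}$. Everything here is linear. Each $\Lambda_{j}$ and each $\PT{S}$ preserves the trace. Hence $\sum_{k} p(k) = \Tr[\Phi(\rho)] = 1$ holds immediately from $\sum_{k} E_{k} = \Id$, and only the bounds $p(k)\in[0,1]$ require work. For these we pass to the Heisenberg picture and write $p(k) = \Tr[\rho\, \Phi^{*}(E_{k})]$, where $\Phi^{*}$ is the reversed composite of the adjoints $\Lambda_{j}^{*}$ and $\PT{S}^{*} = \PT{S}$. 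Partial transposition is self-adjoint for the Hilbert--Schmidt pairing.

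The core claim is that $\Phi^{*}(E_{k}) \in \effset$. We prove it by induction on the number of elementary steps, working backwards from the measurement. The base case is $E_{k} \in \effset$. At an inserted $\PT{S}$ the current effect $F \in \effset$ is sent to $\ptr{F}{S}$, which lies in $\effset$ because $\effset$ is closed under partial transposition: $\bigl(\ptr{F}{V}\bigr)^{\mathsf T_{S}} = \ptr{F}{V\triangle S}$, by \eqref{eq:ptcompose} and \eqref{eq:effrestated}. At an operation we apply $(\Lambda_{j}\otimes\id)^{*}$, which preserves $\effset$ by Proposition~\ref{prop:validgpt}(f). To use that result we note that $\Lambda_{j}\otimes\id$ is again in $\opset$ by the parallel-composition clause of Proposition~\ref{prop:validgpt}(e), since the identity channel is trivially completely PPT-preserving. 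Wires that are discarded or adjoined are handled the same way, since partial trace and tensoring with a state are among the operations. Once $\Phi^{*}(E_{k})\in\effset$ is established, Proposition~\ref{prop:validgpt}(a) gives $\Tr[\rho\,\Phi^{*}(E_{k})]\in[0,1]$ for the unrestricted state $\rho$, and the proof is complete.

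The step I expect to need the most care is the bookkeeping of wire labels when a $\PT{S}$ is inserted between operations whose input and output wires differ. Proposition~\ref{prop:validgpt}(f) was proved under the label-matching convention, so each intermediate subset $S$ must be expressed in terms of the labels of the adjacent operation. The closure of $\effset$ must also be verified on whatever composite is present at that point. I would address this by noting that closure under $\PT{S}$ is a statement about the current system alone, for arbitrary $S$. It therefore never interacts with the labels of $\Lambda_{j}$, and the two closures can simply be alternated. A secondary point concerns instruments at intermediate nodes. For these the same induction applies with subunital adjoints, using the instrument version of (f) noted after its proof. The sum over branch outcomes then recovers normalisation, because the instrument sums to an element of $\opset$.
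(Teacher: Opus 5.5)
Your proposal is correct and follows the paper's proof: the inserted $\PT{S}$ and the adjoints $\Lambda_{j}^{*}$ are pushed onto the effect, using closure of $\effset$ under partial transposition and Proposition~\ref{prop:validgpt}(f). Normalisation comes from trace preservation, which the paper phrases equivalently as unitality of the adjoints.

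One side remark in your proposal is inaccurate. Adjoining an arbitrary state is not an operation of $\Th$; for example, no operation prepares $\proj{\phi^{+}}$. This is harmless: the statement has a single initial state, and any mid-circuit preparation can be absorbed into $\rho$ or handled by the steering closure of Proposition~\ref{prop:validgpt}(d).
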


\begin{proof}
Write the circuit value in the Heisenberg picture. If
$\mathcal{M}_{1},\dots,\mathcal{M}_{m}$ denotes the sequence of maps applied to
$\rho$, each $\mathcal{M}_{j}$ being either an operation $\Lambda \in \opset$
or an inserted conjugation $\PT{S}$, then
\begin{equation}
  \label{eq:heisenbergvalue}
  p(k)
  \;=\; \bigl\langle E_{k},\, (\mathcal{M}_{m} \circ \cdots \circ
        \mathcal{M}_{1})(\rho) \bigr\rangle
  \;=\; \langle E'_{k},\, \rho \rangle ,
\end{equation}
where
\begin{equation}
  \label{eq:pushedeffect}
  E'_{k} \;=\;
  \bigl( \mathcal{M}_{1}^{*} \circ \cdots \circ \mathcal{M}_{m}^{*}
  \bigr)(E_{k}) .
\end{equation}
Each factor in~\eqref{eq:pushedeffect} maps $\effset$ into itself. For a factor
$\Lambda^{*}$ this is Proposition~\ref{prop:validgpt}(f). For a factor
$\PT{S}^{*} = \PT{S}$, partial transposition being self-adjoint for the
Hilbert--Schmidt pairing, this is the closure of $\effset$ under partial
transposition. Hence $E'_{k} \in \effset$, and since $\rho \in \stateset$ we
obtain $p(k) = \langle E'_{k}, \rho \rangle \in [0,1]$.

For normalisation, note that every factor in~\eqref{eq:pushedeffect} is linear
and unital: $\Lambda^{*}(\Id) = \Id$ because $\Lambda$ is trace preserving, and
$\PT{S}(\Id) = \Id$. Applying~\eqref{eq:pushedeffect} to
$\sum_{k} E_{k} = \Id$ gives $\sum_{k} E'_{k} = \Id$, so
$\sum_{k} p(k) = \Tr\rho = 1$.
\end{proof}

The proposition does not say that the intermediate objects in the circuit are
states of $\Th$. They are not. By
Eq.~\eqref{eq:conjphiplus} a conjugation applied to one half of $\proj{\phi^{+}}$ produces $\SWAP/d$, which
is not positive, and no reading of Proposition~\ref{prop:weaklynonphysical}
repairs that. What the proposition says is that the departure from the state
cone is never in a direction that an effect of $\Th$ can resolve. The theory is consistent because what goes wrong inside it is invisible from
the outside, and not because nothing goes wrong.

The contrast with quantum theory is instructive, and it is a contrast of
exactly the kind the title of this paper advertises. Quantum theory has the
larger effect set, and being larger is precisely what prevents it from being closed under partial transposition. The projector
$\proj{\phi^{+}}$ is a legitimate quantum
effect and $\ptr{\proj{\phi^{+}}}{A}$ is not. The push-through therefore fails
at the first step, and the failure is not abstract. Taking $P_{\rm asym}$ from
Lemma~\ref{lem:psym} as the effect and $\SWAP/d$ as the object presented to it,
\begin{equation}
  \label{eq:negativeprob}
  \Tr\Bigl[ P_{\rm asym} \, \frac{\SWAP}{d} \Bigr]
  \;=\; \frac{\Tr[\SWAP] - \Tr[\SWAP^{2}]}{2d}
  \;=\; \frac{1-d}{2} ,
\end{equation}
which equals $-1/2$ for a pair of qubits. Adjoining $C$ to quantum theory thus
assigns a negative probability to the antisymmetric outcome of a swap test, and
$C$ is strongly nonphysical there. The effect that detects the inconsistency is
$P_{\rm asym}$, and that is exactly the effect that Lemma~\ref{lem:psym} removed
from $\effset$ when the effect set was symmetrized. Restricting the effects
does not repair the symmetry. It removes the instruments that would have
registered its failure.

This leaves an obvious question. If $\SWAP/d$ is not a state of $\Th$ and yet
pairs admissibly with every effect of $\Th$, where does it live? The answer is
the subject of Sec.~\ref{sec:conj:dual}, and it is the structural reason the
two conditions examined in this section can come apart at all.

\subsection{Where complete positivity and strong nonphysicality come apart}
\label{sec:conj:dual}

Proposition~\ref{prop:weaklynonphysical} is a computation, and a computation
does not by itself say why its conclusion should have been expected. The reason
is a statement about cones, and it identifies precisely which feature of $\Th$
is responsible.

Work in the real vector space $\Herm$ of Hermitian operators on a composite of
$n$ wires, with the Hilbert--Schmidt pairing. The cone generated by the effect
set is
\begin{equation}
  \label{eq:effcone}
  \mathcal{K} \;:=\; \mathrm{cone}(\effset)
  \;=\; \bigcap_{S} \PT{S}(\PSD) ,
\end{equation}
since the upper bounds in~\eqref{eq:defeffects} are removed by rescaling. Only
$2^{\,n-1}$ of these cones are distinct, because $\PT{S^{c}}$ and $\PT{S}$
differ by the global transpose, which preserves positivity.

\begin{proposition}
\label{prop:dualcone}
The dual of the effect cone is
\begin{equation}
  \label{eq:dualcone}
  \mathcal{K}^{*} \;=\; \sum_{S} \PT{S}(\PSD) \;\supsetneq\; \PSD ,
\end{equation}
the inclusion being strict whenever $n \ge 2$. Consequently the set
\begin{equation}
  \label{eq:omegahat}
  \widehat{\stateset}
  \;:=\; \bigl\{ X \in \mathcal{K}^{*} : \Tr X = 1 \bigr\}
\end{equation}
of normalised operators pairing admissibly with every effect of $\Th$ strictly
contains $\stateset$, and $\SWAP/d \in \widehat{\stateset} \setminus
\stateset$.
\end{proposition}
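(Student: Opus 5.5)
The plan is to compute the dual of an intersection of closed convex cones through the standard identity $(\bigcap_i K_i)^* = \overline{\sum_i K_i^*}$, applied with $K_S = \PT{S}(\PSD)$, and then to settle strictness and the explicit witness by direct spectral computations.

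First, I will show that each $K_S$ is self-dual. For $X \in \Herm$, $\Tr[X\,\PT{S}(P)] = \Tr[\PT{S}(X)\,P]$, because partial transposition is self-adjoint for the Hilbert--Schmidt pairing. So $X \in K_S^*$ if and only if $\PT{S}(X) \in \PSD^* = \PSD$, that is, if and only if $X \in \PT{S}(\PSD)$, using that $\PT{S}$ is an involution. The duality identity then gives $\mathcal{K}^* = \overline{\sum_S \PT{S}(\PSD)}$.

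The main obstacle is removing the closure. The Minkowski sum of closed cones need not be closed. I will use the standard sufficient condition that the sum of finitely many closed convex cones is closed when $\sum_S Y_S = 0$ with $Y_S \in K_S$ forces every $Y_S = 0$. Pair such a relation with $\Id$. Since $\Tr \PT{S}(P) = \Tr P$, we get $\sum_S \Tr P_S = 0$ with every $P_S \ge 0$. Hence every $P_S = 0$, and so every $Y_S = 0$. This pointedness argument gives $\mathcal{K}^* = \sum_S \PT{S}(\PSD)$ exactly.

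Next comes strictness for $n \ge 2$, together with the witness. The inclusion $\PSD \subseteq \mathcal{K}^*$ is the $S=\emptyset$ term. For strictness it suffices to exhibit $\SWAP/d$. Group two wires as $A$ and $B$, and tensor with $\Id$ normalised on any remaining wires; the paper's statement is really about the two-wire case. By Eq.~\eqref{eq:conjphiplus}, $\SWAP/d = \PT{A}(\proj{\phi^{+}}) \in \PT{A}(\PSD) \subseteq \mathcal{K}^*$. Its trace is $\Tr[\SWAP]/d = 1$, and its spectrum contains $-1/d$. So it lies in $\widehat{\stateset}$ but not in $\stateset$, and in particular not in $\PSD$. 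Finally, the identification of $\widehat{\stateset}$ as the set of normalised operators pairing admissibly with every effect uses two facts. The pairing with $\effset$ lies in $[0,1]$ exactly when $X \in \mathcal{K}^*$ and $\Tr X = 1$. The upper bound follows from complementation: $\Tr[(\Id-E)X] \ge 0$ gives $\Tr[EX] \le \Tr X = 1$.
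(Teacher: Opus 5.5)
Your proposal is correct and follows the paper's proof essentially step for step: self-duality of each $\PT{S}(\PSD)$, dualising the intersection into the closure of the sum, removing the closure by the trace argument for positive independence, and using $\SWAP/d = \PT{A}(\proj{\phi^{+}})$ for both strictness and the witness. Your extra remarks make explicit two details the paper leaves implicit: padding with a normalised identity on the remaining wires when $n > 2$, and obtaining the upper bound $\Tr[EX] \le 1$ from closure of $\effset$ under complementation.
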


\begin{proof}
Each $\PT{S}$ is self-adjoint for the Hilbert--Schmidt pairing and is an
involution, and $\PSD$ is self-dual, so
$\bigl( \PT{S}(\PSD) \bigr)^{*} = \PT{S}(\PSD)$. Dualising~\eqref{eq:effcone}
turns the intersection into the closure of the sum of the duals. That sum is already closed. If $\sum_{S} \PT{S}(P_{S}) = 0$ with every $P_{S} \ge 0$, then
taking traces and using $\Tr[\PT{S}(P_{S})] = \Tr P_{S}$ gives
$\sum_{S} \Tr P_{S} = 0$, whence every $P_{S}$ vanishes. Closed pointed cones that are positively independent in this sense have closed
sum. This gives~\eqref{eq:dualcone} with the inclusion $\PSD \subseteq
\mathcal{K}^{*}$ obtained from the term $S = \emptyset$.

For strictness, and for the last claim, take $S = \{A\}$ and
$P_{S} = \proj{\phi^{+}}$. By Eq.~\eqref{eq:conjphiplus},
$\PT{A}(\proj{\phi^{+}}) = \SWAP/d$, which therefore lies in
$\mathcal{K}^{*}$. It has unit trace, so it lies in $\widehat{\stateset}$. And
it is not positive, so it lies in neither $\PSD$ nor $\stateset$.
\end{proof}

In this language Proposition~\ref{prop:weaklynonphysical} acquires a one-line
proof. A conjugation applied to a state lands in $\PT{S}(\PSD)$, which
is a summand of $\mathcal{K}^{*}$ by~\eqref{eq:dualcone}, and every operation
of $\Th$ preserves each summand separately, since
$\Lambda\bigl(\PT{V}(P)\bigr) = \PT{V}\bigl( \PT{V}\Lambda\PT{V}(P) \bigr)$ and
$\PT{V}\Lambda\PT{V}$ is completely positive. Nothing a circuit can do to a
state of $\Th$, using the operations of $\Th$ and any number of conjugations,
ever leaves $\mathcal{K}^{*}$. The effects of $\Th$ are by construction
nonnegative on $\mathcal{K}^{*}$.

The gap between $\stateset$ and $\widehat{\stateset}$ is a failure of the
no-restriction hypothesis, and $\Th$ fails it in both of its usual forms~\cite{Selby2023accessible}. The hypothesis for effects
demands that every operator pairing admissibly with all states be an effect. Here $\stateset^{*} = \PSD$, so it would demand $\effset = [0,\Id]$, and
Lemma~\ref{lem:psym} exhibits the shortfall. The hypothesis for states demands
$\stateset = \widehat{\stateset}$, which Proposition~\ref{prop:dualcone}
denies. The two failures are not independent. If the effect set were unrestricted one would have $\effset^{*} = \stateset^{**} = \stateset$ by
biduality, so no-restriction for effects implies no-restriction for states, and
a theory can violate the latter only by violating the former.

That implication is the reason $\Th$ is built the way it is, and it settles the
question left open in Sec.~\ref{sec:theory:def} of why the states were left
unrestricted. Restricting states alone can never produce a gap between
$\stateset$ and $\widehat{\stateset}$, because the bidual returns the state
cone one started from. The gap requires a restriction on effects, and only a restriction on effects can make an effect
set closed under a map that does not preserve positivity.

The same observation explains why no example of this kind can be found inside
quantum theory. There the no-restriction hypothesis holds, so
$\widehat{\stateset} = \stateset$, and the two conditions examined in this section collapse onto one. A symmetry whose
partial action leaves the state
cone thereby leaves $\widehat{\stateset}$, and is strongly nonphysical. Ying
\emph{et al.} note this themselves, remarking that in quantum theory logical
possibility is not merely necessary for physicality but sufficient, and
attributing the coincidence to the no-restriction
hypothesis~\cite{Ying2025foil}. Failure of complete positivity and strong
nonphysicality are therefore not two names for one condition. They are two conditions that quantum theory happens to identify, and the identification is
an artefact of its lack of restrictions rather than a feature of the symmetry.

\section{The per-source reference-frame encoding}
\label{sec:enc}

\subsection{One frame per source}
\label{sec:enc:constr}

\begin{figure*}[t]
\centering
\begin{tikzpicture}[x=1cm,y=1cm,
  wire/.style={semithick},
  src/.style={circle,draw,thick,fill=white,inner sep=1pt,minimum size=6mm,
              font=\footnotesize},
  pty/.style={rectangle,draw,thick,fill=white,inner sep=1pt,minimum width=6mm,
              minimum height=5mm,font=\footnotesize},
  reg/.style={circle,draw,fill=black!10,inner sep=0pt,minimum size=4.4mm,
              font=\tiny},
  hd/.style={font=\small\bfseries},
  vd/.style={font=\footnotesize,align=center}]
\foreach \i/\x/\ttl/\la/\lb/\lc/\ld in {%
   1/0/{(a)}/{$s$}/{$s$}/{$s$}/{$s$},
   2/5.6/{(b)}/{$s_1$}/{$s_2$}/{$s_3$}/{$s_4$},
   3/11.2/{(c)}/{$s$}/{$t$}/{$s$}/{$t$}}{
  \node[src] (A\i) at (\x+1.3,2.0) {$\sigma$};
  \node[src] (B\i) at (\x+2.9,2.0) {$\tau$};
  \node[pty] (P1\i) at (\x+0.4,0) {};
  \node[pty] (P2\i) at (\x+1.5,0) {};
  \node[pty] (P3\i) at (\x+2.7,0) {};
  \node[pty] (P4\i) at (\x+3.8,0) {};
  \draw[wire] (A\i)--(P1\i); \draw[wire] (A\i)--(P3\i);
  \draw[wire] (B\i)--(P2\i); \draw[wire] (B\i)--(P4\i);
  \node[reg] at ($ (P1\i)!0.28!(A\i) $) {\la};
  \node[reg] at ($ (P2\i)!0.62!(B\i) $) {\lb};
  \node[reg] at ($ (P3\i)!0.28!(A\i) $) {\lc};
  \node[reg] at ($ (P4\i)!0.62!(B\i) $) {\ld};
  \node[hd] at (\x+2.1,2.85) {\ttl};
}
\node[vd,text width=4.3cm] at (2.1,-1.0)
  {one register for the whole\\experiment: \textbf{not available}};
\node[vd,text width=4.3cm] at (7.7,-1.0)
  {one per wire:\\\textbf{not positive}};
\node[vd,text width=4.3cm] at (13.3,-1.0)
  {one per source:\\\textbf{both}};
\end{tikzpicture}
\caption{\label{fig:frames}Where to attach the reference frame. Circles are
sources, squares are parties, shaded discs are frame registers carrying a sign,
and the two sources $\sigma$ and $\tau$ each feed two parties, with their wires
crossing so that grouping by source is visibly not grouping by position.
In~(a) a single register is shared by the whole experiment, which is the
simulation of Ref.~\cite{McKague2009}, and a causal structure with independent
sources does not permit it. In~(b) each wire carries its own register. That is
permitted, but the sign patterns that survive the encoding then transpose
single systems, and by Eq.~\eqref{eq:conjphiplus} positivity fails. In~(c) the
register is attached to the source, so that wires from the same source carry
the same sign and wires from different sources are uncorrelated. This is
permitted for the same reason as~(b), and the surviving patterns now transpose
whole sources, which the effect set of Definition~\ref{def:T} is closed
under.}
\end{figure*}

The simulation of a symmetrized world by local reference frames fails for
$\Th$ in the form given by Ying \emph{et al.}, and it fails for a reason that
points at its own repair. Their map attaches an \emph{independent} frame to
each system. Applied to a bipartite state, the resulting sum contains terms in which the
conjugation acts on one factor and not the other. Those terms are built from a
partial transpose, which is not positive, so the image of an entangled state
need not be a state~\cite{Ying2025foil}. The
diagnosis is that the frame has been attached to the wrong object. Conjugation
is harmless when it acts on a whole state at once, since a full transpose of a
density matrix is a density matrix, and it is dangerous only when it acts on
part of one. What is needed, then, is a frame per \emph{source}, as in
Fig.~\ref{fig:frames}(c). The construction that
follows is a reference-frame encoding in the sense
of Ref.~\cite{BRS2007}, differing from the standard one only in what the frame
is attached to.

This is permitted by the causal structure. A source is the thing that is allowed to emit correlated systems, and the
frames introduced below are
correlated only among the wires of a single source, never across two of them.
Theorem~3 of Ref.~\cite{Ying2025foil}, stated there for twirled worlds and
extended to swirled ones in the same appendix, forbids preparing a reference
frame shared by independent sources. No such frame is used here.

Fix a network causal structure. Its sources are indexed by
$\sigma = 1,\dots,m$, source $\sigma$ emitting a state $\rho_{\sigma}$ on a set
$W_{\sigma}$ of $k_{\sigma}$ wires, and every wire belongs to exactly one
source. Its parties are indexed by $\alpha$, party $\alpha$ holding a set
$W^{\alpha}$ of wires that may be drawn from several sources. To each wire $w$
we adjoin a frame qubit $R_{w}$, emitted by the same source as $w$ and held by the same party. This is the standing assumption that a wire may carry an
enlarged system. Write $P_{\pm}$ for the eigenprojectors of $Y$ on a frame
qubit, and for a sign pattern $\vec s$ on a set of wires write
$V(\vec s) = \{ w : s_{w} = - \}$.

\begin{definition}[The per-source encoding]
\label{def:encoding}
A source emitting $\rho$ on $k$ wires is replaced by the source emitting
\begin{equation}
  \label{eq:stateenc}
  \tilde\rho \;=\; \tfrac12 \Bigl(
    \rho \otimes P_{+}^{\otimes k}
    \;+\; \rho^{\mathsf T} \otimes P_{-}^{\otimes k}
  \Bigr)
\end{equation}
on those wires and their frames, the transpose being taken on all $k$ wires at
once. A party holding $n$ wires and measuring $E$ measures instead
\begin{equation}
  \label{eq:effenc}
  \xieff(E) \;=\; \sum_{\vec s} \ptr{E}{V(\vec s)}
  \otimes \bigotimes_{w} P_{s_{w}} ,
\end{equation}
the sum running over the $2^{n}$ sign patterns on the wires it holds.
\end{definition}

The frames of one source are perfectly correlated by~\eqref{eq:stateenc}. The frames of different sources are independent, since the sources are. Note that
$\xieff$ makes no reference to which source a wire came from. A party simply measures the pair consisting of each wire and its frame.

That $\tilde\rho$ is a state of $\sTh$ takes three lines. It has unit trace,
each of the two terms in~\eqref{eq:stateenc} having unit trace. It is positive,
because $P_{+}^{\otimes k}$ and $P_{-}^{\otimes k}$ are orthogonal projectors,
so the two terms occupy orthogonal sectors and the eigenvalues of $\tilde\rho$
are those of $\rho/2$ together with those of $\rho^{\mathsf T}/2$. And it is
real, because conjugation exchanges $\rho$ with $\rho^{\mathsf T}$ and $P_{+}$
with $P_{-}$ simultaneously, so it exchanges the two terms. Since the states of $\Th$ are unrestricted, those of $\sTh$ are the real
density matrices,
and $\tilde\rho$ is one. It is not in general preparable by an operation of
$\sTh$, but as observed in Sec.~\ref{sec:theory:valid} that is immaterial. What a source
emits is primitive data of the causal structure.

The frame registers alone carry the state
$\tfrac12 ( P_{+}^{\otimes k} + P_{-}^{\otimes k} )$, and for $k \ge 2$ this is
entangled in $\sTh$, exactly as Theorem~3 of Ref.~\cite{Ying2025foil} requires
of anything that can serve as a shared frame. It is separable as a quantum state, being a mixture of two products, but it
satisfies $\langle Y \otimes Y \rangle = 1$. Any product of real density
matrices has $\langle Y \rangle = 0$ on each factor, the trace of a real
symmetric operator against an imaginary antisymmetric one vanishing. No mixture of
product states of $\sTh$ reproduces it.

It is convenient to name the state that the encoded sources jointly prepare.
Writing $N(\vec\epsilon) = \{\sigma : \epsilon_{\sigma} = -\}$ and
$W_{N} = \bigcup_{\sigma \in N} W_{\sigma}$, expansion
of~\eqref{eq:stateenc} gives
$\bigotimes_{\sigma} \tilde\rho_{\sigma} = \Xi\bigl( \bigotimes_{\sigma}
\rho_{\sigma} \bigr)$, where
\begin{equation}
  \label{eq:networkenc}
  \Xi(X) \;=\; 2^{-m} \sum_{\vec\epsilon}
  \ptr{X}{W_{N(\vec\epsilon)}}
  \otimes \bigotimes_{\sigma} P_{\epsilon_{\sigma}}^{\otimes k_{\sigma}} .
\end{equation}
The map $\Xi$ is defined on all of $\Herm$ and not only on product states.
That will matter in Sec.~\ref{sec:main}, where it is applied to the state
obtained after a sequence of operations.

We can now record the identity that makes~\eqref{eq:stateenc}
and~\eqref{eq:effenc} an encoding at all. In the absence of intermediate
operations, which are treated in Sec.~\ref{sec:enc:ops}, the statistics of the
network are reproduced exactly.

\begin{proposition}
\label{prop:pairing}
For any Hermitian $X$ on the system wires and any Hermitian $E_{\alpha}$,
\begin{equation}
  \label{eq:pairingidentity}
  \Tr\Bigl[ \bigotimes_{\alpha} \xieff(E_{\alpha}) \; \Xi(X) \Bigr]
  \;=\;
  \Tr\Bigl[ \bigotimes_{\alpha} E_{\alpha} \; X \Bigr] ,
\end{equation}
the wires being reordered as needed on each side. In particular this holds for
$X = \bigotimes_{\sigma}\rho_{\sigma}$, for which
$\Xi(X) = \bigotimes_{\sigma}\tilde\rho_{\sigma}$.
\end{proposition}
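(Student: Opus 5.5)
Both sides of the claimed identity are linear in $X$ and in each $E_{\alpha}$, so I will expand everything in the frame sectors and compare term by term. On the left, $\bigotimes_{\alpha}\xieff(E_{\alpha})$ is a sum over sign patterns $\vec s$ on all wires of the network, since every wire is held by exactly one party. Regrouping the factors gives
\begin{equation}
\bigotimes_{\alpha}\xieff(E_{\alpha})
\;=\;\sum_{\vec s}\ptr{E}{V(\vec s)}\otimes\bigotimes_{w}P_{s_w},
\end{equation}
where $E=\bigotimes_{\alpha}E_{\alpha}$. Here I use that partial transposition acts factorwise on products, Eq.~\eqref{eq:prodpt}, so the tensor product of the $\ptr{E_{\alpha}}{V_{\alpha}}$ is $\ptr{E}{\bigcup_{\alpha}V_{\alpha}}$. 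On the right-hand side of the trace, $\Xi(X)$ is the sum over source patterns $\vec\epsilon$ in~\eqref{eq:networkenc}.

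Next I take the trace of the product of the two expansions. The frame factor of the term indexed by $(\vec s,\vec\epsilon)$ is
\begin{equation}
\Tr\Bigl[\bigotimes_{w}P_{s_w}\,\bigotimes_{\sigma}P_{\epsilon_\sigma}^{\otimes k_\sigma}\Bigr]
=\prod_{\sigma}\prod_{w\in W_\sigma}\Tr[P_{s_w}P_{\epsilon_\sigma}] .
\end{equation}
Because $P_{\pm}$ are orthogonal rank-one projectors, this factor is $1$ exactly when $s_w=\epsilon_{\sigma}$ for every $w\in W_{\sigma}$, and $0$ otherwise. So only the $2^{m}$ wire patterns that are constant on each source survive, and for each of them $V(\vec s)=W_{N(\vec\epsilon)}$.

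The left-hand side therefore collapses to
\begin{equation}
2^{-m}\sum_{\vec\epsilon}\Tr\bigl[\ptr{E}{W_N}\,\ptr{X}{W_N}\bigr],
\qquad N=N(\vec\epsilon).
\end{equation}
Partial transposition on a fixed subset is self-adjoint for the Hilbert--Schmidt pairing and is an involution. Hence $\Tr[\PT{V}(A)\,\PT{V}(B)]=\Tr[AB]$, and each of the $2^{m}$ terms equals $\Tr[EX]$. The factor $2^{-m}$ cancels the number of terms, which gives~\eqref{eq:pairingidentity}.

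For the final claim, $\Xi(\bigotimes_{\sigma}\rho_{\sigma})=\bigotimes_{\sigma}\tilde\rho_{\sigma}$ was already obtained by expanding~\eqref{eq:stateenc}. It uses once more that $\ptr{(\bigotimes_{\sigma}\rho_{\sigma})}{W_N}=\bigotimes_{\sigma\in N}\rho_{\sigma}^{\mathsf T}\otimes\bigotimes_{\sigma\notin N}\rho_{\sigma}$.

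I expect no real obstacle. The only place that needs care is the bookkeeping of wire orderings: parties group wires by holder and sources group them by emitter. I would handle this by fixing one global ordering of the wires, each followed by its frame, and noting that every operator involved is permuted consistently. Partial transposition commutes with such relabellings, since $\pi\PT{S}\pi^{-1}=\PT{\pi(S)}$. The essential point is the orthogonality of $P_{+}$ and $P_{-}$, which forces $V(\vec s)$ to be a union of whole sources.
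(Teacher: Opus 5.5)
Your proposal is correct and follows the paper's proof essentially step for step: you expand both sides in the frame sectors, use the orthogonality of the rank-one projectors $P_{\pm}$ to keep only the $2^{m}$ sign patterns constant on each source, and use that $\PT{W_{N}}$ is self-adjoint and involutive so that each surviving term equals $\Tr[EX]$, which the weight $2^{-m}$ then averages. Your remarks on wire reordering via $\pi\PT{S}\pi^{-1}=\PT{\pi(S)}$ and on the product-state special case are accurate bookkeeping that the paper leaves implicit.
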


\begin{proof}
Pair the expansion~\eqref{eq:networkenc} of $\Xi(X)$ with the
expansion~\eqref{eq:effenc} of the encoded effects. The frame factors
contribute $\Tr[ P_{s_{w}} P_{\epsilon_{\sigma(w)}} ] = \delta_{s_{w},
\epsilon_{\sigma(w)}}$ for each wire $w$, so a term survives only when the sign
pattern $\vec s$ on the wires is constant on each source and agrees there with
$\vec\epsilon$.

Fix such a term, with $N = N(\vec\epsilon)$. On the effect side the surviving
transposition is $\PT{W_{N}}$ applied to $\bigotimes_{\alpha} E_{\alpha}$, and
on the state side it is $\PT{W_{N}}$ applied to $X$. Since $\PT{W_{N}}$ is self-adjoint for the
Hilbert--Schmidt pairing and involutive,
\begin{equation}
  \Tr\bigl[ \PT{W_{N}}(X) \, \PT{W_{N}}(Y) \bigr] \;=\; \Tr[XY] ,
\end{equation}
every surviving term equals the right-hand side
of~\eqref{eq:pairingidentity}. There are $2^{m}$ such terms, one for each
$\vec\epsilon$, each weighted by $2^{-m}$.
\end{proof}

The proof exhibits the mechanism, which deserves a name. The signs that
survive are constant on each source, so the transposition that appears in a
surviving term is the transposition of a whole source and never of part of one.
On the state side that is a full transpose of $\rho_{\sigma}$, which is a density matrix. On the effect side it is a partial transpose across the
source-of-origin partition, which is admissible precisely because
$\effset$ was defined by~\eqref{eq:effrestated}. Had the frames been attached
to systems rather than to sources, the surviving patterns would have been
arbitrary, a source could have been transposed in part, and the state side of
the identity would have left the positive cone. This is the failure of the
original map, and it is repaired here not by weakening what is asked of the
simulation but by attaching the frame to the object that the causal structure
actually treats as a unit.

The encoding is not yet ready for a general causal structure.
Section~\ref{sec:enc:ops} shows that the operations of $\Th$ can be
encoded as well, with the completely PPT-preserving condition emerging one
sector at a time. Section~\ref{sec:enc:frames} settles what a box does to the
frames of the wires passing through it, and shows that the answer is forced.

\subsection{Encoded effects}
\label{sec:enc:pos}

For the encoding of Definition~\ref{def:encoding} to be of any use, the encoded
effect $\xieff(E)$ must be an effect of $\sTh$ on the enlarged system carrying
both the wires and their frames, and the requirement has three parts of which
the third is easy to overlook. The operator $\xieff(E)$ must be real, since $\sTh$
admits only real effects. It must lie between $0$ and the unit effect. And it
must satisfy the restriction~\eqref{eq:defeffects} on the enlarged wire set,
because $\sTh$ is a subtheory of $\Th$ and inherits that restriction. Positivity on its own would not be enough. The following lemma settles all three at once,
and it turns out that they impose no more than the single condition
$E \in \effset$.

\begin{lemma}[Positivity lemma]
\label{lem:positivity}
Let $E$ be Hermitian on $n$ wires and let $\xieff(E)$ be as in
Definition~\ref{def:encoding}. Then
\begin{enumerate}
  \item[(i)] $\xieff(E)$ is real;
  \item[(ii)] for every subset $U$ of the $2n$ wires of the enlarged system,
        \begin{equation}
          \label{eq:specinvariance}
          \mathrm{spec}\,\ptr{\xieff(E)}{U}
          \;=\; \bigcup_{V} \mathrm{spec}\,\ptr{E}{V} ,
        \end{equation}
        the union running over all subsets $V$ of the original wires and taken
        with multiplicities; in particular the spectrum does not depend
        on $U$;
  \item[(iii)] consequently $\xieff(E)$ is an effect of $\sTh$ if and only if
        $E$ is an effect of $\Th$.
\end{enumerate}
Moreover $\xieff$ is linear, injective and unital.
\end{lemma}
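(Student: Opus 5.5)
The plan is to exploit the block structure of $\xieff(E)$. The operators $\bigotimes_{w} P_{s_{w}}$, for $\vec s$ ranging over the $2^{n}$ sign patterns, are mutually orthogonal projectors summing to the identity on the frames. So $\xieff(E)$ is block diagonal, with the block labelled $\vec s$ equal to $\ptr{E}{V(\vec s)}$. Everything then reduces to tracking what conjugation and partial transposition do to the labels $\vec s$. The one frame-qubit fact needed is that $Y^{\mathsf T} = \bar Y = -Y$. Since $P_{\pm} = \tfrac12(\Id \pm Y)$, this gives $P_{\pm}^{\mathsf T} = \bar P_{\pm} = P_{\mp}$.

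For (i), I would apply $C$ term by term. On the system factor, \eqref{eq:conjistranspose} and \eqref{eq:ptcompose} give $\overline{\ptr{E}{V}} = \ptr{E}{V^{c}}$. On each frame factor, $P_{s_{w}} \mapsto P_{-s_{w}}$. Since $V(-\vec s) = V(\vec s)^{c}$, conjugation sends the term labelled $\vec s$ to the term labelled $-\vec s$. It therefore permutes the summands, and $\xieff(E)$ is real.

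For (ii), I would split the subset $U$ of the $2n$ wires into a system part $U_{\rm sys}$ and a frame part. Let $U_{\rm fr}$ be the set of original wires whose frame lies in $U$. Each summand is a product operator, so $\PT{U}$ acts factorwise. On the system factor it sends $\ptr{E}{V(\vec s)}$ to $\ptr{E}{V(\vec s)\triangle U_{\rm sys}}$ by \eqref{eq:ptcompose}. On the frames it flips $s_{w}$ for $w \in U_{\rm fr}$. Reindex by the flipped pattern $\vec s'$, which satisfies $V(\vec s) = V(\vec s') \triangle U_{\rm fr}$. Then
\begin{equation}
  \ptr{\xieff(E)}{U} \;=\; \sum_{\vec s'} \ptr{E}{V(\vec s') \triangle U_{\rm fr} \triangle U_{\rm sys}} \otimes \bigotimes_{w} P_{s'_{w}} .
\end{equation}
This is again block diagonal in the same orthogonal frame decomposition. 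The map $V \mapsto V \triangle U_{\rm fr} \triangle U_{\rm sys}$ is a bijection of the power set of the original wires. Hence the blocks are exactly the operators $\ptr{E}{V}$, each occurring once, and the spectrum is the union in \eqref{eq:specinvariance} with multiplicities. I expect this bookkeeping to be the main obstacle. The point to get right is that transposing a frame qubit relabels a block rather than mixing blocks, and that the two symmetric differences compose into a single bijection.

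For (iii), an effect of $\sTh$ on the enlarged system must be real and must satisfy $0 \le \ptr{\xieff(E)}{U} \le \Id$ for every $U$. Reality holds by (i). By (ii), the order condition holds for every $U$ exactly when every $\ptr{E}{V}$ has spectrum in $[0,1]$, which is \eqref{eq:effrestated}. The remaining claims are direct. Linearity is immediate. Unitality follows from $\ptr{\Id}{V} = \Id$ and $\sum_{\vec s} \bigotimes_{w} P_{s_{w}} = \Id$. Injectivity follows by compressing with $P_{+}^{\otimes n}$ on the frames, which recovers the block $\vec s = (+,\dots,+)$, equal to $\ptr{E}{\emptyset} = E$.
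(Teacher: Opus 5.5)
Your proposal is correct and follows essentially the same route as the paper. You block-diagonalise $\xieff(E)$ in the $Y$-eigenbasis of the frames, show that conjugation sends the term at $\vec s$ to the term at $-\vec s$, and show that partial transposition of system and frame wires acts on the block labels by a fixed symmetric difference, and hence by a bijection of the subsets; the only difference is that the paper first records the case $U=\emptyset$ before the general reindexing, while you go straight to the general case.
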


\begin{proof}
The projectors $P_{\pm}$ are orthogonal, of rank one, and sum to the identity
on a frame qubit, so the operators $\bigotimes_{i} P_{s_{i}}$ form a complete
orthogonal family on the frame registers. The sum defining $\xieff(E)$ is
therefore block diagonal with respect to
\begin{equation}
  \mathcal{H}_{\rm sys} \otimes \mathcal{H}_{R}
  \;=\; \bigoplus_{\vec s} \mathcal{H}_{\rm sys} \otimes
        \mathrm{ran}\Bigl( \bigotimes_{i} P_{s_{i}} \Bigr) ,
\end{equation}
each summand of the frame factor being one-dimensional, and the block indexed
by $\vec s$ is $\ptr{E}{V(\vec s)}$. Since $\vec s \mapsto V(\vec s)$ is a
bijection onto the subsets of $\{1,\dots,n\}$, this proves
Eq.~\eqref{eq:specinvariance} in the case $U = \emptyset$.

For (i), recall that conjugation acts on Hermitian operators as the transpose,
so $\overline{\ptr{E}{V}} = \ptr{E}{V^{c}}$ and, since $Y^{\mathsf T} = -Y$,
$\overline{P_{\pm}} = P_{\mp}$. Conjugating the sum therefore complements every
$V(\vec s)$ and flips every sign $s_{i}$ at the same time. These two operations are locked to one another. Replacing $\vec s$ by
$-\vec s$ sends $V(\vec s)$ to
$V(\vec s)^{c}$, so the substitution merely permutes the terms of the sum and
$\overline{\xieff(E)} = \xieff(E)$.

For the general case of (ii), write $U = A \cup B$ with $A$ a set of system
wires and $B$ a set of frame wires, and let $B' = \{ i : R_{i} \in B \}$.
Transposing the system wires in $A$ sends the block $\ptr{E}{V(\vec s)}$ to
$\ptr{E}{V(\vec s) \triangle A}$, by Eq.~\eqref{eq:ptcompose}. Transposing the
frame wires in $B$ exchanges $P_{+}$ and $P_{-}$ on each of them, that is, it
flips $s_{i}$ for $i \in B'$. Reindexing the sum by the flipped signs gives
\begin{equation}
  \label{eq:ptofencoded}
  \ptr{\xieff(E)}{U}
  \;=\; \sum_{\vec t}
  \ptr{E}{\,V(\vec t) \,\triangle\, (A \triangle B')}
  \otimes \bigotimes_{i} P_{t_{i}} ,
\end{equation}
which is again block diagonal in the same frame basis. As $\vec t$ runs over
all sign patterns, $V(\vec t) \triangle (A \triangle B')$ runs over all subsets
of $\{1,\dots,n\}$, because symmetric difference with a fixed set is a
bijection. The multiset of blocks is thus the same as for $U = \emptyset$,
which is Eq.~\eqref{eq:specinvariance}.

Part (iii) follows. By (i) the operator $\xieff(E)$ is real. By (ii) the conditions
$0 \le \ptr{\xieff(E)}{U} \le \Id$, imposed over all subsets $U$ of the enlarged
wire set, reduce to the single family $0 \le \ptr{E}{V} \le \Id$ over subsets
$V$ of the original wires, which is~\eqref{eq:effrestated}.

Linearity is clear from the definition, and injectivity follows because $E$ is
recovered as the block at $\vec s = (+,\dots,+)$. For unitality,
$\ptr{\Id}{V} = \Id$ for every $V$, so
$\xieff(\Id) = \Id \otimes \sum_{\vec s} \bigotimes_{i} P_{s_{i}}
= \Id \otimes \Id_{R}$.
\end{proof}

Unitality is what allows a measurement to be encoded as a measurement. If $\sum_{k} E_{k} = \Id$ then
$\sum_{k} \xieff(E_{k}) = \Id$, so no normalisation is lost in passing to
$\sTh$. And Eq.~\eqref{eq:specinvariance} says something stronger than the
positivity we asked for. The spectrum of the encoded effect is completely
insensitive to which wires of the enlarged system are transposed, so once
$\xieff(E)$ is positive it automatically satisfies the whole
restriction~\eqref{eq:defeffects} on the enlarged system. The reason is
visible in Eq.~\eqref{eq:ptofencoded}: transposing a system wire and
transposing its frame both act on the sum by symmetric difference, and the
family of blocks is closed under that action by construction.

We can now state what was promised in Sec.~\ref{sec:theory:whynot}. Part (iii)
of the lemma says exactly
\begin{equation}
  \label{eq:preimage}
  \effset \;=\; \xieff^{-1}\bigl( \effset(\sTh) \bigr) ,
\end{equation}
so the effect set of $\Th$ is not an ingredient of the construction but its
preimage. This is the second of the demands that fix
Definition~\ref{def:T}, the first being validity as a generalized probabilistic
theory, which required closure under complementation and thereby the same upper
bound. Neither of these two demands is met by the positive-partial-transpose
condition of~\eqref{eq:pptnaive}, and both are met
by~\eqref{eq:effrestated}. A third demand appears in
Sec.~\ref{sec:sectorial:emax}. The theory
$\Th$ does not have to be adjusted for the encoding to work. The encoding returns it.

\subsection{Encoded operations}
\label{sec:enc:ops}

A box standing between a source and a measurement has to do something about
the frames of the wires passing through it, and its options are narrow. It may read them locally, but nothing it reads may leave the box, since the
outcome would then have to be reconciled with the other parties holding wires
from the same source. And it must not disturb them, since downstream parties
still need them. What it can do is act conditionally, leaving the frames alone and applying, on the systems, the channel appropriate to the sign
pattern they carry.

\begin{definition}[Encoded operations]
\label{def:encop}
For $\Lambda \in \opset$ acting on $n$ wires, write
$\Lambda_{V} := \PT{V} \circ \Lambda \circ \PT{V}$ and let
$\ket{\vec s} = \bigotimes_{w} \ket{y_{s_{w}}}$ denote the frame sign basis.
The encoded operation is
\begin{equation}
  \label{eq:opencoding}
  \tilde\Lambda(X)
  \;=\; \sum_{\vec s} \Lambda_{V(\vec s)}
        \bigl( \bra{\vec s} X \ket{\vec s}_{R} \bigr)
        \otimes \proj{\vec s}_{R} ,
\end{equation}
where $\bra{\vec s} X \ket{\vec s}_{R}$ is the block of $X$ at $\vec s$ in that
basis.
\end{definition}

Each $\Lambda_{V}$ is completely positive, by Definition~\ref{def:T}, and so
admits Kraus operators $\{M_{V,j}\}$. In terms of these,
\begin{equation}
  \label{eq:opkraus}
  \tilde\Lambda(X)
  \;=\; \sum_{\vec s,\,j} K_{\vec s,j} \, X \, K_{\vec s,j}^{\dagger} ,
  \qquad
  K_{\vec s,j} = M_{V(\vec s),j} \otimes \proj{\vec s} .
\end{equation}
This step is where the construction meets the restriction imposed in
Sec.~\ref{sec:theory:def}. The operators $M_{V,j}$ exist
for every $V$ precisely because $\Lambda$ is completely PPT-preserving. Had we
required only that $\Lambda$ itself be completely positive, the sectors with
$V \neq \emptyset$ would supply no Kraus operators and~\eqref{eq:opencoding}
would not define a channel at all.

\begin{lemma}
\label{lem:encops}
Let $\Lambda$ be a trace-preserving map on the system wires. Then
\begin{enumerate}
  \item[(i)] $\tilde\Lambda$ is completely positive if and only if
        $\Lambda \in \opset$, and it is then trace preserving and real;
  \item[(ii)] for every subset $U$ of the wires of the enlarged system,
        $\PT{U} \circ \tilde\Lambda \circ \PT{U}$ is again a direct sum of the
        maps $\Lambda_{V}$ over the same index set, so that
        $\tilde\Lambda \in \opset(\sTh)$ as soon as $\Lambda \in \opset$;
  \item[(iii)] $\tilde\Lambda \circ \Xi = \Xi \circ \Lambda$;
  \item[(iv)] encoding commutes with composition:
        $\widetilde{\Lambda_{2} \Lambda_{1}}
         = \tilde\Lambda_{2} \tilde\Lambda_{1}$, and
        $\widetilde{\Lambda_{1} \otimes \Lambda_{2}}
         = \tilde\Lambda_{1} \otimes \tilde\Lambda_{2}$ for operations on
        disjoint wires.
\end{enumerate}
\end{lemma}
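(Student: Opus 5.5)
The plan is to work throughout in the frame sign basis, in which every object in Definition~\ref{def:encop} is block diagonal. Write $\mathcal{P}_{\vec s}(X) = \proj{\vec s}_R X \proj{\vec s}_R$ for the pinching onto one frame block. Then
\begin{equation}
  \tilde\Lambda \;=\; \sum_{\vec s} \Lambda_{V(\vec s)} \otimes \mathcal{P}_{\vec s}.
\end{equation}
Each of (i)--(iv) reduces to a sector-by-sector statement about the maps $\Lambda_{V}$, combined with Eq.~\eqref{eq:ptcompose}.

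For (i), I would compute the Choi matrix of $\tilde\Lambda$. The Choi matrix of $\mathcal{P}_{\vec s}$ is $\proj{\vec s\,\vec s}$, so
\begin{equation}
  J(\tilde\Lambda) = \sum_{\vec s} J(\Lambda_{V(\vec s)}) \otimes \proj{\vec s\,\vec s}.
\end{equation}
This is a direct sum over orthogonal frame sectors. It is therefore positive if and only if every $J(\Lambda_{V})$ is positive, that is, if and only if $\Lambda_V$ is CP for every $V$, which is~\eqref{eq:defops}. Trace preservation follows from $\Tr\Lambda_V = \Tr$ and $\sum_{\vec s}\mathcal{P}_{\vec s}$ being trace preserving. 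For reality I would check $C\tilde\Lambda C = \tilde\Lambda$. On the system, $C\Lambda_V C = \PT{W}\PT{V}\Lambda\PT{V}\PT{W} = \Lambda_{V^c}$. On the frames, $C$ exchanges $P_\pm$, so $C\,\mathcal{P}_{\vec s}\,C = \mathcal{P}_{-\vec s}$. Since $V(-\vec s) = V(\vec s)^c$, these two substitutions are locked together exactly as in the proof of Lemma~\ref{lem:positivity}(i), and they only permute the terms of the sum.

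For (ii), split $U = A \cup B$ as in Lemma~\ref{lem:positivity} and set $B' = \{ i : R_i \in B\}$. By the label-matching convention, $\PT{U}$ acts on the input and output wires of $\tilde\Lambda$ simultaneously. Two facts then give the result. Conjugating by a frame transpose flips the signs on $B'$ in both $\mathcal{P}_{\vec s}$ factors, so $\PT{B}\mathcal{P}_{\vec s}\PT{B} = \mathcal{P}_{\vec s\,\text{flipped on }B'}$. Conjugating the system part gives $\PT{A}\Lambda_V\PT{A} = \Lambda_{V\triangle A}$. Reindexing yields
\begin{equation}
  \PT{U}\tilde\Lambda\PT{U} = \sum_{\vec t}\Lambda_{V(\vec t)\triangle(A\triangle B')}\otimes\mathcal{P}_{\vec t}.
\end{equation}
This is again a direct sum of the same family of maps $\Lambda_V$, so by the Choi argument of (i) it is CP. Together with reality, this places $\tilde\Lambda$ in $\opset(\sTh)$.

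For (iii), I would first read off from~\eqref{eq:networkenc} the frame blocks of $\Xi(X)$ in the per-wire sign basis. The block at $\vec s$ is $2^{-m}\ptr{X}{V(\vec s)}$ when $\vec s$ is constant on each source, and vanishes otherwise. The map $\tilde\Lambda$ preserves frame blocks and sends the block $\ptr{X}{V}$ to $\Lambda_V(\PT{V}X) = \PT{V}\Lambda(X)$, which is precisely the corresponding block of $\Xi(\Lambda X)$. Here $\Lambda$ acts on a subset of the wires, so I would take $\Lambda\otimes\id$ on the rest and use $(\Lambda\otimes\id)_V = \Lambda_{V\cap\mathrm{dom}}\otimes\id$.

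I expect the main obstacle to be the bookkeeping at this step. It must be shown that the source attribution of the wires, which defines $W_N$ in $\Xi$, is carried from the input to the output wires of $\Lambda$. This is exactly where the label-matching convention and the fact that frames travel with wires are needed. I would state explicitly that an output wire inherits the source label and the frame of the input wire carrying the same label, so that $V(\vec s)$ means the same set on both sides.

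Part (iv) is then routine. Because $\mathcal{P}_{\vec s}\mathcal{P}_{\vec t} = \delta_{\vec s\vec t}\mathcal{P}_{\vec s}$, the composite $\tilde\Lambda_2\tilde\Lambda_1$ is $\sum_{\vec s}\Lambda_{2,V}\Lambda_{1,V}\otimes\mathcal{P}_{\vec s}$. Since $\PT{V}^2=\id$, we have $\Lambda_{2,V}\Lambda_{1,V} = (\Lambda_2\Lambda_1)_V$, which gives the first identity. For parallel composition on disjoint wires, $\PT{V}$ factorizes as $\PT{V_1}\otimes\PT{V_2}$, and both the frame sign basis and the pinching factorize over the two wire sets. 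This gives $\widetilde{\Lambda_1\otimes\Lambda_2} = \tilde\Lambda_1\otimes\tilde\Lambda_2$.
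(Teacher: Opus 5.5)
Your proof is correct and follows the paper's argument step for step. In (ii)--(iv) you use the same sector reindexing $V(\vec t)\triangle(A\triangle B')$, the same commutation of $\PT{W_{N}\setminus V}$ with $\Lambda$ (your identity $(\Lambda\otimes\id)_{V}=\Lambda_{V\cap\mathrm{dom}}\otimes\id$), and the same cancellation $\PT{V}^{2}=\id$. Your Choi-matrix treatment of (i) is the block decomposition of Eq.~\eqref{eq:choiencoded} in Appendix~\ref{app:sectors}, rather than the Kraus-plus-compression argument in the main text, and the source-inheritance convention you flag in (iii) is the label-matching convention of Sec.~\ref{sec:theory:def}, which the paper uses tacitly in writing $\ptr{(\Lambda(X))}{W_{N}}$.
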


\begin{proof}
(i) Sufficiency is~\eqref{eq:opkraus}. For necessity, note that $\Lambda_{V}$
is recovered from $\tilde\Lambda$ as
$Y \mapsto \bra{\vec s} \tilde\Lambda( Y \otimes \proj{\vec s} ) \ket{\vec s}$
with $V = V(\vec s)$, a composition of completely positive maps, so
$\tilde\Lambda$ completely positive forces every $\Lambda_{V}$ to be so. Trace
preservation follows from~\eqref{eq:opkraus} because each $\Lambda_{V}$ is
trace preserving, partial transposition preserving the trace, whence
$\sum_{\vec s, j} K_{\vec s,j}^{\dagger} K_{\vec s,j}
= \sum_{\vec s} \Id \otimes \proj{\vec s} = \Id$. For reality, conjugation
sends $\Lambda_{V}$ to $\PT{W} \Lambda_{V} \PT{W} = \Lambda_{V^{c}}$ and the
frame sector $\vec s$ to $-\vec s$. Since $V(-\vec s) = V(\vec s)^{c}$, the two
relabellings cancel and the sum~\eqref{eq:opencoding} is returned to itself.

(ii) Write $U = A \cup B$ with $A$ a set of system wires and
$B' = \{ w : R_{w} \in B\}$. Transposing a frame wire exchanges $P_{+}$ and
$P_{-}$ on it and leaves the off-diagonal frame blocks in place, so $\PT{U}$
permutes the sectors by $\vec t \mapsto \sigma_{B'}(\vec t)$ while acting on
the systems by $\PT{A}$. Conjugating $\tilde\Lambda$ by $\PT{U}$ therefore
gives a map whose action on the sector $\vec t$ is
\begin{equation}
  \PT{A} \, \Lambda_{V(\vec t) \triangle B'} \, \PT{A}
  \;=\; \Lambda_{V(\vec t)\, \triangle\, (A \triangle B')} ,
\end{equation}
by Eq.~\eqref{eq:ptcompose}. As $\vec t$ runs over all sign patterns the index
$V(\vec t) \triangle (A \triangle B')$ runs over all subsets, so the family of
sector maps is the same for every $U$, and by (i) the whole family is
completely positive exactly when $\Lambda \in \opset$.

(iii) Fix $\vec\epsilon$ and let $V$ be the set of wires held by the box that
belong to sources in $N(\vec\epsilon)$. On that sector $\Xi(X)$ carries
$\ptr{X}{W_{N}}$, and $\tilde\Lambda$ applies $\Lambda_{V}$ to it. Now
$W_{N} = V \cup (W_{N} \setminus V)$ with the second set disjoint from the
wires of the box, so $\PT{W_{N} \setminus V}$ commutes with both $\Lambda$ and
$\PT{V}$, and
\begin{equation}
  \Lambda_{V}\bigl( \ptr{X}{W_{N}} \bigr)
  \;=\; \PT{V} \Lambda \PT{W_{N}\setminus V}(X)
  \;=\; \ptr{\bigl(\Lambda(X)\bigr)}{W_{N}} ,
\end{equation}
which is the sector-$\vec\epsilon$ block of $\Xi(\Lambda(X))$. The sectors that
are not constant on sources carry nothing on either side.

(iv) On the sector $\vec s$ the composite acts by
$\PT{V}\Lambda_{2}\PT{V}\PT{V}\Lambda_{1}\PT{V}
= \PT{V}\Lambda_{2}\Lambda_{1}\PT{V}$, since $\PT{V}^{2} = \id$. For
disjoint sets of wires $\PT{V}$ factorises across them.
\end{proof}

Parts (ii) and (iii) of the lemma are the operational counterparts of parts
(ii) and (iii) of Lemma~\ref{lem:positivity}, and they fail or succeed for the
same reason. In both cases the enlarged system carries two kinds of wire, and
transposing either kind acts on the family of sectors by symmetric difference
with a fixed set. The family is closed under that action, so nothing depends on
which wires are transposed. The same lock is responsible for reality, since conjugation complements the
sector index and the transposition set at the same time, and
does so in Eqs.~\eqref{eq:stateenc}, \eqref{eq:effenc}
and~\eqref{eq:opencoding} alike.

The lemma extends componentwise to instruments. If $\{\Lambda_{k}\}$ are
completely positive and trace non-increasing, each satisfying~\eqref{eq:defops}
and summing to an element of $\opset$, then each $\tilde\Lambda_{k}$ is defined
by~\eqref{eq:opencoding} and the proofs above go through with ``trace
preserving'' weakened to ``trace non-increasing'' throughout. This is what
allows measurements to be performed at intermediate nodes of a causal
structure.

The lemma also closes a question raised in Sec.~\ref{sec:theory:valid}. The
operation set of $\Th$ was there declared to be the completely PPT-preserving
class rather than, say, the channels whose adjoint preserves $\effset$, and the
justification promised was that the encoding produces exactly that condition.
It does: by (i) and (ii) the encoded operation is admissible in $\sTh$ if and
only if every $\Lambda_{V}$ is completely positive, one condition per sector,
with $V$ ranging over all subsets. Nothing weaker will do, and nothing stronger
is asked.

Everything so far has assumed that a wire leaving a box carries the same frame
as the wire entering it. That assumption is not innocuous, and
Sec.~\ref{sec:enc:frames} shows that it is forced.

\subsection{Frame inheritance and extension}
\label{sec:enc:frames}

Three questions about the frames have been deferred and have to be settled
before the encoding can be applied to an arbitrary causal structure. What does
a box do when it holds wires from several sources at once? What frame
does a wire carry if it leaves a box without having entered one? And can a
frame be duplicated at all inside $\sTh$? The answers are forced, and the
second of them is the reason the label-matching convention of
Sec.~\ref{sec:theory:def} was stated as an assumption rather than a
bookkeeping choice.

\paragraph*{Boxes with several inputs need no frame merging.}
A party holding wires from two independent sources holds two frames, and it is
natural to ask whether it must align them. It must not, and it need not. By
Definition~\ref{def:encop} the encoded operation acts on the frame registers as
a measurement in the sign basis followed by a repreparation of the same
outcome, which is the identity on frame-diagonal states, and the encoded
network state~\eqref{eq:networkenc} is frame-diagonal. Explicitly, the frame
marginal of $\tilde\Lambda(X)$ is the sign-basis diagonal of the frame marginal
of $X$, so on encoded inputs it is unchanged. The box therefore leaves $\vec s$ as it found it, and no correlation between
the frames of different sources is created anywhere in the circuit. What the box does jointly process
is a set of registers it already physically holds, which is what a party in the original causal structure does as well.

The price of not merging is the quantification over sectors. Were the two
frames aligned, the sum in~\eqref{eq:opencoding} would collapse to a single
term and the condition on $\Lambda$ would weaken to plain complete positivity.
That this is not available is not a defect of the construction but the content of the problem. A real theory in which all parties share one
frame is complex
quantum theory, which is the observation of McKague, Mosca and
Gisin~\cite{McKague2009} that the
network scenarios were designed to circumvent, and it is forbidden here by
Theorem~3 of Ref.~\cite{Ying2025foil}.

\paragraph*{A fresh output frame would be fatal.}
Suppose instead that a wire leaving a box carried a frame prepared
independently of the frame that entered. The sector index of the output would
then be free of the sector index of the input, and the block of the encoded
operation joining the input sector $\vec s$ to the output sector $\vec t$ would
implement $\PT{V(\vec t)} \circ \Lambda \circ \PT{V(\vec s)}$. Complete
positivity would have to be demanded of every such pair rather than of the
diagonal ones alone. Taking $V(\vec s) = \emptyset$ and $V(\vec t)$ equal to
the full set of output wires, the demand becomes that
$\PT{\rm out} \circ \Lambda$ be completely positive, that is, that the Choi
matrix of $\Lambda$ have positive partial transpose across the cut separating
inputs from outputs. The Choi matrix of a unitary channel is a rank-one
projector onto a vector of Schmidt rank $d$ across that cut, so its partial
transpose has negative eigenvalues whenever $d > 1$. Independent input and
output frames would therefore exclude every unitary from the simulation. They
are not merely wasteful. They empty the construction.

This is a second point at which the present encoding departs from the map of
Ying \emph{et al.}, which twirls the input and output frames independently. For
a symmetry implemented by unitaries the two pictures agree, since a block
$\mathcal{U}_{h} \Lambda \mathcal{U}_{g}$ is completely positive whenever
$\Lambda$ is, and nothing is lost by letting $g$ and $h$ vary independently.
The blocks $\PT{W}\Lambda\PT{V}$ with $W \neq V$ have no such property, and the
difference between a physical and a nonphysical symmetry shows up here in its
most elementary form.

\paragraph*{Frames can be extended inside $\sTh$.}
A wire created by a box must therefore inherit a copy of an existing frame,
aligned with it, and the copy has to be made by an operation of $\sTh$.
It can be. Consider the channel $\mathcal{F}$ with Kraus operators
\begin{equation}
  \label{eq:frameext}
  F_{\pm} \;=\; \ket{y_{\pm} y_{\pm}} \bra{y_{\pm}} ,
\end{equation}
which measures a frame qubit in the sign basis and reprepares two qubits in the
observed sign. It is trace preserving, since
$\sum_{\pm} F_{\pm}^{\dagger}F_{\pm} = P_{+} + P_{-} = \Id$, and it sends
$P_{s}$ to $P_{s} \otimes P_{s}$, which is the required alignment. It is real,
because conjugation sends $F_{\pm}$ to $F_{\mp}$ and so returns the Kraus
family to itself. And it satisfies the restriction~\eqref{eq:defops} on all its
wires: $\mathcal{F}$ is entanglement breaking, being a measurement followed by a
repreparation, and its output is a product, so its Choi matrix is fully
separable and therefore has positive partial transpose across every cut.

The coherent alternative fails, which is what gives the restriction its bite
here. The
isometry sending $\ket{y_{\pm}}$ to $\ket{y_{\pm}y_{\pm}}$ has a rank-one
entangled Choi matrix and is not admissible in $\Th$. Nothing is lost by using
the incoherent copy, because the encoded states are diagonal in the sign basis
by construction and no coherence between sectors is ever called upon. A wire
that is discarded takes its frame with it, and the frames of the surviving
wires of that source remain perfectly correlated among themselves.

The encoding is now complete, and Sec.~\ref{sec:main} assembles its parts into
a statement about arbitrary causal structures. A basis-free description of the
same construction, together with its relation to the standard real-space
simulation, is given in Appendix~\ref{app:poslemma}, and the frame-extension
channel is examined in Appendix~\ref{app:frameext}.

\section{No gap in any causal structure}
\label{sec:main}

\subsection{The main theorem}
\label{sec:main:thm}

Everything needed has been assembled. Sources are replaced according
to~\eqref{eq:stateenc}, measurements according to~\eqref{eq:effenc},
operations according to~\eqref{eq:opencoding}, and a wire created inside the
circuit receives an aligned copy of an existing frame by~\eqref{eq:frameext}.
The one convention that has to be stated as part of the result is that a wire
may carry, alongside its system, a two-dimensional register that originates
with it and travels with it.

\begin{theorem}
\label{thm:main}
Let a causal structure be given, in which each wire is permitted to carry a
reference-frame register emitted by the source of that wire and held by
whichever party holds it. Then
\begin{equation}
  \label{eq:maineq}
  \corr_{\Th} \;=\; \corr_{\sTh}
\end{equation}
in that causal structure. In particular $\Th$ exhibits no
symmetrized--nonsymmetrized causal compatibility gap in any causal structure.
\end{theorem}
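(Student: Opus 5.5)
The inclusion $\corr_{\sTh} \subseteq \corr_{\Th}$ is immediate, since $\sTh$ is a subtheory of $\Th$, so the content is the reverse inclusion. I would prove it by simulation. Fix a $\Th$-protocol in the given causal structure: sources $\rho_{\sigma}$, boxes (instruments $\{\Lambda^{(\alpha)}_{k}\}$ with each component satisfying~\eqref{eq:defops}), and final measurements $\{E^{(\alpha)}_{k}\} \subset \effset$. Replace each ingredient by its encoded version:
\begin{itemize}
\item sources by $\tilde\rho_{\sigma}$ of Definition~\ref{def:encoding};
\item boxes by $\tilde\Lambda^{(\alpha)}_{k}$ of Definition~\ref{def:encop};
\item final effects by $\xieff(E^{(\alpha)}_{k})$;
\item any wire created inside a box by a wire carrying a copy of an existing frame made by the channel $\mathcal{F}$ of~\eqref{eq:frameext}.
\end{itemize}
Admissibility in $\sTh$ and respect of the causal structure are then already established. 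Encoded sources are real density matrices and are emitted only by their own source. Encoded effects lie in $\effset(\sTh)$ by Lemma~\ref{lem:positivity}(iii) and still form measurements by unitality. Encoded instruments lie in $\opset(\sTh)$ by Lemma~\ref{lem:encops}(i),(ii) and its componentwise extension to instruments. $\mathcal{F}$ is real and admissible. Each frame register travels with its wire, so no correlation across independent sources is introduced.

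The core step is equality of statistics. I would organise the circuit as a sequence of layers, each a tensor product of operations on disjoint wires applied to the current global operator $X$, starting from $X_{0} = \bigotimes_{\sigma} \rho_{\sigma}$. The invariant to carry through is that the encoded circuit's operator after each layer equals $\Xi(X)$, where $X$ is the unnormalised, outcome-conditioned operator of the original circuit after that layer. Initially this holds because $\Xi(X_{0}) = \bigotimes_{\sigma} \tilde\rho_{\sigma}$. Each layer preserves it by Lemma~\ref{lem:encops}(iii) combined with (iv): a tensor product of encoded boxes is the encoding of the tensor product, and that encoding intertwines with $\Xi$. Wires not acted on are unaffected, since both $\tilde\Lambda$ and $\Xi$ factorise over wires outside the box. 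At the end, Proposition~\ref{prop:pairing}, which holds for arbitrary Hermitian $X$, gives
\[
\Tr\Bigl[ \bigotimes_{\alpha} \xieff\bigl(E^{(\alpha)}_{k_{\alpha}}\bigr)\, \Xi(X) \Bigr] = \Tr\Bigl[ \bigotimes_{\alpha} E^{(\alpha)}_{k_{\alpha}}\, X \Bigr],
\]
so every joint outcome probability conditioned on settings coincides. Classical settings and outcomes pass through unchanged, since encoding acts outcome by outcome.

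The main obstacle is the bookkeeping of $\Xi$ when wires are created or discarded, because $\Xi$ was defined for a fixed partition $\{W_{\sigma}\}$ of wires by source.

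\emph{Created wires.} I would assign each new wire to the source of the frame it copies and extend $W_{\sigma}$ accordingly. I then need $\mathcal{F}$ (tensored with the box acting on the systems) to intertwine $\Xi$ for the old partition with $\Xi$ for the new one. This holds because $\mathcal{F}$ maps $P_{\epsilon}$ to $P_{\epsilon}^{\otimes 2}$ and encoded operators are frame-diagonal. The sector-$\vec\epsilon$ block $\ptr{X}{W_N}$ then becomes $\Lambda_{V}(\cdot)$ with $V$ including the new output wires labelled $\epsilon_{\sigma}$. This is precisely where the label-matching convention for outputs is used, and the computation mirrors the proof of Lemma~\ref{lem:encops}(iii).

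\emph{Discarded wires.} Partial trace over a wire and its frame commutes with $\Xi$, since $\Tr_{R}P_{\epsilon} = 1$ and partial transposition commutes with partial trace on other wires.

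I would check both claims sector by sector, after which the induction closes and~\eqref{eq:maineq} follows.
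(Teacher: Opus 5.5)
Your proposal is correct and follows the paper's proof essentially step for step: the same replacements of sources, boxes, effects and created frames, the same admissibility citations, propagation of $\Xi(X)$ through the circuit by Lemma~\ref{lem:encops}(iii), and Proposition~\ref{prop:pairing} applied to an arbitrary Hermitian $X$ at the end; your extension of the source partition to created wires and your partial-trace check for discarded wires spell out bookkeeping that the paper treats only informally in Sec.~\ref{sec:enc:frames}. One small correction: $\Xi$ does not factorise over wires when $X$ is entangled, and what actually makes untouched wires harmless is that $\PT{W_{N}\setminus V}$ commutes with the box, as in the proof of Lemma~\ref{lem:encops}(iii).
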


\begin{proof}
The inclusion $\corr_{\sTh} \subseteq \corr_{\Th}$ is immediate, $\sTh$ being a
subtheory of $\Th$ by Definition~\ref{def:sT}.

For the converse, let $P$ be realised in the given causal structure by sources
$\rho_{\sigma}$, operations and instruments drawn from $\opset$, and
measurements drawn from $\effset$. Replace each source $\rho_{\sigma}$ by
$\tilde\rho_{\sigma}$, each operation $\Lambda$ by $\tilde\Lambda$, and each
measurement $\{E_{k}\}$ by $\{\xieff(E_{k})\}$, and let every wire created
inside the circuit acquire a frame through $\mathcal{F}$.

Each replacement is a legitimate object of $\sTh$. The encoded sources are real
density matrices, by the discussion following Definition~\ref{def:encoding}.
The encoded effects are effects of $\sTh$ by Lemma~\ref{lem:positivity}(iii),
and form a measurement because $\xieff$ is unital. The encoded operations are
operations of $\sTh$ by Lemma~\ref{lem:encops}(i) and (ii), and instruments
likewise. And $\mathcal{F}$ is an operation of $\sTh$ by
Sec.~\ref{sec:enc:frames}.

The replacement respects the causal structure. No wire is added, removed or redirected. Each frame register is emitted by the source of its wire and
accompanies that wire wherever it goes, so the pattern of causal dependences is
the one the original circuit had. In particular the frames of two independent
sources are never correlated, and by Sec.~\ref{sec:enc:frames} no operation
either reads or disturbs them.

Finally the statistics agree. Write $X$ for the state on the system wires
immediately before the final measurements, obtained from
$\bigotimes_{\sigma}\rho_{\sigma}$ by the operations of the circuit. Applying
Lemma~\ref{lem:encops}(iii) once for each operation, in order, the encoded
circuit prepares $\Xi(X)$ at the same point. Proposition~\ref{prop:pairing},
which holds for arbitrary Hermitian $X$, then gives
\begin{equation}
  \Tr\Bigl[ \bigotimes_{\alpha} \xieff(E_{\alpha}) \; \Xi(X) \Bigr]
  \;=\; \Tr\Bigl[ \bigotimes_{\alpha} E_{\alpha} \; X \Bigr] ,
\end{equation}
outcome by outcome. Hence $P \in \corr_{\sTh}$.
\end{proof}

One consequence carries more than its two-line proof suggests. The theorem
gives $\corr_{\Th} = \corr_{\sTh}$, and $\sTh$ is a subtheory of
real-amplitude quantum theory by the remark following
Definition~\ref{def:sT}, so in every causal structure, and under the same
convention on enlarged wires, $\corr_{\Th} \subseteq \corr_{\RQT}$. The
containment $\sTh \subseteq \RQT$ was recorded in Sec.~\ref{sec:theory:def}
only as a defensive remark, to the effect that a violation found on the $\Th$
side could not be dismissed as an artefact of the restriction. Combined with
the theorem it becomes a statement about $\Th$ itself. No correlation of $\Th$, in any
causal structure, exceeds what real-amplitude quantum theory can
produce. In particular $\Th$ cannot violate the bilocality bound of Renou
\emph{et al.}, which explains structurally why one should not look for a gap by
exhibiting a correlation of $\Th$ above that bound.

The shortest statement of what this paper shows is that a symmetry of a
generalized probabilistic theory can fail to be completely positive and
nevertheless produce no causal compatibility gap in any causal structure. By
Sec.~\ref{sec:conj:notcp} the conjugation symmetry of $\Th$ is not completely
positive, and Theorem~\ref{thm:main} says that it is undetectable. Read
together with Sec.~\ref{sec:conj:dual}, that locates the reason. What a
correlation experiment can see is governed by $\widehat{\stateset}$ rather than
by $\stateset$, and the symmetry never leaves the former.

\subsection{The scope of the claim}
\label{sec:main:scope}

Three qualifications belong with the theorem, and none of them is a formality.

The convention about enlarged wires is the first. It is not a licence peculiar to the present construction. The simulation used by
Ying \emph{et al.}
to prove that no physical symmetry produces a gap attaches a reference-frame
register to every system in exactly the same way, and their result would fail
without it~\cite{Ying2025foil}. We take the same room and use it differently, attaching one register per
source rather than one per system. A reader who declines the convention rejects both theorems
together.

The second qualification concerns what an exact frame buys. The frame constructed
in Sec.~\ref{sec:enc} is perfectly distinguishable, and hypothesis (H1) of
Sec.~\ref{sec:sectorial:G}, which the general criterion needs, demands exactly
that. If a theory offers only an approximate flag, with
$\Tr[Q_{g}\varphi_{g}] = a < 1$, the positivity half of the construction
survives untouched, since the argument of Lemma~\ref{lem:positivity} needs only
that the $Q_{g}$ be positive and sum to the identity, and not that they be
orthogonal. What fails is the statistics. An approximate frame is a perfect one
followed by an independent classical misreading at rate $1-a$ on each wire, so
the encoded outcome distribution differs from the intended one by at most
$2(1-a^{K})$ in $\ell_{1}$ distance, with $K$ the number of wires. The branch
in which every flag is read correctly carries weight $a^{K}$ and reproduces the
target exactly, the remaining weight $1 - a^{K}$ is nonnegative, and the bound
follows. It is attained when the misread branches avoid the support of the
target, so exactness needs $a = 1$.

The consequence is a weaker conclusion, $\corr_{T}$ contained in the closure of
$\corr_{sT}$ rather than equal to it, and the difference is not formal.
Correlation sets need not be closed, as is known already for the Bell scenario,
so a theory whose symmetrized correlations are merely dense in its own would
exhibit a gap by the definition used here while no experiment could ever
resolve one. For the line of work that begins with the claim that
real-amplitude quantum theory is experimentally falsifiable, density is as good
as equality. We record this because it is the natural failure mode of any
attempt to run the construction in a theory that has no exact flag to write on.

Such theories are constrained more than one might expect. The quality of the
best available flag is a continuous function of the state, so on a compact
state space its supremum is attained, and a theory in which arbitrarily good
flags exist but no exact one must therefore have either an unbounded family of
systems or a state space that is not closed. No restriction of a combinatorial
kind will produce one: a bound on circuit depth, on Schmidt rank, or on the
support of a state will not exclude an exact flag, since on a single qubit the
required state is one gate away.

The third qualification concerns the operation class, and it is sharper than
a mere absence of proof. The restriction imposed in Definition~\ref{def:T} is
demonstrably \emph{not} the weakest one compatible with the effects.

Write $\opset_{\max}$ for the channels whose adjoint preserves $\effset$, which
is the largest class any theory with these states and effects could have.
Dualising, and using $\effset = \mathcal{K} \cap (\Id - \mathcal{K})$ together
with unitality of $\Lambda^{*}$, membership of $\opset_{\max}$ is
$\Lambda(\mathcal{K}^{*}) \subseteq \mathcal{K}^{*}$, that is, preservation of
the sum in Eq.~\eqref{eq:dualcone}. The completely PPT-preserving condition
asks for preservation of every summand separately, which is the observation
recorded after Proposition~\ref{prop:dualcone}, so the containment is clear and
it is strict. A witness on two qubit wires is the replacement channel
$X \mapsto \Tr[X]\,\proj{\phi^{+}}$, whose adjoint sends $E$ to
$\Tr[E \proj{\phi^{+}}]\,\Id$ and therefore preserves $\effset$, while its
Choi matrix has a partial transpose with least eigenvalue $-1/2$. This is the
observation of Sec.~\ref{sec:theory:valid} in another guise. No operation of
$\Th$ prepares $\proj{\phi^{+}}$, and the channel that does witnesses that the operation class is smaller than the effects alone require.

The question can be made precise. The class
$\opset_{\max}$ is not stable under tensor products, since $\effset$ on a
composite is not determined by $\effset$ on the factors, so it is not itself the
answer. The largest valid class, which we write $\opset_{\rm valid}$, is
obtained by demanding preservation of $\mathcal{K}^{*}$ in the presence of an
ancilla. In Choi form, and with the convention of
Appendix~\ref{app:sectors},
\begin{equation}
  \label{eq:opsvalid}
  \opset_{\rm valid} \;=\;
  \bigcap_{A} \PT{A_{\rm in}}\Bigl( \sum_{U} \PT{U_{\rm out}}(\PSD) \Bigr)
  \;\cap\; \PSD ,
\end{equation}
with $A$ ranging over subsets of the input labels and $U$ over subsets of the
output labels. We have checked this characterisation against the definition it
came from rather than only asserting it. Testing directly whether
$\Lambda^{*} \otimes \id$ preserves $\effset$ on the system together with an
ancilla wire sorts the identity, the swap, a depolarising channel and the
replacement channel of the previous paragraph into the class and leaves
{\sc cnot} outside it,
which is what~\eqref{eq:opsvalid} predicts in each case.

Against this the encoding reaches further than
Definition~\ref{def:encop} suggests. A box may relabel the frame sector, and
the statistics then require only that $\Lambda$ decompose as
$\sum_{V} \PT{V} \circ \Xi_{V}$ with each $\Xi_{V}$ completely PPT-preserving.
Call that class $\opset_{2}$. Both of its containments are strict, and each
has a witness. For the lower one, take two qubit wires and the channel $\Lambda$
with Kraus operators
\begin{equation}
  \label{eq:o2witness}
\begin{split}
  &\tfrac{1}{\sqrt2}\ketbra{11}{00} , \;\;
  \tfrac{1}{\sqrt2}\ketbra{00}{11} , \;\;
  \tfrac{1}{\sqrt2}\ketbra{10}{01} , \;\;
  \tfrac{1}{\sqrt2}\ketbra{01}{10} , \\
  &\tfrac{1}{\sqrt2}\bigl( \proj{01} - \proj{10} \bigr) , \;\;
  \tfrac{1}{\sqrt2}\ketbra{01}{\phi^{+}} , \;\;
  \tfrac{1}{\sqrt2}\ketbra{10}{\phi^{-}} ,
\end{split}
\end{equation}
and set $\Xi = \PT{\{1,2\}} \circ \Lambda$, so that
$\Lambda = \PT{\{1,2\}} \circ \Xi$ and the decomposition has the single term
$V = \{1,2\}$. The Kraus operators satisfy
$\sum_{k} L_{k}^{\dagger} L_{k} = \Id$, so $\Lambda$ is a channel, and its
Choi matrix has spectrum $\{0^{9}, (1/2)^{6}, 1\}$. The Choi matrix of $\Xi$
has the same spectrum, and both of the partial transposes that act on the input
and output copies of the same wire have spectrum
$\{0^{7}, (1/4)^{6}, (3/4)^{2}, 1\}$, so $\Xi$ is completely PPT-preserving.
But $\PT{1} \Lambda \PT{1}$ is not completely positive, the corresponding
partial transpose of the Choi matrix of $\Lambda$ having least eigenvalue
$-1/2$ on $(\ket{0000} + \ket{1111})/\sqrt2$. So
$\Lambda \in \opset_{2} \setminus \cpptp$. Since the $L_{k}$ are real,
$\Xi$ is also $\Lambda$ with its input transposed,
$\Xi = \Lambda \circ \PT{\{1,2\}} = \PT{\{1,2\}} \circ \Lambda$, so the two
transpositions may be taken on either side. Such a map exists because an
element of $\opset_{2}$ preserves the intersection
$\bigcap_{S} \PT{S}(\PSD)$, by
$\PT{W}\Lambda(\rho) = \sum_{V} \PT{W \triangle V}(\Xi_{V}(\rho))$, whereas a
completely PPT-preserving map has to preserve each cone $\PT{W}(\PSD)$
separately.

The upper strictness has a one-line proof. If $\Xi$
is completely PPT-preserving then $\PT{A}(\Xi(\Id)) = (\PT{A}\Xi\PT{A})(\Id)$ is
positive for every $A$, so $\Xi(\Id) \in \mathcal{K}$. Since $\mathcal{K}$ is
invariant under partial transposition and closed under addition, every element
of $\opset_{2}$ sends $\Id$ into $\mathcal{K}$. The replacement channel of the
previous paragraph does not. Its output at the unit is
$\Tr[\Id]\,\proj{\phi^{+}}$, whose partial transpose has least eigenvalue $-2$
on two qubit wires, while its Choi matrix $\Id \otimes \proj{\phi^{+}}$ is
fixed by every input-side partial transposition and is positive, so it lies in
$\opset_{\rm valid}$. The obstruction in one line is that a decomposition into
completely PPT-preserving pieces cannot create a state with negative partial
transpose, and $\opset_{\rm valid}$ can.

Comparing~\eqref{eq:opsvalid} with the description of $\opset_{2}$ shows that
the two take a sum and an intersection in opposite orders. In Choi form the
completely PPT-preserving class is
\begin{equation}
  \label{eq:sumint}
  \mathcal{N} \;=\; \bigcap_{W} \PT{W_{\rm in}}\PT{W_{\rm out}}(\PSD) ,
  \qquad
  \opset_{2} \;=\; \Bigl( \sum_{V} \PT{V_{\rm out}}(\mathcal{N}) \Bigr)
  \cap \PSD ,
\end{equation}
the second identity following from
$J(\PT{V} \circ \Xi) = \PT{V_{\rm out}}(J(\Xi))$, which is
Eq.~\eqref{eq:choimismatch} with output set $V$ and empty input set. So $\opset_{2}$
intersects first and sums afterwards, while $\opset_{\rm valid}$
of~\eqref{eq:opsvalid} sums first, forming
$\mathcal{D} = \sum_{U} \PT{U_{\rm out}}(\PSD)$, and intersects afterwards.
The order is what separates them, and both ends of~\eqref{eq:opschain} are
strict, the lower by Eq.~\eqref{eq:o2witness} and the upper by the certificate
above. One caution belongs here. The two
are not the same pair of cones with the operations exchanged, and in particular
$\bigcap_{A} \PT{A_{\rm in}}(\mathcal{D})$ is not contained in $\PSD$, so the
intersection with $\PSD$ in~\eqref{eq:opsvalid} cannot be dropped. On one
input and one output qubit wire, $\PT{\rm out}(\proj{\phi^{+}})$ lies in
$\PT{A_{\rm in}}(\mathcal{D})$ for both $A$ and has least eigenvalue
$-\tfrac12$.

The conclusion of Theorem~\ref{thm:main} does not extend to all of
$\opset_{\rm valid}$, and saying so precisely needs the right formulation. The
requirement that the blocks $\Xi_{V}$ be operations of the theory pins them
down only once the operation class is fixed. Treating the class as the unknown,
the object to look for is the largest
$\opset' \subseteq \opset_{\rm valid}$ such that every
$\Lambda \in \opset'$ admits a decomposition of the above form with blocks
again in $\opset'$. Everything else in the proof of Theorem~\ref{thm:main} is
indifferent to which class this is, so that class, whatever it is, is gap-free.
The class $\opset_{2}$ is the first step of the iteration from below and
$\opset_{\rm valid}$ is the ceiling.

The ceiling is not attained. On two qubit wires drawn from \emph{distinct}
sources, $\bigcap_{A} \PT{A_{\rm in}}(\opset_{\rm valid})$ is not contained in
$\sum_{V} \PT{V}(\opset_{\rm valid})$, so $\opset_{\rm valid}$ is not its own
fixed point and
\begin{equation}
  \label{eq:opschain}
  \cpptp \;\subsetneq\; \opset_{2} \;\subseteq\; \opset^{*}
  \;\subsetneq\; \opset_{\rm valid} ,
\end{equation}
with $\opset^{*}$ the largest gap-free class the technique reaches. The
separation is certified by two explicit operators recorded with the ancillary
material, one in $\bigcap_{A} \PT{A_{\rm in}}(\opset_{\rm valid})$ and one in the dual of the
right-hand side, whose pairing is $-0.0573$. The archive holds those two
together with the twelve operators of the decomposition that supports them.
Thirteen of the fourteen carry an eigenvalue claim apiece, the exception being
the pairing, which is a trace computed from two of them rather than a spectrum.
Table~\ref{tab:numerics} records their number and the worst margin. Every
eigenvalue claim in the certificate is the least eigenvalue of an operator given
in closed form, so no solver enters the verification. The smallest margin is $1.6 \times 10^{-3}$ against residuals of
order $10^{-16}$.

Two wires from distinct sources is not a convenience. The condition the witness
violates is the one indexed by a single input wire, and in the encoding the
index set is the collection of the box's wires belonging to sources in the
negative sector, so a singleton is available only when the box holds wires from
two different sources. With both wires from one source the only nontrivial
index is the full set, and that case is settled by taking $V$ to be the outputs,
which returns the global transpose of the Choi matrix, under which
$\opset_{\rm valid}$ is invariant.

Where the iteration below $\opset_{\rm valid}$ terminates we do not know, but
the strict inclusion is stable wherever it does: the map is monotone, so every
later stage of the iteration excludes the same witness. Two qualifications
attach to the definition rather than to the result. The decomposition considered
here is the frame-diagonal one of Definition~\ref{def:encop}, in which the box
reads the frame in the sign basis and reprepares a sector, and an operation of
the enlarged symmetrized theory with frames entangled across sectors is a larger
class we have not examined. And $\opset'$ as defined asks only for the
decomposition property, whereas validity as a generalized probabilistic theory
also asks for closure under composition. Imposing that can only shrink the
class, so the strict inclusion survives.

The obstruction here is not the one that separated $\opset_{2}$ from
$\opset_{\rm valid}$, and the two witnesses together say something sharper than
either alone. The replacement channel sends $\Id$ out of $\mathcal{K}$ and is
nonetheless decomposable with blocks in $\opset_{\rm valid}$. The present
witness sends $\Id$ into $\mathcal{K}$, with a strictly positive margin, and is
not decomposable. So the condition $\Lambda(\Id) \in \mathcal{K}$ is neither
necessary nor sufficient for membership in the decomposable class
$\sum_{V} \PT{V}(\opset_{\rm valid})$. Note that this is a statement about the
image of the unit and not about $\Lambda(\mathcal{K}) \subseteq \mathcal{K}$,
which is an invariant that every element of $\opset_{2}$ satisfies and which
the present witness has not been tested against.

That the ceiling is not reached does not give a gap. Showing that an
encoding fails is not showing that no encoding succeeds, and an impossibility
argument of the kind Renou \emph{et al.} supply for real-amplitude quantum
theory would be needed, bounding the correlations of the symmetrized theory
from above and exhibiting a correlation of the unsymmetrized theory above the
bound~\cite{Renou2021}. By Sec.~\ref{sec:main:necessary} such an argument cannot live in the
bilocality scenario, which is blind to the operation class. Intermediate
processing is not by itself enough either, and it is worth saying how much more
is needed.

\begin{proposition}
\label{prop:defer}
Suppose that in a causal structure every operation has all of its output wires
delivered to a single party. Then the correlations do not depend on the
operation class at all, and are those of a product measurement on the state the
sources emit.
\end{proposition}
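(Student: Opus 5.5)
Because every operation delivers all its outputs to a single party, I will show that the operation can be absorbed into that party's final measurement. Once every operation is gone, what remains is the sources followed by local measurements, so the class $\opset$ plays no role.

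\textbf{Step 1: localize each operation.} Fix an operation $\Lambda$, possibly an instrument $\{\Lambda_{k}\}$, whose output wires all go to party $\alpha$. In a causal structure an operation sits at a node. Its inputs come from sources or from earlier nodes, and its outputs go only to $\alpha$. I will treat $\alpha$'s processing of those outputs, namely any further operations it performs and its final measurement $\{E_{j}\}$ on the wires it holds, as a single measurement. Passing to the Heisenberg picture as in Eq.~\eqref{eq:heisenbergvalue}, the composite ``apply $\Lambda_{k}$, then measure $E_{j}$ jointly with $\alpha$'s other wires'' equals a single measurement $\{(\Lambda_{k}^{*}\otimes\id)(E_{j})\}$. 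This measurement acts on the input wires of $\Lambda$ together with $\alpha$'s other wires, and its outcome is $(k,j)$. The setting dependence of $\Lambda$ is carried along as part of $\alpha$'s setting. Nothing about the outcome $k$ reaches any other party, because the outputs go only to $\alpha$. If $k$ was classical data handed to $\alpha$, it becomes part of $\alpha$'s outcome.

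\textbf{Step 2: induct down the circuit.} I will order the operations topologically and remove them one at a time, starting from the last. Removing $\Lambda$ moves its input wires onto $\alpha$. Those inputs came either from sources or from earlier operations, whose outputs are delivered to a single party by hypothesis. The new causal structure therefore satisfies the same hypothesis, with one fewer operation. The one bookkeeping point I must check is that an earlier operation feeding $\Lambda$ already delivers all of its outputs to the party that performs $\Lambda$. That party's holdings are then simply relabelled as $\alpha$'s. When no operations remain, the correlations are $\Tr[\bigotimes_{\alpha}E'_{\alpha}\bigotimes_{\sigma}\rho_{\sigma}]$, which is a product measurement across parties on the emitted state.

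\textbf{Step 3: conclude independence from $\opset$.} To make the claim ``independent of the operation class'' precise, I will show that the absorbed effects $E'_{\alpha}$ range over the same set whatever $\opset$ is. In one direction, Proposition~\ref{prop:validgpt}(f) shows that $E'_{\alpha}$ lies in $\effset$. In the other, every measurement is realized with no operation at all. Hence the correlations are exactly the product-measurement correlations of the sources with effects in $\effset$, for every operation class between trivial and $\opset$.

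I expect the main obstacle to be precisely Step 3 for a general $\opset$. For classes larger than $\cpptp$, such as $\opset_{\rm valid}$, I need $\Lambda^{*}\otimes\id$ to map $\effset$ into $\effset$ when ancilla wires are present. That is the defining property of $\opset_{\rm valid}$, and I will check it against Eq.~\eqref{eq:opsvalid}. For an arbitrary class this property must be taken as part of the validity of the theory.
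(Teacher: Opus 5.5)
Your proposal is correct and is essentially the paper's argument. Each operation is deferred to (absorbed into) the party that receives its outputs, the adjoint of the whole circuit then factors party by party, and the effective effects lie in $\effset$ because the operations preserve it (ancillas included) and sum to $\Id$ by unitality; your explicit converse, that every effect is reachable with no operation, spells out the independence from the class that the paper leaves implicit.

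One point in Step~2 should be tightened: move onto $\alpha$ only the wires that fed $\Lambda$, not all of the performing party's holdings, because merging a party that measures something of its own into $\alpha$ would destroy the product structure. The statement needs the reading in which each operation's outputs are followed to the party that finally measures them, which is the paper's closing remark that no party may keep anything correlated with what it sends on. Under that reading, absorbing operations in reverse topological order never changes that final party, so no merge is ever needed.
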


\begin{proof}
An operation whose outputs all reach one party may be deferred to that party,
which applies it on receipt while forwarding untouched whatever it does not
process. By Sec.~\ref{sec:theory:def} the causal structure records which party
holds which wire and not what is done to it, so this leaves the structure
intact. After deferring every operation each one is internal to a party, so the
adjoint of the whole circuit factors as $\bigotimes_{\alpha} \Lambda_{\alpha}^{*}$
and the effective measurement of party $\alpha$ has elements
$\Lambda_{\alpha}^{*}(E_{a|x})$. These lie in $\effset$ because the operation
set preserves it, and they sum to the identity because $\Lambda_{\alpha}^{*}$
is unital, so the effective measurement is a measurement of the theory.
\end{proof}

Arbitrary depth, ancillas, instruments and adaptivity within a party are all
absorbed this way, and so is a relay that receives wires, processes them and
passes the whole batch on. What survives is a single shape: some party must
correlate what it keeps with what it sends on, which is to say it must act as a
source in its own right. Any argument that the operation class matters has to
be mounted there.

\subsection{The bilocality scenario}
\label{sec:main:necessary}

The value of Theorem~\ref{thm:main} is easiest to see against the necessary
conditions that Ying \emph{et al.} derive for a causal structure to admit a
gap, and the bilocality scenario is the natural test case: it is the structure
in which the real--complex distinction was first made
operational~\cite{Renou2021}. We take the scenario exactly as it appears there,
with settings at the two outer parties and none at the central one. A
distribution realisable classically in that structure has the form
\begin{equation}
  \label{eq:bilocalclassical}
  \int\!\!\int q_{1}(\lambda_{1}) q_{2}(\lambda_{2})
  P(a|x,\lambda_{1}) \, P(b|\lambda_{1},\lambda_{2}) \, P(c|z,\lambda_{2}) ,
\end{equation}
with $q_{1}$ and $q_{2}$ independent.

All four of the necessary conditions are met.

Proposition~2 of Appendix~J of Ref.~\cite{Ying2025foil} requires a gap between $\Th$ and
classical probability theory. There is one, and it is exhibited by a
measurement whose elements are products, so that they belong to $\effset$ by
Sec.~\ref{sec:theory:tomo}. Both sources emit $\proj{\phi^{+}}$. The central
party measures its first wire in the eigenbasis of $Z$, obtaining $j$, and then
measures its second wire with a setting that depends on $j$, obtaining $m$. Its
outcome is the pair $b = (j,m)$ and its POVM elements are
$\proj{j} \otimes G_{m|j}$. Alice measures $Z$ and Charlie two \textsc{chsh}
settings. Alice's outcome then agrees with $j$ with certainty, which in any classical
model of~\eqref{eq:bilocalclassical} forces $j$ to be a function of
$\lambda_{1}$ alone. Independence of the sources then makes the conditional
distribution of $\lambda_{2}$ given $j$ equal to its marginal, and what remains
is a local model in which $j$ plays the role of a setting. The \textsc{chsh} value
between $j$ and $z$ is therefore at most $2$ classically, while the quantum
value is $2\sqrt2$. Appendix~\ref{app:bilocal} gives the details, and also
shows that the same measurement violates the bilocality inequality of
Ref.~\cite{Branciard2012} outright, at $2^{1/4}$, once Alice is given a second
setting.

Corollary~2 of Appendix~J of Ref.~\cite{Ying2025foil} requires the causal
structure to be nonalgebraic, which the
bilocality structure is~\cite{Ying2025foil}. Proposition~1 of the same appendix
requires a district
admitting a gap, and here there is only one. The district decomposition cuts the observed
classical wires but never the latent ones, and the two sources and
three parties remain connected through the central party. Proposition~3 of
that appendix
requires that no reference frame state be preparable by all the latent systems
jointly, and Ying \emph{et al.} classify the bilocality scenario as an instance
where this is so, the two sources being independent.

By Theorem~\ref{thm:main} there is nonetheless no gap. More is true, and it
locates $\Th$ on the scale exactly.

\begin{proposition}
\label{prop:noswap}
Let $n$ independent sources emit $\rho_{i}$ on the wires $A_{i}, i$, and let a
central party apply the effect $E$ to the wires $1, \dots, n$, leaving the
outer parties in the unnormalised conditional state $\sigma$. If
$\ptr{E}{S} \ge 0$ for a subset $S$ of the central wires, then
$\ptr{\sigma}{A_{S}} \ge 0$.

In particular, since $\effset$ imposes $\ptr{E}{S} \ge 0$ for every $S$, the
conditional state prepared by $\Th$ is never negative under partial
transposition. For $n = 2$ this is positivity across the cut $A_{1} : A_{2}$,
which is separability exactly when $d_{A_{1}} d_{A_{2}} \le 6$~\cite{HHH1996},
the dimensions of the central wires playing no part. On outer wires of that
size $\Th$ therefore cannot swap entanglement at all, and
$\corr_{\Th} \subsetneq \corr_{\RQT}$ in the bilocality scenario.
\end{proposition}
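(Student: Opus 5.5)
The plan is a duality argument. We test $\ptr{\sigma}{A_{S}}$ against an arbitrary positive operator on the outer wires. We then move every transposition onto whole sources, using that $\PT{}$ is self-adjoint and involutive. This is the same mechanism as in Proposition~\ref{prop:pairing}.

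Write $\rho = \bigotimes_{i}\rho_{i}$ on the wires $A_{1},1,\dots,A_{n},n$, and $\sigma = \Tr_{1\cdots n}\bigl[(\Id_{A}\otimes E)\,\rho\bigr]$. Fix $P \ge 0$ on the outer wires. Then
\begin{equation}
  \Tr\bigl[P\,\ptr{\sigma}{A_{S}}\bigr]
  = \Tr\bigl[\ptr{P}{A_{S}}\,\sigma\bigr]
  = \Tr\bigl[(\ptr{P}{A_{S}}\otimes E)\,\rho\bigr] .
\end{equation}
Insert $\PT{A_{S}\cup S}$ on both factors of the pairing, which leaves the trace unchanged. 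The effect becomes $P\otimes\ptr{E}{S}$, because $\PT{A_{S}}$ acts only on the outer factor and $\PT{S}$ only on $E$. The state becomes $\ptr{\rho}{A_{S}\cup S}$.

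The key observation is that $A_{S}\cup S$ is a union of whole sources, namely the wires $A_{i}, i$ for $i\in S$. So $\ptr{\rho}{A_{S}\cup S} = \bigotimes_{i\in S}\rho_{i}^{\mathsf T}\otimes\bigotimes_{i\notin S}\rho_{i}$ is positive. Together with $P\otimes\ptr{E}{S}\ge0$ this makes the pairing nonnegative. Since $P\ge0$ was arbitrary, $\ptr{\sigma}{A_{S}}\ge 0$.

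The only point needing care is the labelling. The index sets must be matched so that transposing central wire $i$ comes paired with transposing the outer wires $A_{i}$ of the same source. This is exactly the per-source locking, and it is where independence of the sources enters. A joint source would not factor, and its partial transpose over $A_{S}\cup S$ need not be positive.

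For the consequences, take $n=2$ and $S=\{1\}$. Then $\ptr{\sigma}{A_{1}}\ge0$ for every effect of $\Th$, whatever the central dimensions. By the Horodecki criterion~\cite{HHH1996}, when $d_{A_{1}}d_{A_{2}}\le6$ each normalised conditional state $\sigma_{b}/\Tr\sigma_{b}$ is separable. Hence, for every outcome $b$ of the central party, the conditional statistics of the outer parties admit a local model. In particular the CHSH expression between Alice's and Charlie's settings, evaluated conditionally on $b$, is at most $2$.

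In $\RQT$, by contrast, take both sources to emit $\proj{\phi^{+}}$ on qubits and let the central party perform the Bell measurement. All four Bell vectors $\phi^{\pm},\psi^{\pm}$ are real, so this is a legitimate measurement there. It swaps entanglement, leaving the outer qubits in a real Bell state conditional on each $b$. The real CHSH settings of Sec.~\ref{sec:theory:tomo}, with signs adapted to $b$, then give a conditional value of $2\sqrt2$. This correlation therefore lies in $\corr_{\RQT}$ but not in $\corr_{\Th}$. Combined with the inclusion $\corr_{\Th}\subseteq\corr_{\RQT}$ recorded after Theorem~\ref{thm:main}, the inclusion is strict.

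The step I expect to need most attention is this last separation. One must make sure the conditional-CHSH functional is a legitimate witness on the full distribution $P(a,b,c|x,z)$ in the bilocality scenario. It is, because it is a linear functional evaluated outcome by outcome in $b$, and each conditional distribution from a separable state obeys it.
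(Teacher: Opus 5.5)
Your proof is correct and is essentially the paper's. The paper moves $\PT{A_{S}}$ onto whole sources through $\rho_{i}^{\mathsf{T}_{A_{i}}} = (\rho_{i}^{\mathsf T})^{\mathsf{T}_{i}}$, obtaining the operator identity $\ptr{\sigma}{A_{S}} = \Tr_{1\cdots n}\bigl[(\bigotimes_{i\in S}\rho_{i}^{\mathsf T}\otimes\bigotimes_{i\notin S}\rho_{i})(\Id\otimes\ptr{E}{S})\bigr]$, and concludes by diagonalising $\ptr{E}{S}$ where you pair against a test operator $P\ge 0$. It separates $\Th$ from $\RQT$ with the same real Bell-measurement protocol; your adaptation of the CHSH signs to the central outcome $b$ is a detail the paper leaves implicit.

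One caveat, shared with the paper: your strictness argument covers only $d_{A_{1}}d_{A_{2}}\le 6$, whereas $\corr_{\Th}$ fixes no outer dimension. To drop that restriction, note that a PPT state violates no CHSH inequality in any dimension. You may take the settings projective, since the value is linear in each observable. Jordan's lemma then splits the CHSH operator into local $2\otimes 2$ blocks, on which the compressed state remains PPT and hence separable.
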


Two remarks bound that last sentence, and neither is covered by the proof
above. Above that size the conclusion is weaker, and $\Th$ can swap bound
entanglement. Maximally entangled sources on $d \times d$ wires give
$\sigma = E^{\mathsf T}/d^{2}$ exactly, so a central effect $\lambda F$ with
$F$ a positive-partial-transpose entangled state, and $\lambda$ small enough
that $\lambda F$ and $\Id - \lambda F$ both lie in $\effset$, leaves the outer
parties entangled. Such $F$ exist whenever both factors are at least $2$ and
$d_{A_{1}} d_{A_{2}} \ge 8$~\cite{Horodecki1997bound,Bennett1999upb}.

For $n \ge 3$ the qubit case fails as well, and the reason is not dimension.
Take three qubit sources and let $F = (\Id - P)/4$, with $P$ the projector onto
the Shifts unextendible product basis~\cite{Bennett1999upb}. Its four vectors
are real and product, so $\PT{S}(P) = P$ and hence $\PT{S}(F) = F$ for every
$S$, with spectrum $\{0^{4}, (1/4)^{4}\}$. So $F \in \effset$, and the outer
parties are left in $F^{\mathsf T} = F$, which is not fully separable. For
three or more sources, positivity under every partial transposition is the
strongest statement available.

\begin{proof}
Partial transposition on the outer wires $A_{S}$ commutes with the trace over
the central wires, and $\rho_{i}^{\mathsf{T}_{A_{i}}} =
(\rho_{i}^{\mathsf T})^{\mathsf{T}_{i}}$ for each $i \in S$, so moving those
transpositions across the Hilbert--Schmidt pairing onto $E$ gives
\begin{equation}
    \ptr{\sigma}{A_{S}} \;=\;
  \Tr_{1 \cdots n}\Bigl[ \Bigl( \bigotimes_{i \in S} \rho_{i}^{\mathsf T}
  \otimes \bigotimes_{i \notin S} \rho_{i} \Bigr)
  \bigl( \Id \otimes \ptr{E}{S} \bigr) \Bigr] .
\end{equation}
Every $\rho_{i}^{\mathsf T}$ and every $\rho_{i}$ is positive, and
$\ptr{E}{S}$ is positive by hypothesis. Diagonalising $\ptr{E}{S}$ writes the
right-hand side as a positive combination of diagonal blocks of a positive
operator, so $\ptr{\sigma}{A_{S}} \ge 0$.

For the last claim take $n = 2$ and $S = \{1\}$. For two qubits a state with
positive partial transpose is separable~\cite{HHH1996}, so the conditional
state admits a local model and its {\sc chsh} value is at most $2$. Real-amplitude quantum theory does
better. Both sources emit $\proj{\phi^{+}}$, the central party measures in the
Bell basis, whose four vectors are real, and the outer parties use the real
settings of Sec.~\ref{sec:theory:tomo}. The protocol is real throughout, hence
available in $\RQT$, and it leaves the outer parties with a maximally
entangled state conditioned on each outcome, so their {\sc chsh} value is
$2\sqrt2$.
\end{proof}

The containment just recorded therefore tightens to
\begin{equation}
  \label{eq:chain}
  \corr_{\CPT} \;\subsetneq\; \corr_{\Th} \;\subsetneq\; \corr_{\RQT}
  \;\subsetneq\; \corr_{\QT}
\end{equation}
in the bilocality scenario, the first strictness by the argument of
Sec.~\ref{sec:main:necessary}, the second by
Proposition~\ref{prop:noswap} and the third by Ref.~\cite{Renou2021}. The
middle one carries more information than the containment it replaces. It is not
that $\Th$ happens to stay under the real bound, but that it cannot produce a
conditional state with negative partial transpose, and so cannot produce a
distillable one.

A related construction, in which a noncontextuality inequality derived from the
Pusey--Barrett--Rudolph argument is translated into a classical causal
compatibility inequality for the same scenario, has been given
recently~\cite{Ying2026pbr}.

One limitation of this test should be stated, because it is easy to overlook
and it bears on the qualification of Sec.~\ref{sec:main:scope}. The bilocality
scenario contains no intermediate transformations at all, only two state
injections and three measurements, so its correlations are a function of the
states and the effects alone and are entirely insensitive to which operations
the theory admits. The four checks above therefore say nothing about the
restriction on operations, and no separation obtained in this scenario ever
could. Any gap arising from an enlargement of the operation class must live in
a causal structure with intermediate processing.

The conclusion to draw is about the conditions rather than about $\Th$.
Corollary~2 and Propositions~1 and~3 of Ref.~\cite{Ying2025foil} constrain the
causal structure alone, and
the bilocality structure passes all of them while exhibiting a gap for
real-amplitude quantum theory and none for $\Th$. No condition on the causal
structure can therefore decide the question once the theory is allowed to vary,
and for a fixed theory the conditions of Appendix~J of
Ref.~\cite{Ying2025foil}, taken together, are not sufficient.

That the absence of a shared reference frame is not sufficient for a gap is
already recorded in their Corollary~3, so we should be precise about what the
present example adds. The witnesses offered there fail for reasons internal
to the causal structure. The \textsc{pbr} scenario and a further example are
algebraic, and a third splits into two districts each of which is a Bell
scenario. Each of them therefore admits no gap for \emph{any} theory. The present example is of a
different kind. The structure is nonalgebraic, has a single district, and does
admit a gap. It simply does not admit one for $\Th$. What fails to be
sufficient is not sufficient for a fixed theory, and the obstruction is not
visible in the causal structure at all.

\section{A criterion for the absence of a gap}
\label{sec:sectorial}

Nothing in Secs.~\ref{sec:enc} and~\ref{sec:main} is peculiar to complex
conjugation, and little of it is peculiar to $\Th$. This section extracts what
was actually used. The result is a sufficient condition for the absence of a
gap, stated for an arbitrary finite symmetry group, together with an
identification of the largest theory that satisfies it.

\subsection{The \texorpdfstring{$G$}{G}-frame encoding}
\label{sec:sectorial:G}

Let $G$ be a finite group acting on a GPT $T$ by symmetries $\alpha_{g}$, and
let $sT$ denote the invariant subtheory. For a wire $w$ let $\alpha_{g}^{w}$ be
the action on that wire alone, and for a tuple $\vec g$ indexed by wires write
$\alpha_{\vec g} = \bigotimes_{w}\alpha_{g_{w}}^{w}$. The construction of
Sec.~\ref{sec:enc} generalises verbatim once the frame qubit is replaced by a
register carrying the regular representation.

The general form is the following.

\begin{definition}[$G$-frame encoding]
\label{def:Gencoding}
Adjoin to each wire a register $R_{w}$ with orthonormal basis
$\{\ket{g}\}_{g \in G}$. A source emitting $\rho$ on $k$ wires is replaced by
\begin{equation}
  \label{eq:Gstateenc}
  \tilde\rho \;=\; \frac{1}{|G|} \sum_{g \in G}
  \alpha_{g}^{\otimes k}(\rho) \otimes \proj{g}^{\otimes k} ,
\end{equation}
a party holding $n$ wires and measuring $E$ measures
\begin{equation}
  \label{eq:Geffenc}
  \xieff(E) \;=\; \sum_{\vec g} \alpha_{\vec g}(E)
  \otimes \bigotimes_{w} \proj{g_{w}} ,
\end{equation}
and an operation $\Lambda$ is replaced by the map acting on the sector
$\vec g$ as $\alpha_{\vec g} \circ \Lambda \circ \alpha_{\vec g}^{-1}$.
\end{definition}

In Definition~\ref{def:Gencoding} the group element is recorded classically and
identically on all wires of a source, and is neither read nor disturbed by any box, exactly as in
Sec.~\ref{sec:enc:frames}. Taking $G = \Z_{2}$ with $\alpha$ the conjugation
returns Definition~\ref{def:encoding}: the regular representation of $\Z_{2}$
is carried by a qubit, the two basis states being $\ket{y_{\pm}}$, and
$\alpha_{\vec g}$ is $\PT{V}$ with $V$ the set of wires carrying the nontrivial
element.

Four hypotheses are needed, and the fourth is not implied by the third.
\begin{enumerate}
  \item[(H1)] $T$ contains a system carrying a perfectly distinguishable
        $|G|$-element flag that the symmetry permutes by the regular action.
        Explicitly, there are states $\{\varphi_{g}\}$ and a measurement
        $\{Q_{g}\}$ of $T$ with $\alpha_{h}\varphi_{g} = \varphi_{hg}$,
        $\alpha_{h}Q_{g} = Q_{hg}$ and
        $\Tr[Q_{g}\varphi_{g'}] = \delta_{gg'}$;
  \item[(H2)] the frame state $|G|^{-1}\sum_{g} \proj{g}^{\otimes k}$,
        perfectly correlated across the $k$ wires of a source, is a state
        of $sT$;
  \item[(H3)] $\effset(T)$ and $\opset(T)$ are invariant under every
        $\alpha_{\vec g}$;
  \item[(H4)] the encoded effects and operations satisfy the restrictions of
        $T$ on the \emph{enlarged} systems, wires and frames together.
\end{enumerate}
Hypothesis (H1) is weaker than requiring a system that carries the regular
representation with unrestricted effects on it, and the difference matters. Of
its two demands, the dimension is nearly free and the freedom from restriction
is what binds. Indeed the second demand alone settles the first.

\begin{proposition}
\label{prop:h1free}
Let $G = \Z_{2}$ act on a system of $T$ by a symmetry $\alpha$ that is
nontrivial on its state space, let that state space be compact, convex and
closed under $\alpha$, and suppose the theory satisfies the no-restriction
hypothesis on that system, so that its effects are all of
$\stateset^{\vee}$. Then {\rm (H1)} holds.
\end{proposition}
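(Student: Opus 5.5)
The plan is to build the flag directly from an antisymmetric linear functional, using compactness to make the two flag states extremal for it and the no-restriction hypothesis to admit the resulting effect. Write the group as $\{e,\alpha\}$ with $\alpha^{2}=\id$. Since $\alpha$ is a symmetry, it acts linearly on the span of $\stateset$, preserves normalisation (it preserves the pairing with the unit effect) and maps $\stateset$ onto itself. Effects transform by $E \mapsto E\circ\alpha$, which for an involution is the same as $E\circ\alpha^{-1}$. The task is to produce states $\varphi_{+},\varphi_{-}=\alpha\varphi_{+}$ and a measurement $\{Q_{+},Q_{-}\}$ with $Q_{-}=\Id-Q_{+}=Q_{+}\circ\alpha$ and $\Tr[Q_{g}\varphi_{g'}]=\delta_{gg'}$. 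These are exactly the covariance and distinguishability conditions of (H1) for $\Z_{2}$.

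\emph{Step 1: an antisymmetric functional.} Because $\alpha$ is nontrivial on $\stateset$, there is a state $\omega$ with $\alpha\omega\neq\omega$. Choose any linear functional $\ell$ on the span of $\stateset$ with $\ell(\omega)\neq\ell(\alpha\omega)$; the dual of a finite-dimensional space separates points, so such an $\ell$ exists. Set
\begin{equation}
  f(x) \;:=\; \ell(x)-\ell(\alpha x),
\end{equation}
so that $f\circ\alpha=-f$ and $f(\omega)\neq 0$.

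\emph{Step 2: compactness gives a symmetric range.} $f$ is continuous and $\stateset$ is compact, so $M:=\max_{\stateset} f$ is attained at some state $\varphi_{+}$. The identity $f(\alpha x)=-f(x)$, together with $\alpha\stateset=\stateset$, makes the range of $f$ on $\stateset$ symmetric about $0$. Hence $M\ge|f(\omega)|>0$, and the minimum $-M$ is attained at $\varphi_{-}:=\alpha\varphi_{+}$.

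\emph{Step 3: the effect, admitted by no-restriction.} Define the affine functional
\begin{equation}
  Q_{+} \;:=\; \tfrac12\bigl(u + f/M\bigr),
\end{equation}
where $u$ is the unit effect, and set $Q_{-}:=u-Q_{+}$. On $\stateset$, $Q_{+}$ takes values in $[0,1]$, so $Q_{+}\in\stateset^{\vee}$, which by the no-restriction hypothesis is the effect set; $Q_{-}$ is then an effect by complementation. We get $Q_{+}(\varphi_{+})=1$ and $Q_{+}(\varphi_{-})=0$, hence the required Kronecker delta. Since $u\circ\alpha=u$ and $f\circ\alpha=-f$, we also get $Q_{+}\circ\alpha=Q_{-}$. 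This is the regular-action covariance on effects, and on states it holds by construction since $\alpha\varphi_{\pm}=\varphi_{\mp}$.

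The main obstacle is bookkeeping rather than substance. One must check that $\alpha$ fixes the unit effect, so that $u+f/M$ is normalised correctly. This follows because a symmetry preserves the pairing and maps normalised states to normalised states. One must also check that the covariance convention of (H1), namely $\alpha_{h}Q_{g}=Q_{hg}$, matches the Heisenberg action $E\mapsto E\circ\alpha^{-1}$. For $\Z_{2}$ these coincide because $\alpha$ is an involution. Compactness of $\stateset$ is used only in Step 2, to attain the maximum. Without it one would get only an approximate flag, which is the failure mode discussed in Sec.~\ref{sec:main:scope}.
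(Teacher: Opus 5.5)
Your proof is correct and follows the paper's argument essentially step for step. You antisymmetrise a functional separating $\omega$ from $\alpha\omega$, and use compactness together with $\alpha\stateset=\stateset$ to get a maximum $M>0$ whose negative $-M$ is the minimum. You then take $Q_{+}=\tfrac12(u+f/M)$, admitted by the no-restriction hypothesis, with the maximiser and its image as the flag states, and your explicit checks that $\alpha$ fixes the unit effect and that the $\Z_{2}$ covariance convention is unambiguous are details the paper leaves implicit.
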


\begin{proof}
Pick $\rho$ with $\alpha(\rho) \ne \rho$ and a linear functional $g$
separating them, and set $f = (g - g \circ \alpha)/2$, so that
$f \circ \alpha = -f$ and $f \ne 0$. Let $M = \max_{\stateset} f$, finite by
compactness and positive because $f$ takes both signs, the state space being
closed under $\alpha$. For the same reason $\min_{\stateset} f = -M$. Writing
$F$ for the operator representing $f$, put
$Q = \tfrac12 \Id + F/2M$. Then $\langle Q, \rho \rangle \in [0,1]$ for every
state, so $Q \in \stateset^{\vee}$ and $Q$ is an effect by hypothesis, and
$\alpha(Q) = \Id - Q$. Taking $\varphi$ to be a maximiser of $f$ gives
$\langle Q, \varphi \rangle = 1$ and
$\langle Q, \alpha(\varphi) \rangle = 0$, which is (H1) with
$\varphi_{\pm}$ the maximiser and its image and $Q_{\pm} = Q, \Id - Q$.
\end{proof}

The flag $Q$ so produced need not be positive. It lies in the dual cone without
lying in the unit order interval. That is the whole content of the
proposition. A theory can fail (H1) only by restricting its effects, and the
question of when an exact flag exists is a question about the effect set and not
about the states. In particular the criterion that the state space meet the
locus $\langle \varphi, \bar\varphi \rangle = 0$ is a criterion for a
\emph{positive} flag, and is equivalent to (H1) only when the effects are
required to be positive. For $G = \Z_{2}$ the hypothesis asks only for a real
antisymmetric operator with trivial kernel, a complex structure up to scale,
whose spectral projectors then
satisfy $P_{+} + P_{-} = \Id$ and $P_{+}^{\mathsf T} = P_{-}$ automatically. A
qubit supplies one, namely $iY$. So does a continuous-variable mode, since the
momentum operator obeys $\hat p^{\mathsf T} = -\hat p$ in the position basis
and vanishes only on a null set, and the frame states may be taken to be any
wave packets supported on a half of momentum space. Nothing here is approximate. Definition~\ref{def:T} is stated in finite
dimension, so this is an indication that a continuous-variable counterpart of
$\Th$ should exist rather than a construction of one.

Hypothesis (H3) concerns a system in isolation and (H4) a composite, and in
general the second does not follow from the first, since $\xieff(E)$ is a
specific block-diagonal operator on wires and frames whose status is not
settled by the status of its blocks as effects of the factors. There is
nevertheless a broad circumstance in which the implication does hold.

\begin{proposition}
\label{prop:h4free}
Suppose the effect set of $T$ on every system has the form
$\mathcal{C} \cap (\Id - \mathcal{C})$ for a convex cone $\mathcal{C}$ that is
invariant under the action of $G$ and closed under tensor products, meaning
that $X \in \mathcal{C}_{A}$ and $Y \in \mathcal{C}_{B}$ imply
$X \otimes Y \in \mathcal{C}_{AB}$. Then {\rm (H1)} and {\rm (H3)} imply
{\rm (H4)}.
\end{proposition}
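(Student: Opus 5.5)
The plan is to check (H4) separately for effects and for operations, in each case by writing the encoded object as a sum of pieces that visibly lie in the relevant cone. On the enlarged system the effect set is $\mathcal{C}_{\rm enl}\cap(\Id-\mathcal{C}_{\rm enl})$, so for an encoded effect $\xieff(E)$ it suffices to show two memberships, $\xieff(E)\in\mathcal{C}_{\rm enl}$ and $\Id-\xieff(E)\in\mathcal{C}_{\rm enl}$. The second reduces to the first: $\xieff$ is linear and unital, by the same computation as in Lemma~\ref{lem:positivity}, since each $\alpha_{g}$ is unital. Hence
\begin{equation*}
\Id-\xieff(E)=\xieff(\Id-E),
\end{equation*}
and $\Id-E$ is again an effect of $T$. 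So the whole effect half comes down to a single claim: $\xieff$ maps $\mathcal{C}$ into $\mathcal{C}_{\rm enl}$, at least on $\effset(T)$.

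To prove that claim, expand $\xieff(E)=\sum_{\vec g}\alpha_{\vec g}(E)\otimes\bigotimes_{w}\proj{g_{w}}$, where the frame factors are to be read as the flag effects $Q_{g_{w}}$ supplied by (H1). Each summand is a tensor product of two pieces. The first factor, $\alpha_{\vec g}(E)$, lies in $\mathcal{C}$ on the system wires, since $E\in\mathcal{C}$ and $\mathcal{C}$ is invariant under every $\alpha_{\vec g}$; this is (H3) for effects, or directly the assumed $G$-invariance of the cone. The second factor, $\bigotimes_{w}Q_{g_{w}}$, lies in $\mathcal{C}$ on the frames: each $Q_{g}$ is an effect of $T$ and so lies in $\mathcal{C}$, and closure under tensor products carries this to the product. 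The product of the two factors is then in $\mathcal{C}_{\rm enl}$ by tensor closure once more. Finally, the sum of the summands is in $\mathcal{C}_{\rm enl}$ because $\mathcal{C}$ is a convex cone. Note that the bound $\le\Id$ was never needed summand by summand; it was handled globally through unitality.

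For operations I would argue through the Heisenberg picture, as in Proposition~\ref{prop:validgpt}(f). The encoded operation acts on sector $\vec g$ as $\alpha_{\vec g}\Lambda\alpha_{\vec g}^{-1}$, and by (H3) this sector map is an operation of $T$. Its adjoint therefore preserves $\effset(T)$, and hence the cone $\mathcal{C}$ as well. The encoded adjoint sends a block-diagonal cone element $\sum_{\vec g}X_{\vec g}\otimes\bigotimes Q_{g_{w}}$ to a sum of terms of the same form. Whatever restriction the operation class is defined by in terms of $\mathcal{C}$ should then follow termwise, again by tensor closure together with the frame-diagonal structure of the encoding.

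The step I expect to be the obstacle is justifying the block decomposition on the enlarged system. The reduction to orthogonal frame sectors relies on $\Tr[Q_{g}\varphi_{g'}]=\delta_{gg'}$ from (H1), and this is a statement about pairings with flag states, not about the $Q_{g}$ being orthogonal projectors. I would therefore avoid using spectral block diagonality at all and rely only on cone membership of each summand, as above; that route needs no more than (H1), (H3), invariance of the cone and tensor closure.
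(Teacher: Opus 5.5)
Your treatment of the effects is the paper's argument. Each summand of $\xieff(E)$ lies in $\mathcal{C}$ on the enlarged system by (H3), (H1) and tensor closure, and the sum lies there by convexity. The upper bound comes from $\Id-\xieff(E)=\xieff(\Id-E)$, which spells out the unitality that the paper's ``replacing $E$ by $\Id-E$'' leaves tacit.

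The gap is in the operations half. (H4) requires $\tilde\Lambda$ itself to belong to $\opset(T)$ on the enlarged system, and you discard the information that delivers this. (H3) tells you that each sector block $\alpha_{\vec g}\Lambda\alpha_{\vec g}^{-1}$ is an element of $\opset(T)$, but you keep only the consequence that its adjoint preserves $\effset(T)$.

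That consequence is strictly weaker than membership, already in the paper's own example. The replacement channel $X\mapsto\Tr[X]\,\proj{\phi^{+}}$ has an adjoint preserving $\effset$ and is nevertheless not completely PPT-preserving (Sec.~\ref{sec:main:scope}). You also test the encoded adjoint only on frame-diagonal elements $\sum_{\vec g}X_{\vec g}\otimes\bigotimes_{w}Q_{g_{w}}$. A restriction on $\tilde\Lambda$ concerns all inputs, including ones coherent across sectors or correlated with an ancilla. So even if completed, your route certifies at best that $\tilde\Lambda$ is compatible with the effects. That is the relevant restriction only for an operation class defined that way, and the sentence ``should then follow termwise'' is exactly where the proof is missing.

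The paper's ``same argument sector by sector'' reruns the effect argument on the operations themselves, in Choi form. As in Eq.~\eqref{eq:choiencoded}, $J(\tilde\Lambda)$ is a sum over sectors of two factors:
\begin{itemize}
\item $J(\alpha_{\vec g}\Lambda\alpha_{\vec g}^{-1})$, which lies in the operation cone by (H3);
\item a product of frame projectors, which every partial transposition keeps positive.
\end{itemize}
Tensor closure and convexity of that cone then conclude. For $\Th$ the cone is $\mathcal{N}=\bigcap_{W}\PT{W_{\rm in}}\PT{W_{\rm out}}(\PSD)$ of Eq.~\eqref{eq:sumint}. This is $\mathcal{K}$ with the input and output copies of each wire treated as one wire, so it has exactly the structure assumed of $\mathcal{C}$. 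In general the hypothesis speaks only of effects, so you must assume, as the paper tacitly does, that the operation class carries that structure too.
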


\begin{proof}
Each term of $\xieff(E) = \sum_{\vec g} \alpha_{\vec g}(E) \otimes
\bigotimes_{w} P_{g_{w}}$ has its first factor in $\mathcal{C}$ by (H3) and its
second in the frame's cone by (H1), so tensor-closure puts the term in
$\mathcal{C}$ and convexity puts the sum there. Replacing $E$ by $\Id - E$
gives the upper bound. The same argument applies to the encoded operations
sector by sector.
\end{proof}

Lemmas~\ref{lem:positivity}(ii) and~\ref{lem:encops}(ii) are the case
$\mathcal{C} = \mathcal{K}$ of this, verified directly because the manuscript
reaches them before the general statement is available. The proposition also
says where the implication genuinely fails, which is when the effect set is not
of that shape or the cone is not closed under tensor products.

\subsection{Sectorial closure suffices}
\label{sec:sectorial:crit}

\begin{definition}[Sectorial closure]
\label{def:sectorial}
The effect set and operation set of $T$ are \emph{sectorially closed} under $G$
if they are invariant under $\alpha_{\vec g}$ for every tuple $\vec g$ and
every number of wires, that is, if (H3) holds.
\end{definition}

\begin{theorem}
\label{thm:sectorial}
Let $G$ act on $T$ by symmetries and suppose {\rm (H1)--(H4)}. Then
$\corr_{T} = \corr_{sT}$ in every causal structure, subject to the convention
on enlarged wires of Theorem~\ref{thm:main}.
\end{theorem}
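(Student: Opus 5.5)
The plan is to repeat the proof of Theorem~\ref{thm:main} with the frame qubit replaced by the register of Definition~\ref{def:Gencoding}, and to check at each step which of (H1)--(H4) supplies the ingredient that Lemmas~\ref{lem:positivity} and~\ref{lem:encops} gave for conjugation. The inclusion $\corr_{sT}\subseteq\corr_{T}$ is immediate because $sT$ is a subtheory. For the converse, take a $T$-protocol with sources $\rho_\sigma$, operations $\Lambda$ and measurements $\{E_k\}$ in a causal structure with $m$ sources. Replace each piece by its encoded counterpart, and realise the classical register $\ket{g}$ on each wire by the flag $\varphi_g$ of (H1), reading it with $\{Q_g\}$. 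The first step is the network identity that generalises~\eqref{eq:networkenc}: $\bigotimes_\sigma\tilde\rho_\sigma=\Xi(\bigotimes_\sigma\rho_\sigma)$ with
\begin{equation}
  \Xi(X)=|G|^{-m}\sum_{\vec h\in G^{m}}\alpha_{\vec g(\vec h)}(X)\otimes\bigotimes_{w}\varphi_{h_{\sigma(w)}},
\end{equation}
where $\vec g(\vec h)$ assigns $h_\sigma$ to every wire of source $\sigma$. This comes from expanding the product of the source encodings.

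The second step checks that the encoded objects belong to $sT$. Invariance of $\tilde\rho$ holds because $\alpha_h$ sends the term at $g$ to the term at $hg$, using $\alpha_h\varphi_g=\varphi_{hg}$. The same reindexing $\vec g\mapsto h\vec g$ makes $\xieff(E)$ and the encoded operations invariant, since $\alpha_h\alpha_{\vec g}=\alpha_{h\vec g}$ componentwise. For this one needs that the global action on a composite is the tensor product of the wire actions, which is implicit in writing $\alpha_{\vec g}$.
\begin{itemize}
  \item[] That $\tilde\rho$ is a state of $T$ follows from (H2) together with the fact that each $\alpha_g^{\otimes k}(\rho)$ is a state, $\alpha_g$ being a symmetry. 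I would present it as the image of the (H2) frame state under a conditional preparation, or argue it directly as a mixture of products of states in the flag sectors.
  \item[] That $\xieff(E)$ and the encoded operations satisfy the restrictions of $T$ on the enlarged system is precisely (H4). That their blocks $\alpha_{\vec g}(E)$ and $\alpha_{\vec g}\Lambda\alpha_{\vec g}^{-1}$ lie in $\effset(T)$ and $\opset(T)$ is (H3).
  \item[] Unitality of $\xieff$, and trace preservation of the encoded operations, follow from $\sum_g Q_g=\Id$ and unitality of each $\alpha_{\vec g}$.
\end{itemize}
Frame extension for wires created inside a box uses the measure-and-reprepare channel $Q_g\mapsto\varphi_g\otimes\varphi_g$. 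It is $G$-covariant by (H1). I would argue that it is an operation of $sT$ in the same way as $\mathcal F$ in Sec.~\ref{sec:enc:frames}, and absorb any remaining gap into (H4).

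The third step is the intertwining $\tilde\Lambda\circ\Xi=\Xi\circ\Lambda$, proved sector by sector as in Lemma~\ref{lem:encops}(iii). On the sector $\vec h$ the box applies $\alpha_{\vec g}\Lambda\alpha_{\vec g}^{-1}$, where $\vec g$ is the restriction of $\vec g(\vec h)$ to the box's wires. The factors of $\alpha_{\vec g(\vec h)}$ on wires outside the box commute with $\Lambda$, so the result is $\alpha_{\vec g(\vec h)}(\Lambda(X))$. Orthogonality of the flags guarantees that the box reads the sector exactly and leaves it undisturbed.

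The last step is the pairing identity generalising Proposition~\ref{prop:pairing}. Pairing $\bigotimes_\alpha\xieff(E_\alpha)$ with $\Xi(X)$ gives frame factors $\Tr[Q_{g_w}\varphi_{h_{\sigma(w)}}]=\delta$, so only patterns $\vec g$ that are constant on each source and equal to $\vec h$ survive. Each surviving term is $\langle\alpha_{\vec g}(\bigotimes E_\alpha),\alpha_{\vec g}(X)\rangle=\langle\bigotimes E_\alpha,X\rangle$, because a symmetry preserves the pairing. There are $|G|^m$ such terms, each weighted by $|G|^{-m}$.

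The main obstacle is the bookkeeping that turns ``$\alpha_{\vec g}$ preserves the pairing'' into a statement about sectorial actions on arbitrary wire tuples. The symmetry axiom gives this only for the global $\alpha_g$, so for $\alpha_{\vec g}$ I would derive it from the factorisation $\alpha_{\vec g}=\bigotimes_w\alpha^w_{g_w}$ and tomographic locality, checking the identity on product effects and states. A second, smaller point is making sure the frame-extension channel is genuinely covered by (H4) and is not an extra assumption.
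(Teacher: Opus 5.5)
Your proposal is correct and follows the same route as the paper. The paper's proof likewise reruns the argument of Theorem~\ref{thm:main} and assigns each ingredient to a hypothesis: invariance by averaging over $G$, membership of the encoded source via (H1) and (H2), of the blocks via (H3) and of the enlarged objects via (H4), inheritance of frames, and agreement of statistics from orthogonality of the flags together with invariance of the pairing. Two of your steps go beyond the paper's sketch: you derive pairing-invariance for the sectorial $\alpha_{\vec g}$ from the factorisation $\alpha_{\vec g}=\bigotimes_{w}\alpha^{w}_{g_{w}}$, and you note that admissibility of the frame-extension channel is not literally supplied by (H1)--(H4). The paper passes over both points, citing only the global symmetry and saying that frame inheritance depends only on the group structure.
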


\begin{proof}
The argument of Sec.~\ref{sec:main:thm} uses only the following. That the
encoded source is a state of $sT$: it is invariant by construction, being an
average over $G$, and it is a state of $T$ by (H1) and (H2). That the encoded
effects and operations are admissible in $sT$: invariance again holds by
construction, membership in $T$ on a single system by (H3), and membership on
the enlarged systems by (H4). That the frames are inherited rather than issued
afresh, which is the content of Sec.~\ref{sec:enc:frames} and depends only on
the group structure. And that the statistics agree, which follows from the
orthogonality of the frame basis together with the identity
$\langle \alpha_{g}(E), \alpha_{g}(\rho) \rangle = \langle E, \rho \rangle$,
valid because $\alpha_{g}$ is a symmetry, exactly as in the proof of
Proposition~\ref{prop:pairing}. The converse inclusion is again immediate.
\end{proof}

\subsection{Weak nonphysicality and the maximal theory}
\label{sec:sectorial:emax}

\begin{table}[t]
\caption{\label{tab:classification}Where a gap can live. The rows are the
trichotomy of Sec.~\ref{sec:prelim:trichotomy}, which by
Proposition~\ref{prop:emax} is the question of whether $\effset$ lies inside
$\effmax$. The columns are sectorial closure,
Definition~\ref{def:sectorial}. The upper left entry is impossible rather than
unknown. Sectorial closure puts $\effset$ inside $\effmax$ by
Corollary~\ref{cor:emaxclosed}, and it also puts
$\alpha_{\vec g}\Lambda\alpha_{\vec g}^{-1}$ in $\opset$ for every $\Lambda$
and every $\vec g$, so both hypotheses of Proposition~\ref{prop:emax} hold
and the symmetry is at worst weakly nonphysical.}
\centering
\footnotesize
\begin{tabular}{p{0.20\columnwidth}p{0.27\columnwidth}p{0.27\columnwidth}}
\hline\hline
 & sectorially closed & not closed \\ \hline
 strongly nonphysical
   & impossible \par Cor.~\ref{cor:emaxclosed}
   & gap conjectured \par Ref.~\cite{Ying2025foil} \\[3pt]
 weakly nonphysical
   & no gap \par Thm.~\ref{thm:sectorial}
   & open \par Sec.~\ref{sec:sectorial:not} \\
\hline\hline
\end{tabular}
\end{table}

Sectorial closure is a condition on the theory, not on the symmetry, and it is
natural to ask how it relates to the trichotomy. The comparison is cleanest in
terms of the largest effect set compatible with a given state space. Write
$\stateset^{\vee}$ for the set of all operators pairing admissibly with every
state of $T$, that is, the effect set $T$ would have if it satisfied the
no-restriction hypothesis, and put
\begin{equation}
  \label{eq:emaxdef}
  \effmax \;:=\; \bigl\{\, E \;:\; \alpha_{\vec g}(E) \in \stateset^{\vee}
  \ \ \forall \vec g \,\bigr\} ,
\end{equation}
the tuples $\vec g$ ranging over every number of wires.

\begin{proposition}
\label{prop:emax}
Suppose $\effset(T) \subseteq \effmax$ and that
$\alpha_{\vec g} \Lambda \alpha_{\vec g}^{-1}$ is completely positive for every
$\Lambda \in \opset(T)$ and every $\vec g$. Then $G$ is at worst weakly
nonphysical. Conversely, if $G$ is weakly nonphysical then
$\effset(T) \subseteq \effmax$.
\end{proposition}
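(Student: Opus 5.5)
The forward direction generalises the push-through of Proposition~\ref{prop:weaklynonphysical} from $\Th$ to an arbitrary $T$. Take a closed circuit with a state $\rho \in \stateset$, operations $\Lambda_{j} \in \opset(T)$, a measurement $\{E_{k}\} \subset \effset(T)$, and partial symmetry actions $\alpha_{\vec g}$ inserted at arbitrary points on arbitrary wires. The inserted actions are partial: on wires not acted on, the tuple entry is the identity element. The aim is to rewrite $p(k)$ as $\langle \alpha_{\vec h}(E_{k}), \Phi(\rho) \rangle$ for one tuple $\vec h$ and a map $\Phi$ built from completely positive maps only.

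The rewriting is done by commuting every inserted $\alpha$ to the end of the circuit. Each time an $\alpha_{\vec g}$ passes an operation $\Lambda$, I use
\[
  \Lambda \circ \alpha_{\vec g} = \alpha_{\vec g} \circ \bigl(\alpha_{\vec g}^{-1} \Lambda \alpha_{\vec g}\bigr).
\]
The bracketed map is completely positive by hypothesis, taking the tuple $\vec g^{-1}$. Wires not touched by $\Lambda$ commute trivially, because $\alpha_{\vec g}$ factorises wirewise. Consecutive actions merge through $\alpha_{\vec g}\alpha_{\vec g'} = \alpha_{\vec g\vec g'}$.

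Once all actions sit at the end, $\Phi(\rho)$ is obtained from $\rho$ by completely positive, trace-preserving maps. I will check that these are CP maps on the given composite, including any idle wires, so that $\Phi(\rho)$ is a genuine density matrix and hence a state, the states being all those the theory admits. Then $p(k) = \langle \alpha_{\vec h}^{*}(E_{k}), \Phi(\rho)\rangle$, where $\alpha_{\vec h}^{*}$ is the dual action, which equals $\alpha_{\vec h^{-1}}$ on effects because symmetries preserve the pairing. Since $E_{k} \in \effmax$, this effect lies in $\stateset^{\vee}$, so $p(k) \in [0,1]$. Normalisation follows from unitality exactly as before.

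The step I expect to be the main obstacle is the bookkeeping when operations change wire numbers or when instruments appear. The tuple $\vec g$ must be transported across $\Lambda$ using the label-matching convention of Sec.~\ref{sec:theory:def}. Wires created by a box carry the identity element, and discarded wires drop their entry; for a discarded wire this is harmless because the symmetry preserves the unit effect. For instruments I will run the same argument branchwise with trace non-increasing CP maps.

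The converse is by contraposition. Suppose $E \in \effset(T)$ but $\alpha_{\vec g}(E) \notin \stateset^{\vee}$ for some $\vec g$. Then some state $\rho$ has $\langle \alpha_{\vec g}(E), \rho\rangle \notin [0,1]$. I build the circuit that prepares $\rho$, applies $\alpha_{\vec g}^{-1}$ as an adjoined transformation, and measures $\{E, \Id - E\}$. By invariance of the pairing, and since $\alpha^{*} = \alpha^{-1}$, this circuit returns the value $\langle \alpha_{\vec g}(E), \rho\rangle$ outside $[0,1]$. Hence the symmetry is strongly nonphysical. This part should be immediate once the pairing convention for the dual action is fixed.
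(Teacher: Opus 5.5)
Your proposal is correct and is essentially the paper's proof. The forward direction is the Schr\"odinger-picture dual of the paper's Heisenberg push-through. The paper instead shows that each $\Lambda^{*}$ and each $\alpha_{\vec h}$ maps $\effmax$ into itself, using $\alpha_{\vec g}(\Lambda^{*}(E)) = (\alpha_{\vec g}\Lambda\alpha_{\vec g}^{-1})^{*}(\alpha_{\vec g}(E))$, which is the adjoint of your commutation identity and spares the tuple bookkeeping. The converse is the same one-circuit argument, and your use of $\alpha_{\vec g}^{-1}$ fixes the direction of the action more carefully than the paper's text does. Like the paper, you rely on the states being unrestricted, so that completely positive trace-preserving maps keep states states (equivalently, unital positive adjoints preserve $\stateset^{\vee} = [0,\Id]$), and you are right to make that explicit.
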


\begin{proof}
For the converse direction, suppose $E \in \effset(T)$ with
$\alpha_{\vec g}(E) \notin \stateset^{\vee}$ for some $\vec g$. Then some state
$\omega$ of $T$ satisfies
$\langle \alpha_{\vec g}(E), \omega \rangle \notin [0,1]$, and the circuit that
prepares $\omega$, applies $\alpha_{\vec g}$ and measures $E$ returns that
number. So $G$ is strongly nonphysical.

For the forward direction, push the inserted symmetries onto the effect as in
the proof of Proposition~\ref{prop:weaklynonphysical}. A factor $\Lambda^{*}$
maps $\effmax$ into itself, since for any $\vec g$
\begin{equation}
  \alpha_{\vec g}\bigl( \Lambda^{*}(E) \bigr)
  \;=\; \bigl( \alpha_{\vec g} \Lambda \alpha_{\vec g}^{-1} \bigr)^{*}
        \bigl( \alpha_{\vec g}(E) \bigr) ,
\end{equation}
and the map $\alpha_{\vec g}\Lambda\alpha_{\vec g}^{-1}$ is completely positive
by hypothesis, so its adjoint is positive and unital and preserves
$\stateset^{\vee}$. Note that this is all the proof uses. It does not need
that map to belong to $\opset(T)$, which is what sectorial closure of the
operations would supply, and for $\Th$ the weaker hypothesis is the
condition~\eqref{eq:defops} defining the completely PPT-preserving class. A factor $\alpha_{\vec h}$ maps $\effmax$
into itself because $\alpha_{\vec g}\alpha_{\vec h} = \alpha_{\vec g \vec h}$.
Hence the pushed effect lies in $\effmax$ and pairs admissibly with every
state.
\end{proof}

The two halves of the proposition are not symmetric, and the asymmetry is the
point. Weak nonphysicality is a constraint on $\effset(T)$ from above and is
inherited by every subset. Sectorial closure is an invariance and is not. A
theory can therefore be weakly nonphysical without being sectorially closed,
and Theorem~\ref{thm:sectorial} does not apply to it.

The converse direction of the proposition should also be read with its
hypothesis in view. Containment in $\effmax$ says only that no single
application of the symmetry to an effect leaves $\stateset^{\vee}$. A general
witness is a word in the adjoints of operations and the maps
$\alpha_{\vec g}$, and although each $\alpha_{\vec g}$ preserves $\effmax$, an
adjoint $\Lambda^{*}$ need not, absent the hypothesis on
$\alpha_{\vec g}\Lambda\alpha_{\vec g}^{-1}$. Containment alone therefore does
not establish weak nonphysicality, a point we return to in
Sec.~\ref{sec:sectorial:not}.

What is true unconditionally is the following.

\begin{corollary}
\label{cor:emaxclosed}
$\effmax$ is the largest sectorially closed subset of $\stateset^{\vee}$.
\end{corollary}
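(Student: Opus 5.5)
The plan is to prove three things directly from the definition~\eqref{eq:emaxdef}: that $\effmax \subseteq \stateset^{\vee}$, that $\effmax$ is sectorially closed, and that every sectorially closed subset of $\stateset^{\vee}$ lies in $\effmax$. Only the group structure of the maps $\alpha_{\vec g}$ is used.

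\emph{Containment.} Take $\vec g$ to be the tuple of identity elements. Then $\alpha_{\vec g} = \id$, so the defining condition applied to $E \in \effmax$ gives $E \in \stateset^{\vee}$.

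\emph{Sectorial closure.} Fix $E \in \effmax$ and a tuple $\vec h$. We must show $\alpha_{\vec h}(E) \in \effmax$, that is, $\alpha_{\vec g}(\alpha_{\vec h}(E)) \in \stateset^{\vee}$ for every $\vec g$. Since $\alpha_{\vec g} = \bigotimes_{w} \alpha^{w}_{g_{w}}$ and each $\alpha^{w}$ is a group action, $\alpha_{\vec g}\alpha_{\vec h} = \alpha_{\vec g\vec h}$ with the product taken wirewise. Hence $\alpha_{\vec g}(\alpha_{\vec h}(E)) = \alpha_{\vec g\vec h}(E)$, which lies in $\stateset^{\vee}$ because $E \in \effmax$. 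This is the same observation already used in the proof of Proposition~\ref{prop:emax}.

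\emph{Maximality.} Let $\mathcal{S} \subseteq \stateset^{\vee}$ be invariant under every $\alpha_{\vec g}$, and let $E \in \mathcal{S}$. For each $\vec g$ we have $\alpha_{\vec g}(E) \in \mathcal{S} \subseteq \stateset^{\vee}$, so $E \in \effmax$. Thus $\mathcal{S} \subseteq \effmax$, and $\effmax$ is the largest such set, being itself one of them. In particular it is the union of all sectorially closed subsets of $\stateset^{\vee}$.

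The only step needing care is bookkeeping. The tuples in~\eqref{eq:emaxdef} range over every number of wires, whereas sectorial closure concerns tuples on the wires of the system that $E$ acts on. I would fix the system and read $\alpha_{\vec g}$ as acting on its wires, so that both quantifiers run over the same index set $G^{n}$. I would also point out that the wirewise composition law needs each $\alpha^{w}$ to be a genuine group action. For $\Th$, where $\alpha_{\vec g}=\PT{V}$, this is Eq.~\eqref{eq:ptcompose}. I expect no real obstacle here. The corollary is essentially definitional, and its content is that $\stateset^{\vee}$ need not itself be invariant, so $\effmax$ is the invariant core of $\stateset^{\vee}$.
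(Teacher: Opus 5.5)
Your proof is correct and takes essentially the same approach as the paper: sectorial closure of $\effmax$ follows from the composition law $\alpha_{\vec g}\alpha_{\vec h} = \alpha_{\vec g\vec h}$, and maximality comes from unpacking~\eqref{eq:emaxdef} for an arbitrary sectorially closed subset of $\stateset^{\vee}$. Your use of the identity tuple to show $\effmax \subseteq \stateset^{\vee}$ makes explicit a step the paper leaves implicit.
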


\begin{proof}
It is sectorially closed, since $\alpha_{\vec h}(\effmax) = \effmax$ follows
from $\alpha_{\vec g}\alpha_{\vec h} = \alpha_{\vec g\vec h}$ and the fact that
$\vec g \mapsto \vec g \vec h$ permutes the tuples. Conversely, if
$\effset \subseteq \stateset^{\vee}$ is sectorially closed then for
$E \in \effset$ and any $\vec g$ one has
$\alpha_{\vec g}(E) \in \effset \subseteq \stateset^{\vee}$, which
is~\eqref{eq:emaxdef}.
\end{proof}

The same principle fixes the other line of Definition~\ref{def:T}.

\begin{corollary}
\label{cor:opsmaximal}
Let $\opset'$ be any operation class making $(\stateset, \effmax, \opset')$ a
valid generalized probabilistic theory. If $\opset'$ is sectorially closed then
$\opset'$ is contained in the completely PPT-preserving class. That class is
therefore the largest sectorially closed operation class compatible with
unrestricted states and the effect set $\effmax$.
\end{corollary}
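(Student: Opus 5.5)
The plan is to show that sectorial closure of $\opset'$, together with the GPT validity requirement that every operation preserve the state set $\stateset$ on every composite (with an ancilla), forces each sector map $\PT{V}\Lambda\PT{V}$ to be completely positive. That is exactly condition~\eqref{eq:defops}. The inclusion then follows. Maximality follows by combining it with the fact, already established in Proposition~\ref{prop:validgpt} together with the discussion of Sec.~\ref{sec:conj:notcp}, that the completely PPT-preserving class is itself sectorially closed. That fact comes from $\PT{U}\PT{V}\Lambda\PT{V}\PT{U} = \PT{U\triangle V}\Lambda\PT{U\triangle V}$ and Eq.~\eqref{eq:ptcompose}. That class also makes $(\stateset,\effset,\opset)$ valid, and I first check that $\effmax=\effset$ for $G=\Z_2$ acting by conjugation. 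This identification is immediate: $\stateset^{\vee}=[0,\Id]$ because the states are unrestricted, and $\alpha_{\vec g}$ ranges over the partial transpositions $\PT{S}$. So \eqref{eq:emaxdef} reads $0\le \ptr{E}{S}\le\Id$ for all $S$, which is \eqref{eq:defeffects}.

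For the inclusion, I fix $\Lambda\in\opset'$ acting on $n$ wires and a subset $V$ of its labels. By sectorial closure, with the action on a wire taken simultaneously on input and output as in the label-matching convention of Sec.~\ref{sec:theory:def}, the map $\Lambda_V=\PT{V}\Lambda\PT{V}$ again lies in $\opset'$. Validity of the theory demands closure under parallel composition with the identity on an ancilla system. So $\Lambda_V\otimes\id_R$ lies in $\opset'$ for an ancilla $R$ of arbitrary dimension, and it must map $\stateset$ into $\stateset$. Because states are unrestricted, this says $\Lambda_V\otimes\id_R$ is positive on all density matrices. Taking $R$ a copy of the input and feeding the maximally entangled state, which is a state of the theory, gives a positive Choi matrix. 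Hence $\Lambda_V$ is completely positive. As $V$ was arbitrary, $\Lambda$ satisfies \eqref{eq:defops}.

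The main obstacle is justifying that validity really demands preservation of $\stateset$ under $\Lambda\otimes\id_R$ with an \emph{unrestricted entangled} input. It is also possible that the paper reads validity through effects rather than states, that is, as $\Lambda^*$ preserving $\effmax$ on composites. The replacement-channel example of Sec.~\ref{sec:main:scope} shows that effect-preservation alone does not force complete PPT preservation. So I would rely on item~(f) of Proposition~\ref{prop:validgpt}, that operations map states to states, as the operative requirement of a valid GPT, and apply it on the composite with an ancilla wire. The ancilla is itself a wire with its own label. This creates a subtlety: sectorial closure on the composite would also transpose the ancilla. That is harmless, since I only need the sector in which the ancilla is untouched, which is $\Lambda_V\otimes\id_R$. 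The maximally entangled input $\proj{\phi^+}$ is in $\stateset$ by \eqref{eq:defstates}. With that settled, the remaining steps are routine.
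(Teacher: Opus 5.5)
Your proposal is correct and follows the paper's proof: sectorial closure puts $\PT{V}\Lambda\PT{V}$ in $\opset'$, and every operation of a valid theory with unrestricted states is completely positive, which gives condition~\eqref{eq:defops}. Two steps the paper leaves implicit you make explicit. You derive complete positivity from state preservation with an ancilla and the maximally entangled input, where the paper simply takes operations to be channels. You also check the converse half, that the completely PPT-preserving class is sectorially closed via $\PT{U\triangle V}$, which the paper does in its (H3) verification in Sec.~\ref{sec:sectorial:emax}.
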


\begin{proof}
Validity requires $\Lambda^{*}(\effmax) \subseteq \effmax$ for every
$\Lambda \in \opset'$, and every element of $\opset'$ is in particular
completely positive. Sectorial closure puts $\PT{V}\Lambda\PT{V}$ in $\opset'$
for every $V$, hence completely positive, which is~\eqref{eq:defops}.
\end{proof}

The mechanism is elementary. Complete
positivity is not invariant under $\Lambda \mapsto \PT{V}\Lambda\PT{V}$, whereas
preservation of $\mathcal{K}^{*}$ is, since
$\PT{V}(\mathcal{K}^{*}) = \mathcal{K}^{*}$ by
Eq.~\eqref{eq:dualcone}. Intersecting over sectors therefore imposes the
non-invariant condition in every sector, and that intersection is exactly the
completely PPT-preserving class. This is the operational counterpart of
$\effmax = \bigcap_{S} \PT{S}([0,\Id])$, so a single principle fixes both lines
of Definition~\ref{def:T}.

Whenever $G$ is weakly nonphysical, and the theory supplies the exact flag of
(H1), the correlated frame state of (H2) and an operation class meeting (H3)
and (H4), there is therefore a gap-free theory with the same states and the
same symmetry, namely the one whose effect set is $\effmax$. For $G = \Z_{2}$
on unrestricted states the first two hold by
Proposition~\ref{prop:h1free}, and Corollary~\ref{cor:opsmaximal} supplies the
operation class. For a general finite group the operation class has to be
supplied separately. The classification of
Proposition~\ref{prop:emax} becomes self-describing. A symmetry is strongly nonphysical exactly when the effect set is contained in no sectorially closed
effect set at all.

Weak nonphysicality thus guarantees that the \emph{maximal} choice of effect
set is gap-free. It does not guarantee that every admissible choice is. The
theory $\Th$ of Sec.~\ref{sec:theory} is the maximal choice. Indeed, for
$G = \Z_{2}$ acting by conjugation on unrestricted states, one has
$\stateset^{\vee} = [0,\Id]$ and $\alpha_{\vec g} = \PT{S}$, so that
Eq.~\eqref{eq:emaxdef} reads
$\effmax = \{ E : \ptr{E}{S} \in [0,\Id] \ \forall S \}$, which
is~\eqref{eq:defeffects} verbatim. This is the third of the demands that fix
Definition~\ref{def:T}, after validity as a GPT and the positivity lemma, and it is the one that explains the other two. The effect set is the largest one the
symmetry permits, and the largest one is automatically the one the criterion
applies to.

All four hypotheses hold for $\Th$. For (H1), the frame qubit carries the
regular representation of $\Z_{2}$, conjugation exchanging $\ket{y_{+}}$ and
$\ket{y_{-}}$, and its effects are unrestricted because a single wire carries
no restriction, by Sec.~\ref{sec:theory:tomo}. For (H2), the perfectly
correlated frame state was shown to be a state of $\sTh$ in
Sec.~\ref{sec:enc:constr}. For (H3), the effect set is invariant under every
$\PT{S}$ by the discussion following Eq.~\eqref{eq:effrestated}, and the
operation set is invariant under $\Lambda \mapsto \PT{S}\Lambda\PT{S}$
because
\begin{equation}
  \PT{V}\bigl( \PT{S}\Lambda\PT{S} \bigr)\PT{V}
  \;=\; \PT{S \triangle V} \Lambda \PT{S \triangle V} ,
\end{equation}
which is completely positive for every $V$ whenever $\Lambda \in \opset$,
since $S \triangle V$ is again a subset. Note that this is more than the
invariance of $\opset$ under the global conjugation checked in
Sec.~\ref{sec:conj:notcp}, which is the case $S = W$. And (H4) is
Lemmas~\ref{lem:positivity}(ii) and~\ref{lem:encops}(ii).
Theorem~\ref{thm:main} is therefore the case $G = \Z_{2}$ of
Theorem~\ref{thm:sectorial}.

\subsection{The criterion is tight on the effects}
\label{sec:sectorial:tight}

Theorem~\ref{thm:sectorial} asks for sectorial closure of the effect set and of
the operation set, and one should ask whether either half can be weakened. The
answer is that the effect half cannot, and this subsection proves it. All the
slack in the hypothesis is in the operations, where
Sec.~\ref{sec:main:scope} has already shown there is some.

Begin by asking what the statistics actually require. Consider a party holding
wires from two sources and write its encoded effect as
$\sum_{s_{1}s_{2}} X_{s_{1}s_{2}} \otimes P_{s_{1}} \otimes P_{s_{2}}$ with the
blocks free. Reality forces $X_{-\vec s} = X_{\vec s}^{\mathsf T}$, leaving two
independent operators $M := X_{++}$ and $N := X_{-+}$. Collecting the four
sectors as in the proof of Proposition~\ref{prop:pairing}, the sectors $(+,+)$
and $(-,-)$ both contribute $M$, while $(-,+)$ and $(+,-)$ both contribute
$\PT{}(N)$, the second because $\PT{2}(X^{\mathsf T}) = \PT{1}(X)$ for any $X$. The
statistics are therefore reproduced as soon as
\begin{equation}
  \label{eq:scprime}
  E \;=\; \tfrac12 \bigl( M + \PT{}(N) \bigr) ,
\end{equation}
and the proof of Theorem~\ref{thm:sectorial} takes the particular solution
$M = E$, $N = \PT{}(E)$, which is what invokes the invariance.

\begin{definition}
\label{def:scprime}
An effect set satisfies {\rm (SC$'$)} if, for every bipartition of the wires
held by a party, every POVM $\{E_{b}\}$ in that set admits POVMs $\{M^{(b)}\}$
and $\{N^{(b)}\}$, again in that set, with
$E_{b} = \tfrac12\bigl(M^{(b)} + \PT{}(N^{(b)})\bigr)$ for each $b$.
Quantifying over all bipartitions rather than over the source-aligned ones is
no strengthening. A causal structure may consist of $n$ single-wire sources all
feeding one central party, and as the source signs range over their patterns
the transposed set ranges over every subset of that party's wires, so every
bipartition is source-aligned in some causal structure.
\end{definition}

Sectorial closure implies (SC$'$), by the solution just named. The point of
this subsection is that nothing is gained.

\begin{theorem}
\label{thm:scprime}
Let $T$ have unrestricted states, so that $\effset \subseteq [0,\Id]$. If
$\effset$ satisfies {\rm (SC$'$)} then $\effset \subseteq \effmax$. If
$\effset$ satisfies {\rm (SC$'$)} and is in addition convex and closed, then
$\effset$ is sectorially closed.
\end{theorem}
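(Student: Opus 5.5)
The plan is to unroll the defining identity of (SC$'$) into a convergent series. Each step of the series trades one partial transposition of an unknown operator for a half-weight copy of an effect already in the set. Fix a subset $S$ of the wires held by a party, and use the bipartition $S : S^{c}$ with $\PT{} = \PT{S}$. Since $\PT{S^{c}}$ differs from $\PT{S}$ by the global transpose, which preserves both $[0,\Id]$ and spectra, the choice of which side is transposed is immaterial. Every effect $E$ of a GPT lies in the two-outcome POVM $\{E, \Id - E\}$, by closure under complementation. So (SC$'$) supplies effects $M_{1}, N_{1} \in \effset$ with $E = \tfrac12\bigl(M_{1} + \PT{S}(N_{1})\bigr)$. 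Applying $\PT{S}$ and using $\PT{S}^{2} = \id$ gives
\begin{equation}
  \PT{S}(E) \;=\; \tfrac12 N_{1} + \tfrac12 \PT{S}(M_{1}) .
\end{equation}
Now $M_{1}$ is itself an element of a POVM in $\effset$, so the same hypothesis applies to it, and $\PT{S}(M_{1}) = \tfrac12 N_{2} + \tfrac12 \PT{S}(M_{2})$. After $k$ steps this gives
\begin{equation}
  \PT{S}(E) \;=\; \sum_{j=1}^{k} 2^{-j} N_{j} \;+\; 2^{-k}\, \PT{S}(M_{k}) ,
  \qquad M_{j}, N_{j} \in \effset .
\end{equation}

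The remainder is controlled by a uniform bound. Partial transposition is a linear map on a finite-dimensional space, so there is a constant $c$, depending only on the dimension, with $\|\PT{S}(X)\| \le c$ for every $X \in [0,\Id]$. Because the states are unrestricted, $\effset \subseteq [0,\Id]$, so the remainder has norm at most $c\,2^{-k} \to 0$. For the first claim, note that each $N_{j}$ lies in $[0,\Id]$. The partial sum therefore lies in $(1-2^{-k})[0,\Id] \subseteq [0,\Id]$, since $[0,\Id]$ is convex and contains $0$. Letting $k \to \infty$ and using closedness of $[0,\Id]$ gives $\PT{S}(E) \in [0,\Id]$. As $S$ was arbitrary, $E \in \effmax$ by the explicit form $\effmax = \{E : \ptr{E}{S} \in [0,\Id]\ \forall S\}$ recorded in Sec.~\ref{sec:sectorial:emax}. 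No convexity of $\effset$ is used at this stage, only that of $[0,\Id]$.

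For the second claim, run the same series inside $\effset$. The normalised partial sum $(1-2^{-k})^{-1}\sum_{j \le k} 2^{-j} N_{j}$ is a convex combination of elements of $\effset$, so it lies in $\effset$ by convexity. It converges to $\PT{S}(E)$, since both the normalising factor and the remainder are controlled as above. Closedness of $\effset$ then gives $\PT{S}(E) \in \effset$. Closure under every single $\PT{S}$ yields closure under the whole group $\Z_{2}^{\,n}$ by Eq.~\eqref{eq:ptcompose}, and for conjugation the maps $\alpha_{\vec g}$ are exactly these partial transpositions. Hence $\effset$ is sectorially closed in the sense of Definition~\ref{def:sectorial}.

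The step I expect to need the most care is the legitimacy of the iteration. The hypothesis must be reapplied to $M_{j}$ at every stage, which needs each $M_{j}$ to sit in a POVM of the set. This holds because (SC$'$) returns whole POVMs $\{M^{(b)}\}$, and it is also guaranteed by the $\{M,\Id-M\}$ device. The argument further needs that the quantification over bipartitions in Definition~\ref{def:scprime} covers every $S$ on a composite of any size. The uniform bound on $\|\PT{S}\|$ over $[0,\Id]$ is the only place where unrestricted states (via $\effset \subseteq [0,\Id]$) and finite dimension enter. It is exactly what makes the remainder vanish rather than merely stay bounded.
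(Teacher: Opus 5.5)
Your proof is correct and matches the paper's argument step for step. You use the same dyadic unrolling $\PT{S}(E) = \sum_{j\le k} 2^{-j}N_{j} + 2^{-k}\PT{S}(M_{k})$, obtained by reapplying (SC$'$) to $\{M_{j},\Id - M_{j}\}$, the same uniform bound on $\PT{S}$ over $[0,\Id]$, and, for the second claim, the same normalised partial sums as convex combinations in $\effset$ converging in norm. The one cosmetic difference is in the first claim: you pass to the norm limit inside the closed set $[0,\Id]$, whereas the paper reads off $0 \le \PT{S}(E) \le \Id$ from eigenvalue bounds that hold at every finite depth; the two are equivalent.
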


\begin{proof}
Fix a bipartition and write $\PT{}$ for the corresponding partial
transposition. Apply (SC$'$) to the two-outcome measurement $\{E, \Id - E\}$,
giving $E = \tfrac12(M_{1} + \PT{}N_{1})$ with $\{M_{1}, \Id - M_{1}\}$ and
$\{N_{1}, \Id - N_{1}\}$ measurements in $\effset$. Applying $\PT{}$, which is
an involution,
\begin{equation}
  \PT{}E \;=\; \tfrac12 \PT{}M_{1} \;+\; \tfrac12 N_{1} .
\end{equation}
Here $N_{1}$ is controlled, lying in $[0,\Id]$, and $\PT{}M_{1}$ is not. But
$\{M_{1}, \Id - M_{1}\}$ is itself a measurement in $\effset$, so (SC$'$)
applies to it in turn, and iterating $n$ times gives
\begin{equation}
  \label{eq:scseries}
  \PT{}E \;=\; \sum_{j=1}^{n} 2^{-j} N_{j} \;+\; 2^{-n}\, \PT{}M_{n} ,
\end{equation}
with every $N_{j}$ and $M_{n}$ in $[0,\Id]$.

The recursion is well founded, because {\rm (SC$'$)} requires each $M_{j}$ to
lie in $\effset$ and not merely in the order interval, so the hypothesis
applies to it again. It does not terminate, and it does not need to. Because
$\PT{}$ is a linear map on a finite-dimensional space, the quantity
\begin{equation}
  \label{eq:Cconst}
  \kappa \;:=\; \sup \bigl\{\, \lVert \PT{}X \rVert_{\infty} \;:\; 0 \le X \le \Id
  \,\bigr\}
\end{equation}
is finite. It is a supremum over the whole order interval, not over the
particular operators appearing in~\eqref{eq:scseries}, so it is a single
constant depending on the dimension and the bipartition but not on the depth
$n$. Since the sum in~\eqref{eq:scseries} has
least eigenvalue at least $0$ and greatest eigenvalue at most $1 - 2^{-n}$,
\begin{equation}
\begin{split}
  -2^{-n}\kappa &\;\le\; \lambda_{\min}\bigl( \PT{}E \bigr) , \\
  \lambda_{\max}\bigl( \PT{}E \bigr) &\;\le\; 1 - 2^{-n} + 2^{-n}\kappa ,
\end{split}
\end{equation}
and these hold for every $n$ while $\PT{}E$ is a fixed operator. Nothing is
required to converge here. The two inequalities are a family of bounds on one
operator, indexed by $n$, and taking the infimum and the supremum over $n$
gives $0 \le \PT{}E \le \Id$ directly. Every step here is an identity or an
inequality between operators, and no causal structure enters. In particular the
bound $\effset \subseteq [0,\Id]$ is read off from the theory having
unrestricted states, and not from pairing $\PT{}E$ against the states that some
scenario can prepare, which would be strictly weaker and is precisely the slack
that Proposition~\ref{prop:dualcone} is about. The argument used nothing about
which bipartition was fixed at the start, so it applies to each of them
separately with its own constant, and by the remark following
Definition~\ref{def:scprime} that is every subset $S$ of the wires. Hence
$E \in \effmax$.

The second statement does use a limit, and an operator limit at that. Write
$A_{n}$ for the sum in~\eqref{eq:scseries}, whose coefficients are positive
with total weight $1 - 2^{-n}$, and normalise,
\begin{equation}
  \label{eq:normalisedpartial}
  \frac{A_{n}}{1 - 2^{-n}}
  \;=\; \sum_{j=1}^{n} \frac{2^{-j}}{1-2^{-n}} \, N_{j} ,
\end{equation}
a convex combination of elements of $\effset$ and therefore an element of
$\effset$ whenever $\effset$ is convex. Since
$\lVert \PT{}E - A_{n} \rVert_{\infty} = 2^{-n} \lVert \PT{}M_{n}
\rVert_{\infty} \le 2^{-n}\kappa$ and $(1-2^{-n})^{-1} \to 1$, the normalised
partial sums converge in norm to $\PT{}E$. Closedness of $\effset$ then puts
the limit in $\effset$.
\end{proof}

Unlike Theorems~\ref{thm:main} and~\ref{thm:sectorial}, which are algebraic
and hold in any dimension, this one does not survive the passage to infinite
dimensions, because $\kappa$ diverges there. The constant plays no role beyond
being finite. For the record, on a
$d \otimes d$ system it equals $(1+d)/2$, attained at the projector onto the
symmetric subspace, whose partial transpose is
$(\Id + d\proj{\phi^{+}})/2$. That value is an exact computation rather than a search. Random projectors do
not come close to the supremum in dimension three
or above, so the numerics in Appendix~\ref{app:numerics} corroborate only that
the bound is not exceeded.

The encoding cannot be improved on the effect side. Whatever freedom
Eq.~\eqref{eq:scprime} appears to offer is illusory once the requirement is
imposed on every measurement of the theory rather than on one, because the
$M^{(b)}$ are themselves effects and the condition reapplies to them. The
degenerate solution $N^{(0)} = 0$, $N^{(1)} = \Id$ illustrates this. It gives
$M_{n} = 2^{n}E$, and demanding that these stay in $[0,\Id]$ for all $n$ forces
$E = 0$. That branch is the extreme case of~\eqref{eq:scseries} in which the
convex series is empty and the whole of $\PT{}E$ is carried by the vanishing
tail.

Sectorial closure is therefore not an arbitrary sufficient condition but
exactly what the encoding demands of the effects, and
Corollary~\ref{cor:emaxclosed} may be read again in that light, since $\effmax$ is the largest effect set the construction of Sec.~\ref{sec:enc} can reach.

Finally, this closes off a route to a counterexample to the conjecture of
Ref.~\cite{Ying2025foil}. One might hope for a theory whose effect set is
reachable by the encoding, so that no gap arises anywhere, while lying outside
$\effmax$, so that the symmetry is strongly nonphysical by the unconditional
half of Proposition~\ref{prop:emax}. By Lemma~\ref{lem:uniformity} such a
theory could not be dismissed as one padded with an inert sector, so it would
be a counterexample surviving any restatement that excludes a gap-free core.
Theorem~\ref{thm:scprime} says there is none. Reachability by the encoding and
containment in $\effmax$ are the same condition, and the closure that a
candidate would have to lack is supplied automatically by the boundedness of
$\PT{}$ together with the dyadic weights.

\subsection{How far the encoding reaches}
\label{sec:sectorial:not}

Theorem~\ref{thm:scprime} settles the effect set of a whole theory. It says
nothing about individual measurements, and the distinction matters. A theory
that is not sectorially closed may still have many measurements that the
encoding simulates, and the question of a gap turns on the ones it does not.

Proposition~\ref{prop:emax} leaves room between its two conditions, and it is
worth asking what lives there. A theory in that region has an effect set inside
$\effmax$ that is not invariant under the partial transpositions, and by
Corollary~\ref{cor:emaxclosed} such a set is a proper subset of $\effmax$. Two
things have to be said about it, and the first is a warning.

\paragraph*{A trap.}
The obvious way to shrink $\effmax$ while keeping it closed under the global
transpose is to take the cone $K$ generated by the separable operators together
with a PPT-entangled state $F$ and its transpose, and to put
\begin{equation}
  \label{eq:shrunkeffects}
  \effset' \;=\; \bigl\{\, E \;:\; E \in K, \ \Id - E \in K \,\bigr\} ,
\end{equation}
which is closed under complementation by construction, so the objection of
Sec.~\ref{sec:theory:whynot} does not apply to it. For this to fail sectorial
closure one needs $\ptr{F}{B} \notin K$.

The states one reaches for first do not have that property, and the reason
should be on record so that it is not rediscovered. A PPT-entangled state constructed from an unextendible product
basis~\cite{Bennett1999upb,DiVincenzo2003upb} has the form
$F \propto \Id - \sum_{i} \proj{a_{i}} \otimes \proj{b_{i}}$, and if the basis
is real then each $\proj{b_{i}}$ is a real symmetric rank-one projector and is
therefore fixed by transposition. Hence $\ptr{F}{B} = F$ exactly, the cone $K$
is $\PT{B}$-invariant, and $\effset'$ is sectorially closed after all. This
covers the Tiles and Pyramid states and every other real unextendible product
basis, which is to say the standard examples. The argument is exact and needs no numerics. We have nonetheless confirmed for
the Tiles state that the two operators agree entrywise to machine precision.

What the construction needs is a range condition rather than a separability
condition, and that is checkable. Suppose $\ptr{F}{B} = R + \mu F + \nu
F^{\mathsf T}$ with $R \ge 0$ and $\mu, \nu \ge 0$. All three terms are
positive, so $\mu > 0$ would force
$\mathrm{range}(F) \subseteq \mathrm{range}(\ptr{F}{B})$, and likewise for
$\nu$. If both inclusions fail then $\mu = \nu = 0$ and $\ptr{F}{B}$ is
separable, whence $F$ is separable, the separable cone being closed under
partial transposition. So it suffices to exhibit $F$ whose range is not
contained in that of its partial transpose, and a rank inequality settles that.

The Horodecki family in $3 \times 3$~\cite{Horodecki1997bound} does it. For $a \in (0,1)$ let $\rho_{a} = H/(8a+1)$ in the basis $\ket{00},\dots,\ket{22}$, where $H$
has $a$ at every diagonal entry and $a$ at every entry of the $3 \times 3$
block indexed by $\{00,11,22\}$, including its off-diagonal entries, and where
the four entries $H_{20,20} = H_{22,22} = (1+a)/2$ and
$H_{20,22} = H_{22,20} = \sqrt{1-a^{2}}/2$ then overwrite what the previous
sentence put there. Filling the $\{00,11,22\}$ block only on its diagonal
gives rank nine instead of seven and destroys the example.
Then $\rho_{a}$ has positive partial transpose, is entangled, and satisfies
$\mathrm{rank}\,\rho_{a} = 7 > 6 = \mathrm{rank}\,\ptr{\rho_{a}}{B}$, so
neither range containment can hold. We have verified the ranks, the positivity
of the partial transpose, and entanglement through the realignment
criterion~\cite{ChenWu2003,Rudolph2005}, whose value exceeds one for every $a$
we tested. See Appendix~\ref{app:numerics}. Taking $F = \rho_{a}$ therefore gives an effect
set inside $\effmax$ that is not sectorially closed.

\paragraph*{What the candidate would and would not show.}
Suppose such an $\effset'$ were exhibited. It would show that the hypothesis of
Theorem~\ref{thm:sectorial} is not necessary for the containment
$\effset \subseteq \effmax$, which is already clear from
Proposition~\ref{prop:emax}. It would not show that the theory is weakly
nonphysical, and here we must be careful about a step that is easy to take
without noticing. Weak nonphysicality does not follow from
$\effset' \subseteq \effmax$ alone. The general witness is a word
$(\mathcal{M}_{1}^{*} \circ \cdots \circ \mathcal{M}_{m}^{*})(E)$ whose factors
are adjoints of operations and partial transpositions, and while
$\alpha_{\vec g}$ preserves $\effmax$ by Corollary~\ref{cor:emaxclosed}, an
adjoint $\Lambda^{*}$ need not, unless $\alpha_{\vec g}\Lambda\alpha_{\vec
g}^{-1}$ is again an operation of the theory. That is the second hypothesis of
Proposition~\ref{prop:emax}, and a theory specified by its effect set alone has
not supplied it. Any candidate must come with its operation set, and the
region between the two conditions accordingly contains strongly nonphysical
symmetries as well as weakly nonphysical ones.

And even a weakly nonphysical candidate would not obviously produce a gap.
Shrinking the effect set shrinks $\corr_{T}$ as well as $\corr_{sT}$, and
whether the two shrink at the same rate we do not know.

One thing can be said at once about where a gap could not come from. An effect
that is itself invariant under the symmetry lies in $\sTh$ already, so no
measurement built only from such effects distinguishes the two theories. The
exceptional ray of the candidate constructed above runs through $\rho_{a}$,
which is a real matrix and is invariant in exactly this way, so if that
candidate has a gap it must come from the interplay of those effects with
states or operations that are not invariant, and not from the effects alone.
An effect set lying above $\effmax$ need not behave the same way, and
Sec.~\ref{sec:disc} exhibits one that does not.

What Theorem~\ref{thm:scprime} does tell us about such a candidate is where its
failure must sit. The set $\effset'$ is convex and closed and is not
sectorially closed, so it cannot satisfy {\rm (SC$'$)}, and the failure must
therefore appear on some measurement. It cannot appear on the measurements the
encoding reaches, and the theory's exceptional direction is the ray through
$F$, so the failure is confined to a segment of that ray. Locating the segment
is a finite computation once an explicit $F$ is fixed, and deciding whether the
correlations available there exceed those of the symmetrized theory is the
remaining question. Whether a weakly nonphysical symmetry that is not
sectorially closed can produce a gap is left open.

\section{Discussion}
\label{sec:disc}

Three theories in this paper carry the same symmetry, and it receives a
different verdict in each. In quantum theory, complex conjugation is strongly nonphysical, and adjoining it assigns a negative
probability to the antisymmetric
outcome of a swap test, by Eq.~\eqref{eq:negativeprob}. In the PPT-world of
Ref.~\cite{Ying2025foil}, where states \emph{and} effects are both restricted
to those with positive partial transpose, it is physical, and the authors note
that every symmetry of that theory is. Symmetrization there is a twirling and
Theorem~2 of their paper applies directly. In $\Th$ it is weakly nonphysical,
by Proposition~\ref{prop:weaklynonphysical}. The map is the same map
throughout. Any quantity computed from the symmetry alone is therefore constant across a
distinction it is supposed to explain. The trichotomy classifies pairs
consisting of a theory and a symmetry. It does not classify symmetries.

The comparison says where a criterion would have to live. What separates the three cases
is not the map but the relation between the state cone and the dual of the
effect cone, which is to say the theory's restriction structure. Quantum theory
has no restriction, so $\widehat{\stateset} = \stateset$ and there is nothing
for a nonpositive map to hide behind. PPT-world restricts the states as well,
which pulls $\stateset$ down until conjugation preserves it. $\Th$ restricts
only the effects, which pushes $\widehat{\stateset}$ up until it contains the
image. Sectorial closure, Definition~\ref{def:sectorial}, is one condition of this
kind. Theorem~\ref{thm:sectorial} shows it is sufficient and
Theorem~\ref{thm:scprime} shows it cannot be weakened on the effects, while
Corollary~\ref{cor:opsmaximal} identifies the largest operation class it
allows. Whether any condition of this kind is necessary we do not know.

The consistency of all this with the original result needs stating
explicitly, because the two are easily read as being in tension. Real-amplitude
quantum theory is the swirled world of the same $\Z_{2}$, and the gap there is
real~\cite{Renou2021}. It is real because quantum theory admits effects with negative partial transpose. The Bell-state measurement is
available, the
encoding of Sec.~\ref{sec:enc} produces an operator that is not positive, and
the correlations genuinely separate. Our construction leaves the symmetry's nonphysicality untouched and removes
the measurements that would register it. In this sense the robustness of the real-amplitude result
rests on a property of quantum theory that has nothing to do with the symmetry
being antiunitary, namely that quantum theory admits every effect its states
permit.

Examples of this kind are not rare, and it would be a mistake to read $\Th$ as
a curiosity. By Corollary~\ref{cor:emaxclosed}, whenever a symmetry is weakly
nonphysical on a theory the maximal effect set compatible with it is
sectorially closed, and gap-free wherever the remaining hypotheses of
Theorem~\ref{thm:sectorial} hold. Every such symmetry comes with a companion
theory of that kind, of which $\Th$ is the instance for $\Z_{2}$ acting on
unrestricted quantum states. What distinguishes $\Th$ from the rest of that family is that its effect set
has an independent characterisation, as the completely PPT-preserving structure
already familiar from entanglement theory, and that it is small enough to
compute with.

Nor are theories with a restricted effect set exotic. A superselection rule is
exactly such a restriction, and so is the LOCC constraint between two distant
laboratories, and so is the Gaussian fragment of quantum optics. In each case
the criterion of Sec.~\ref{sec:sectorial} poses a definite and checkable
question: is the free effect set invariant under the symmetry acting
independently on each source? Three familiar fragments answer no, and for the
same reason, which is worth naming because it says what a theory would have to
give up.

Stabilizer theory retains the singlet projector $P_{\rm asym}$ as a free
effect, the singlet being a stabilizer state, and
Eq.~\eqref{eq:negativeprob} then returns $-1/2$ when it is paired with the
partial conjugation of a maximally entangled state. Gaussian quantum optics
retains the projector onto a two-mode squeezed vacuum. Conjugation in the
position basis reverses momenta, so partial conjugation reverses the momenta of
one mode, which is Simon's criterion, and that projector has a partial
transpose that is Gaussian but not positive. In both fragments the free
effects therefore lie outside $\effmax$, conjugation is strongly nonphysical,
and Theorem~\ref{thm:sectorial} says nothing. The criterion therefore
separates cases rather than reporting that every restriction hides a
symmetry.

Superselection rules answer no as well, and by the same route. A global
particle-number rule leaves the singlet projector free, since the singlet lies
in a single charge sector, and a local parity rule on two qubits per party
leaves free the projector onto
$(\ket{00}_{A}\ket{00}_{B} + \ket{11}_{A}\ket{11}_{B})/\sqrt2$, which commutes
with both local parities. Each has a partial transpose with least eigenvalue
$-1/2$.

What these failures have in common is that the fragment can still detect
entanglement across a cut. A theory in which conjugation hides must give that
up. Its effects must all have positive partial transpose, which by the
discussion after Eq.~\eqref{eq:effrestated} and closure under complementation
is the same as lying inside $\effmax$, so the search reduces to finding
partial-transpose-invariant convex subcones of the positive-partial-transpose
cone. That is not a contrived demand but the defining feature of the class
Rains introduced for LOCC.

Such subcones exist below $\effmax$, so the family is genuinely a family.
Taking the separable cone in place of the completely PPT-preserving one gives a
second theory: states unrestricted, effects
$\mathrm{SEP} \cap (\Id - \mathrm{SEP})$, operations the separable maps. The
intersection is again forced, for the reason just given. Every product effect
survives, by $\Id - E_{A} \otimes E_{B} = (\Id - E_{A}) \otimes E_{B} +
\Id \otimes (\Id - E_{B})$, so the theory is tomographically local and the
bilocality protocol of Appendix~\ref{app:bilocal} runs in it word for word. It
is strictly smaller than $\Th$, since a suitably scaled PPT-entangled state is
an effect of $\Th$ and not of it. Between the two lies the hierarchy of symmetric extensions with positive
partial transpose~\cite{DPS2004}. For each fixed level $k$ the corresponding
cone $\mathcal{C}_{k}$ is convex, invariant under the partial transpositions
and closed under tensor products. Hypotheses (H1) and (H2) hold because the
states are unrestricted and the group is $\Z_{2}$, (H3) holds by that
invariance, and (H4) follows from Proposition~\ref{prop:h4free}, so the theory
it defines is gap-free in every causal structure. The first level is $\Th$ and the limit is the separable theory, and the first
level already differs from the limit, since the positive-partial-transpose cone
differs from the separable one. The word ``fixed'' carries weight here. A theory
whose level grew with the size of the system would have a cone that is not
tensor-closed, and the proposition would not apply.

The price of descending the hierarchy is cost. Membership at every fixed level
remains a semidefinite condition, of a size that grows with $k$, and only the
separable limit leaves that class altogether, which is what makes $\Th$ the
cheapest member of the family.

That structure has been studied for a quarter of a century, but not in this
capacity. It is a relaxation of LOCC introduced to bound distillation rates,
and Appendix~\ref{app:cpptp} gives the history. The machinery that does address restricted fragments is the theory of cone
equivalence~\cite{Selby2023accessible}. It is developed for prepare-measure
scenarios with no parallel composition, and so does not reach the questions
asked here. In any case the effect cone of $\Th$ is not the full positive cone,
so $\Th$ is not cone-equivalent to quantum theory. Finally, $\Th$ is tomographically local, yet its effect set on a composite is
not determined by the effect sets of the parts, and it may be worth adding to
the catalogue of theories in which entanglement splits into inequivalent
kinds~\cite{Baldijao2026tnl}.
The complementary line of work, in which the departure from real-valuedness is
treated as a quantifiable resource rather than as a structural
question~\cite{HickeyGour2018,WuImaginarity2021}, is orthogonal to what is
studied here but concerns the same asymmetry.

Four questions are left open, and they are not four unrelated ones. By
Corollaries~\ref{cor:emaxclosed} and~\ref{cor:opsmaximal} the theory $\Th$ sits
at a maximum in two independent coordinates, its effects and its operations, so
the way to ask what remains is to move off that maximum and see what breaks.
Three of the questions move along one coordinate or the other, one along the
operations and two along the effects, downward and upward. The fourth asks
whether the criterion survives outside the class of theories for which Ying
\emph{et al.} formulate it.

Take the operations first. Section~\ref{sec:main:scope} shows that the
completely PPT-preserving class is not the largest one compatible with the
effects, while Corollary~\ref{cor:opsmaximal} shows that it is the largest
sectorially closed one, and between the two lies the region that
Eq.~\eqref{eq:opschain} records without closing. Where the weakening stops we
do not know. Deciding whether any theory in that region exhibits a gap of its
own needs an impossibility argument of the kind Renou \emph{et al.} supply for
real-amplitude quantum theory~\cite{Renou2021}, and by
Sec.~\ref{sec:main:necessary} such an argument cannot live in the bilocality
scenario. It has to be mounted where some party correlates what it keeps with
what it sends on, which is the one shape Proposition~\ref{prop:defer} leaves
standing.

Moving the effects downward gives the second question. An effect set strictly
inside $\effmax$ need not be invariant under the partial transpositions, so
Theorem~\ref{thm:sectorial} does not reach it, and
Sec.~\ref{sec:sectorial:not} builds a candidate of that kind from the Horodecki
family. We do not decide it. What Theorem~\ref{thm:scprime} supplies is a
location rather than an answer. The candidate is convex and closed and is not
sectorially closed, so it fails (SC$'$) on some measurement, and that failure
sits on a segment of the ray through $F$. Whether a weakly nonphysical symmetry
that is not sectorially closed can produce a gap therefore reduces to a
question about one ray, though shrinking the effect set shrinks $\corr_{T}$
along with $\corr_{sT}$, and we do not know which of the two shrinks faster.

Moving the effects upward is the third question, and it leaves the reach of
every theorem here at once, since above $\effmax$ the symmetry is strongly
nonphysical by Proposition~\ref{prop:emax}. A natural place to start is the
effect set $\effset_{1}$ defined by imposing Eq.~\eqref{eq:effrestated} on
the empty set and the single wires only. In the language of
Sec.~\ref{sec:theory:def} it imposes the
condition on a generating set of $\Z_{2}^{\,n}$ rather than on the group, and
the two differ because the condition $0 \le \cdot \le \Id$ is not preserved
under composition. It is convex, closed under complementation and stable under
tensor products. On two and on three wires it coincides with $\effmax$: every
subset is then a singleton, a complement of one, the empty set or everything,
and the condition for $S^{c}$ follows from that for $S$ because
$\PT{S^{c}} = \PT{W}\PT{S}$ with $\PT{W}$ preserving the order interval. On
four wires the $2$:$2$ subsets are complements of each other and of nothing
smaller, so the collapse fails, $\effset_{1}$ is strictly larger, and the
symmetry on it is strongly nonphysical.

The bilocality scenario is therefore blind to it by construction, and the
smallest structure that could see it is a four-wire star. Even there
Proposition~\ref{prop:noswap} constrains what a gap could look like. The
defining condition of $\effset_{1}$ is exactly $\ptr{E}{\{i\}} \ge 0$ on each
single wire, which is the hypothesis of the proposition for every singleton
$S$, so every assemblage $\effset_{1}$ can prepare in the four-wire star, on
every outcome, is positive across every $1$:$3$ cut. Across $2$:$2$ cuts it
need not be. A gap for $\effset_{1}$ would therefore have to come from
four-party correlations whose entanglement is confined entirely to the $2$:$2$
cuts, with no outer party holding anything negative under partial transposition
against the other three. We do not know of a mechanism of that shape.

How far $\effset_{1}$ exceeds $\effmax$ can be said exactly, and in a
scale-invariant way. Appendix~\ref{app:e1} does so, and records two elements of
$\effset_{1} \setminus \effmax$ in closed form. The second of them matters
here. An effect invariant under the symmetry belongs to $\sTh$ already and
could never on its own distinguish the two theories, and one might hope that
the whole of the extra room is invariant in that way. It is not. The witness of
Eq.~\eqref{eq:imagwitness} has invariant part exactly $\Id/2$, so no argument
that treats the extra room of $\effset_{1}$ as inert can settle whether it
produces a gap.

The fourth question steps outside the family. Is fermionic quantum theory
sectorially closed under the parity symmetry? It too fails tomographic
locality~\cite{DAriano2014fermionic}, and unlike the theories considered here
it is not a foil, so an answer would say something about a theory people take
seriously rather than about a constructed one. The question is not immediate.
Ying \emph{et al.} formulate the trichotomy of
Sec.~\ref{sec:prelim:trichotomy} for tomographically local theories, and it
would have to be extended before the question can be posed at all.

It remains to say how all this sits with the conjecture of
Ref.~\cite{Ying2025foil} that a strongly nonphysical symmetry always produces a
gap somewhere for a nonclassical theory. Our construction concerns the weakly nonphysical case and does
not touch that conjecture. The conjecture as stated would, however, appear to admit degenerate
counterexamples of a kind our own family cannot supply, and the reason it
cannot is Lemma~\ref{lem:uniformity}.

The consequence is that a theory with unrestricted states cannot carry a
symmetry acting by conjugation on some wires and trivially on others. It has no
inert sector, and in particular it cannot be padded with one. A theory whose states \emph{are} restricted has no such protection. If its
correlations were already saturated by an inert sector, symmetrization would
cost nothing while the symmetry remained strongly nonphysical. We do not
exhibit such a theory here and make no claim that one exists, but the
possibility is not excluded by anything in the conjecture as stated, and it is
disjoint from the mechanism studied in this paper. Any repaired form of the
conjecture that rules it out is untouched by our results.

The division also gives the title of this paper its sharper reading.
Restricting the effects hides a nonphysical symmetry by removing the
instruments that would register it, and Theorem~\ref{thm:main} says it does so
in every causal structure at once. Restricting the states hides one for an
unrelated reason, by making the sector it acts on redundant. Only the first of
the two mechanisms says anything about what a symmetry is.

\section*{Acknowledgements}

C.-F.K. is grateful to the European Union and the Region R\'eunion, France
(POE FEDER 2021--2027, n$^\circ$2025-0954-007180) for the funding support.

The author used large language models for language editing, and as an
additional check on the numerical verification recorded in
Appendix~\ref{app:numerics}. The results are the author's own, and the author
takes full responsibility for the content of this paper.

\appendix

\section{Proof that \texorpdfstring{$\Th$}{T} is a valid GPT}
\label{app:validgpt}

We prove the parts of Proposition~\ref{prop:validgpt} not established in the
body, namely (a)--(e) and (g). The second half of part (f) was proved in
Sec.~\ref{sec:theory:valid}. Throughout, $S$ and $V$ denote subsets of wires and $\PT{S}$ the corresponding
partial transposition. It is linear, self-adjoint for the Hilbert--Schmidt pairing, involutive, trace
preserving and unital, and it satisfies
$\PT{S}\PT{V} = \PT{S \triangle V}$.
None of these arguments can be borrowed from the corresponding proof for a
twirled world~\cite{Centeno2025twirled}. There the subtheory is carved out by an
invariance, which is a linear constraint, so convexity, admissible
probabilities and the existence of complements descend from the parent theory
without work. Here $\Th$ is specified by a family of cone restrictions and
each item has to be checked.

\paragraph*{(a) Convexity, closedness, and admissible probabilities.}
By~\eqref{eq:effrestated},
\begin{equation}
  \effset \;=\; \bigcap_{S} \PT{S}^{-1}\bigl( [0,\Id] \bigr) ,
\end{equation}
an intersection of preimages of a closed convex set under linear maps, hence
closed and convex. The state set is the set of density matrices, closed and
convex. Taking $S = \emptyset$ gives $0 \le E \le \Id$ for every
$E \in \effset$, so $\Tr[E\rho] \in [0,1]$ for every $\rho \in \stateset$.

\paragraph*{(b) Unit, zero and complements.}
$\PT{S}(0) = 0$ and $\PT{S}(\Id) = \Id$ for every $S$, so $0,\Id \in \effset$.
Closure under $E \mapsto \Id - E$ was established in
Sec.~\ref{sec:theory:whynot}: $\ptr{(\Id-E)}{S} = \Id - \ptr{E}{S}$, and the
family of conditions $0 \le \ptr{E}{S} \le \Id$ is symmetric under exchanging
the two bounds.

\paragraph*{(c) Tomographic completeness.}
Distinct states are separated by effects because the products of quantum
effects lie in $\effset$, by Sec.~\ref{sec:theory:tomo}, and span the Hermitian
operators on the composite. Distinct effects are separated by states because
the density matrices span the Hermitian operators.

\paragraph*{(d) Closure under steering.}
Let $E \in \effset$ act on the wires $A \cup \{C\}$ and let $\rho_{C}$ be a
state of the wire $C$. Put
\begin{equation}
  \label{eq:steeredeffect}
  F \;=\; \Tr_{C}\bigl[ E \,(\Id_{A} \otimes \rho_{C}) \bigr] .
\end{equation}
We claim $F \in \effset$ on the wires $A$.

The map $E \mapsto F$ commutes with $\PT{S}$ for every $S \subseteq A$, since
the two act on disjoint tensor factors. Explicitly, write
$F_{ij} = \sum_{k,l} E_{(ik),(jl)} (\rho_{C})_{lk}$ and transpose the indices of
$i$ and $j$ lying in $S$. Either order of operations gives the same
result. Hence
\begin{equation}
  \label{eq:steercommute}
  \ptr{F}{S} \;=\; \Tr_{C}\bigl[ \ptr{E}{S}\,(\Id \otimes \rho_{C}) \bigr] .
\end{equation}
Diagonalise $\rho_{C} = \sum_{m} p_{m} \proj{m}$ with $p_{m} \ge 0$. Then
\begin{equation}
  \Tr_{C}\bigl[ X (\Id \otimes \rho_{C}) \bigr]
  \;=\; \sum_{m} p_{m} \, \bra{m} X \ket{m}_{C} ,
\end{equation}
and each $\bra{m}X\ket{m}_{C}$ is a diagonal block of $X$, hence positive
semidefinite whenever $X$ is. Applying this to $X = \ptr{E}{S} \ge 0$ gives
$\ptr{F}{S} \ge 0$, and applying it to $X = \Id - \ptr{E}{S} \ge 0$, together
with $\Tr_{C}[\Id(\Id \otimes \rho_{C})] = \Id_{A}$, gives
$\ptr{F}{S} \le \Id_{A}$. As $S \subseteq A$ was arbitrary, $F \in \effset$.

The state side is immediate because $\stateset$ is unrestricted. For a state
$\rho$ on $A \cup \{C\}$ and an effect $e$ on $C$, the same block argument
applied to $e = \sum_{m} q_{m}\proj{m}$ with $q_{m} \in [0,1]$ shows that
$\Tr_{C}[\rho\,(\Id \otimes e)]$ is positive semidefinite with trace at most
one, hence a subnormalised state.

\paragraph*{(e) The operation set.}
The identity satisfies $\PT{V}\,\id\,\PT{V} = \id$ for every $V$ and so lies in
$\opset$. Convexity holds because
$\PT{V}(\sum_{i} p_{i}\Lambda_{i})\PT{V} = \sum_{i} p_{i}\PT{V}\Lambda_{i}\PT{V}$
and a convex combination of completely positive maps is completely positive.

For sequential composition, let $\Lambda_{1},\Lambda_{2} \in \opset$ be
composable. Using $\PT{V}^{2} = \id$ on the intermediate wires,
\begin{equation}
  \label{eq:seqclosure}
  \PT{V} \Lambda_{2} \Lambda_{1} \PT{V}
  \;=\; \bigl( \PT{V}\Lambda_{2}\PT{V} \bigr)
        \bigl( \PT{V}\Lambda_{1}\PT{V} \bigr) ,
\end{equation}
a composition of completely positive maps. This step uses the convention of
Sec.~\ref{sec:theory:def} that the output wires of $\Lambda_{1}$ carry the same
labels as the input wires of $\Lambda_{2}$, so that the same $\PT{V}$ can be
inserted between them.

For parallel composition, let $\Lambda_{1}$ and $\Lambda_{2}$ act on disjoint
sets of wires $W_{1}$ and $W_{2}$. Partial transposition factorises across
disjoint factors, $\PT{V} = \PT{V \cap W_{1}} \otimes \PT{V \cap W_{2}}$, so
\begin{equation}
  \label{eq:parclosure}
  \PT{V} (\Lambda_{1} \otimes \Lambda_{2}) \PT{V}
  \;=\; \bigl( \PT{V_{1}}\Lambda_{1}\PT{V_{1}} \bigr) \otimes
        \bigl( \PT{V_{2}}\Lambda_{2}\PT{V_{2}} \bigr) ,
\end{equation}
with $V_{i} = V \cap W_{i}$, a tensor product of completely positive maps.
Finally every $\Lambda \in \opset$ is in particular completely positive and
trace preserving, and therefore maps density matrices to density matrices.
Since $\stateset$ is unrestricted, this is the first half of (f).

\paragraph*{(g) Effects close under tensor products.}
If $E_{A}$ and $E_{B}$ are effects of $\Th$ on disjoint sets of wires then, by
the same factorisation,
$\PT{S}(E_{A} \otimes E_{B}) = \PT{S \cap A}(E_{A}) \otimes \PT{S \cap B}(E_{B})$.
Both factors lie in $[0,\Id]$, and a tensor product of operators in $[0,\Id]$
lies in $[0,\Id]$, so $E_{A} \otimes E_{B} \in \effset$.

\paragraph*{Numerical corroboration.}
The steering and composition claims were also checked numerically, on random
instances and against a negative control. See Appendix~\ref{app:numerics}.

\section{\texorpdfstring{\cpptp{}}{C-PPT-P}: conventions and the multipartite
         generalisation}
\label{app:cpptp}

The operation set of Definition~\ref{def:T} is a class already familiar from
entanglement theory, but the conventions in the literature differ in ways that
matter here, and the multipartite version we use is not quite the one usually
stated. This appendix fixes both.

\paragraph*{The bipartite class.}
Alongside the local, separable and two-local classes, Rains introduces the
class
\begin{equation}
  \label{eq:rainsclass}
\begin{split}
  \mathcal{C}_{\Gamma} \;=\;
  &\mathrm{Op}(V \otimes W, V' \otimes W') \\
  \cap\; &\mathrm{Op}(V \otimes W, V' \otimes W')^{\Gamma_{W \otimes W'}} ,
\end{split}
\end{equation}
where $V,W$ are the input spaces of the two parties, $V',W'$ their output
spaces, and $\mathrm{Op}$ denotes the completely positive trace-preserving
maps~\cite{Rains2001}. Two things
in~\eqref{eq:rainsclass} matter below. The transposition is applied to the
input \emph{and} the
output space of the same party, $W$ and $W'$ together. This is the
label-matching convention of Sec.~\ref{sec:theory:def}, and it is Rains'
convention, not an addition of ours. Second, the class is defined by requiring
membership of $\mathrm{Op}$ both before and after conjugation, that is, by
requiring $\Gamma_{W}\Lambda\Gamma_{W}$ to be completely positive rather than
merely positive.

Rains' motivation was not ours, and the difference matters. In the ordering
$\mathcal{C}_{\epsilon} \subset \mathcal{C}_{1},\mathcal{C}_{1'} \subset
\mathcal{C}_{2} \subset \mathcal{C}_{\$} \subset \mathcal{C}_{\Gamma}$, with
all inclusions strict in general, the class $\mathcal{C}_{\Gamma}$ is the
smallest one containing the two-local operations for which membership can be
decided effectively~\cite{Rains2001}. It is thus a computational relaxation of
LOCC, introduced to bound distillation rates rather than to define a theory.

\paragraph*{Positive versus completely positive.}
Two conditions are current under similar names, and only one of them is used
here. A map is \emph{PPT-preserving} if it sends operators with positive
partial transpose to operators with positive partial transpose, which is
positivity of $\Gamma_{V}\Lambda\Gamma_{V}$. It is \emph{completely}
PPT-preserving if $\Gamma_{V}\Lambda\Gamma_{V}$ is completely positive. The
second is strictly stronger, and it is what Definition~\ref{def:T} imposes. The reason is visible in
Sec.~\ref{sec:enc:ops}, where the encoded operation is built
from Kraus operators of $\Gamma_{V}\Lambda\Gamma_{V}$, one sector at a time,
and those exist only under complete positivity.

\paragraph*{The multipartite generalisation.}
Ishizaka and Plenio extend the class to several parties by imposing, on the
Choi operator $\Omega(\Psi)$, the conditions
\begin{equation}
  \label{eq:ipconditions}
  \bigl( \Omega(\Psi)^{\Gamma_{V}} \bigr)^{\Gamma_{V_{i}} \otimes
  \Gamma_{V'_{i}}} \;\ge\; 0
  \qquad \text{for } i = A,B,C ,
\end{equation}
where $\Gamma_{V}$ is the transposition entering the Choi criterion for
complete positivity and $\Gamma_{V_{i}}, \Gamma_{V'_{i}}$ act on the input and
output space of party $i$~\cite{IshizakaPlenio2005}. They observe that the
bipartite case admits two equivalent choices of transposition, whereas in the
tripartite case there are three that are in general inequivalent.

The index $i$ in~\eqref{eq:ipconditions} runs over \emph{single parties}, and
for three parties that is the same as running over all bipartitions. The subsets of
$\{A,B,C\}$ come in complementary pairs, complementary subsets give
the same condition because the global transpose preserves positivity, and the
three singletons already exhaust the pairs. From four parties onward the two
prescriptions separate. There are then $2^{\,n-1}-1 = 7$ inequivalent subsets
but only four singletons, and positivity across a $2$:$2$ cut is not implied by
positivity across the $1$:$3$ cuts. The operator of
Eq.~\eqref{eq:imagwitness} witnesses the separation in closed form, being
$\{0,\tfrac12,1\}$ on every singleton cut and $\{-\tfrac12,\tfrac12,\tfrac32\}$
on every crossing $2$:$2$ cut. That value is the worst possible. Writing
$X = \ptr{E}{1}$, so that $0 \le X \le \Id$, and taking any unit vector $v$
with Schmidt coefficients $(s_{1}, s_{2})$ across the second wire,
$\bra{v} \ptr{E}{\{1,2\}} \ket{v} = \Tr[X \, \ptr{(\proj{v})}{2}]
\ge -s_{1}s_{2} \ge -\tfrac12$, because $\ptr{(\proj{v})}{2}$ has spectrum
$\{s_{1}^{2}, s_{2}^{2}, \pm s_{1}s_{2}\}$. Only
$0 \le \ptr{E}{1} \le \Id$ is used.

Definition~\ref{def:T} takes $V$ over all subsets. The reason is not
generality for its own sake but the sector decomposition of Sec.~\ref{sec:enc:ops}, in which the encoded
operation carries one sector for each
subset $V$ of the wires, its block there is $\PT{V}\Lambda\PT{V}$, and complete
positivity of the whole is complete positivity of every block. A reader
comparing our condition with the one cited will therefore find ours longer by
$2^{\,n-1}-1-n$ conditions for $n \ge 4$ parties, and this is deliberate.

\section{The encoding in basis-free form}
\label{app:poslemma}

The encoding of Definition~\ref{def:encoding} is written in the eigenbasis of
$Y$, which makes the block structure manifest and the positivity lemma
immediate. The same object has a form that refers to no basis. That form shows
the construction does not depend on the qubit realisation, and it is the one
that
carries over to the $G$-frame encoding of Sec.~\ref{sec:sectorial:G}.

\paragraph*{Complex structures.}
A frame register is a two-dimensional space equipped with a \emph{complex
structure}, that is, a real antisymmetric operator $\mathcal{J}$ with $\mathcal{J}^{2} = -\Id$. On
a qubit we may take
\begin{equation}
  \label{eq:Jdef}
  \mathcal{J} \;=\; iY \;=\;
  \begin{pmatrix} 0 & 1 \\ -1 & 0 \end{pmatrix} ,
\end{equation}
which is real, antisymmetric and squares to $-\Id$, and whose eigenvalues
$\pm i$ have $P_{\pm}$ as eigenprojectors. Transposition sends $\mathcal{J}$ to $-\mathcal{J}$ and exchanges $P_{+}$ with
$P_{-}$, conjugation alone fixing $\mathcal{J}$ because it is real. In the language of
Sec.~\ref{sec:sectorial:G} the frame register carries the regular
representation of $\Z_{2}$, with $\ket{y_{\pm}}$ as its two basis vectors and
conjugation acting by group multiplication. This is hypothesis (H1) in the case
at hand.

\paragraph*{Agreement of frames.}
On the $k$ frame registers of a single source put
\begin{equation}
  \label{eq:PiL}
  \Pi_{k} \;=\; P_{+}^{\otimes k} + P_{-}^{\otimes k} ,
  \qquad
  L_{k} \;=\; i \bigl( P_{+}^{\otimes k} - P_{-}^{\otimes k} \bigr) .
\end{equation}
Both are real, $\Pi_{k}$ is a projector, and the cross terms vanish by
orthogonality, so that
\begin{equation}
  \label{eq:LsqPi}
  L_{k}^{2} \;=\; -\,\Pi_{k} ;
\end{equation}
thus $L_{k}$ is a complex structure on the range of $\Pi_{k}$. That range has a
basis-free description. Writing $\mathcal{J}_{i}$ for the complex structure of the $i$th
frame, one has $\mathcal{J}_{i}\mathcal{J}_{j} = -\Id$ on $P_{+}^{\otimes k}$ and on
$P_{-}^{\otimes k}$, whereas $\mathcal{J}_{i}\mathcal{J}_{j} = +\Id$ on any sign pattern in which
the $i$th and $j$th entries differ. Hence $\Pi_{k}$ is the projector onto
\begin{equation}
  \label{eq:Piagree}
  \bigcap_{i<j} \ker \bigl( \mathcal{J}_{i}\mathcal{J}_{j} + \Id \bigr) ,
\end{equation}
the subspace on which all $k$ frames carry the \emph{same} complex structure.
For $k = 2$ this reads $\Pi_{2} = \tfrac12(\Id - \mathcal{J} \otimes \mathcal{J})$ and
$L_{2} = \tfrac12(\Id \otimes \mathcal{J} + \mathcal{J} \otimes \Id)$, which are perhaps the more
recognisable forms.

\paragraph*{The state encoding.}
Split a Hermitian $\rho$ as $\rho = \mathrm{Re}\,\rho + i\,\mathrm{Im}\,\rho$,
with $\mathrm{Re}\,\rho$ real symmetric and $\mathrm{Im}\,\rho$ real
antisymmetric, so that
$\rho^{\mathsf T} = \mathrm{Re}\,\rho - i\,\mathrm{Im}\,\rho$. Substituting
into~\eqref{eq:stateenc} and using~\eqref{eq:PiL},
\begin{equation}
  \label{eq:basisfreestate}
  \tilde\rho \;=\; \tfrac12 \bigl(
    \mathrm{Re}\,\rho \otimes \Pi_{k}
    \;+\; \mathrm{Im}\,\rho \otimes L_{k} \bigr) .
\end{equation}
Both terms are manifestly real, which reproves the reality of $\tilde\rho$
without appeal to the exchange of sectors. On the range of $\Pi_{k}$ the
operator $L_{k}$ plays the role that multiplication by $i$ plays in the
unsymmetrized theory, and this is the sense in which the frame carries the
complex structure that $\sTh$ has given up.

\paragraph*{One frame or many.}
Equation~\eqref{eq:basisfreestate} also makes the difference from the standard
real-space simulation explicit. If a single frame register were shared by every
system in the experiment, then~\eqref{eq:basisfreestate} with $k$ equal to the
total number of wires would be the embedding of complex quantum theory into
real quantum theory used by McKague, Mosca and Gisin~\cite{McKague2009}, and
the simulation would succeed trivially. What the network scenarios exploit is
that the causal structure forbids that frame. The construction of
Sec.~\ref{sec:enc} uses one register per source instead, which the causal
structure does permit, and the price is that $\Pi$ and $L$ are defined only
within a source, so that a party holding wires from two sources holds two
complex structures that need not agree. The positivity lemma is the statement
that $\effset$ is exactly large enough to tolerate that disagreement, and no
larger.

\section{Encoded operations at the level of Choi matrices}
\label{app:sectors}

Section~\ref{sec:enc:ops} works with maps. Passing to Choi matrices adds
nothing to the argument but turns the sector conditions into semidefinite
constraints on a single operator, which is how membership in $\opset$ is
actually decided, and it makes the computation of Sec.~\ref{sec:enc:frames}
about fresh frames short enough to display.

Throughout, the Choi matrix of a map $\Phi$ on $n$ wires is
$J(\Phi) = \sum_{ij} \ketbra{i}{j} \otimes \Phi(\ketbra{i}{j})$, with the
convention that the first factor carries the input labels and the second the
output labels, and $\Phi$ is completely positive if and only if
$J(\Phi) \ge 0$. The convention is the unnormalised one. The Choi matrix of an isometry is then
$\ketbra{v}{v}$ with $\lVert v \rVert^{2}$ equal to the input dimension rather
than to one. Where we call such an object a rank-one projector below, that is
the object meant, and the eigenvalues quoted follow the same convention.

\paragraph*{The sectors are a group-algebra decomposition.}
The block structure that appears throughout this appendix is forced rather than
chosen, and saying why makes the rest of it easier to read.

Let $\mathcal{A}$ be the linear span of $\{\PT{S}\}_{S \subseteq W}$ inside the
real-linear maps on $\Herm$. By Sec.~\ref{sec:theory:def} the $\PT{S}$ commute
and each squares to the identity, so $\mathcal{A}$ is the group algebra of
$\Z_{2}^{\,n}$. That group is abelian with all elements of order two, so it has
$2^{n}$ one-dimensional characters, indexed by subsets $v \subseteq W$ through
$\chi_{v}(S) = (-1)^{|S \cap v|}$, and $\mathcal{A}$ is commutative and
semisimple, isomorphic to $\mathbb{R}^{2^{n}}$. Its primitive idempotents are
\begin{equation}
  \label{eq:idempotents}
  P_{v} \;=\; \frac{1}{2^{n}} \sum_{S \subseteq W} (-1)^{|S \cap v|}\, \PT{S} ,
\end{equation}
which satisfy $P_{v} P_{w} = \delta_{vw} P_{v}$ and $\sum_{v} P_{v} = \id$.
Any map commuting with every $\PT{S}$ is therefore a real combination of the
$P_{v}$, and any object built to be covariant for the group decomposes along
them.

The encoding of Sec.~\ref{sec:enc} does not use the whole group. A source
partition of the wires picks out the subgroup of \emph{source-aligned}
transpositions, isomorphic to $\Z_{2}^{\,m}$ with $m$ the number of sources,
generated by the $\PT{V(\sigma)}$ with $V(\sigma)$ the wire set of a single
source. The frame registers carry the regular representation of that subgroup,
one factor per source, and Eq.~\eqref{eq:opencoding} is exactly the resolution
of the encoded map into its isotypic components under it. The sector index
$\vec s$ of that equation is a character of $\Z_{2}^{\,m}$, and the projector
onto the sector is $P_{\vec s}$ in the sense of~\eqref{eq:idempotents} with the
sum restricted to the subgroup. Nothing about the conditions
$\PT{V}\Lambda\PT{V}$ completely positive, one per sector, is a modelling
choice. They are what the isotypic decomposition delivers, which is the sense
in which Sec.~\ref{sec:main:scope} says the class of Definition~\ref{def:T} is
forced and nothing weaker appears.

Two consequences are worth separating. First, since $\effset$ is invariant
under the full group $\Z_{2}^{\,n}$, it is a fortiori invariant under the
source-aligned subgroup, whatever the source partition happens to be. This is
why the encoding lands inside the effect set for every causal structure at once
rather than for one at a time, and it is the technical content of
Theorem~\ref{thm:main}. Second, a theory whose effect set were invariant only
under some proper subgroup would support the encoding for the source partitions
aligned with that subgroup and no others. The effect set $\effset_{1}$ of
Sec.~\ref{sec:disc} is not of that kind, since the singletons generate
the whole group, but it shows that constraining a generating set is weaker than
constraining the group it generates.

\paragraph*{The encoded channel is a direct sum.}
Let $\Lambda \in \opset$ and let $\tilde\Lambda$ be as in
Definition~\ref{def:encop}. Writing an input basis vector of the enlarged
system as $\ket{i,\vec s}$ and using
$\tilde\Lambda(\ketbra{i,\vec s}{j,\vec t}) = \delta_{\vec s \vec t}\,
\Lambda_{V(\vec s)}(\ketbra{i}{j}) \otimes \proj{\vec s}$, one obtains
\begin{equation}
  \label{eq:choiencoded}
  J(\tilde\Lambda) \;=\; \sum_{\vec s} J\bigl( \Lambda_{V(\vec s)} \bigr)
  \otimes \proj{\vec s}_{R_{\rm in}} \otimes \proj{\vec s}_{R_{\rm out}} .
\end{equation}
The frame registers therefore appear only through mutually orthogonal
projectors, and $J(\tilde\Lambda)$ is block diagonal with one block for each
sign pattern. Positivity of the whole is positivity of every block, which
recovers Lemma~\ref{lem:encops}(i): $\tilde\Lambda$ is completely positive if
and only if every $\Lambda_{V}$ is.

Since $V$ and its complement give the same condition, deciding whether a given
channel on $n$ wires belongs to $\opset$ amounts to imposing $2^{\,n-1}$
semidefinite constraints on the single matrix $J(\Lambda)$, namely
\begin{equation}
  \label{eq:opsdp}
  \bigl( J(\Lambda) \bigr)^{\mathsf{T}_{V_{\rm in} \cup V_{\rm out}}}
  \;\ge\; 0
  \qquad \text{for all } V ,
\end{equation}
using the identity of the next paragraph with $W = V$. This is a semidefinite
program, and it is the practical sense in which membership of the completely
PPT-preserving class is decidable~\cite{Rains2001}.

\paragraph*{Mismatched frames.}
For arbitrary subsets $V$ of the input labels and $W$ of the output labels,
\begin{equation}
  \label{eq:choimismatch}
  J\bigl( \PT{W} \circ \Lambda \circ \PT{V} \bigr)
  \;=\; \bigl( \PT{V} \otimes \PT{W} \bigr) \bigl( J(\Lambda) \bigr) ,
\end{equation}
the two transpositions acting on the input and output factors of the Choi
matrix respectively. This follows by substituting
$\PT{V}(\ketbra{i}{j}) = \ketbra{j}{i}$ on the wires in $V$ and relabelling the
summation.

Section~\ref{sec:enc:frames} argued that a wire leaving a box must inherit the
frame of the wire entering it, on the ground that independent frames would
require complete positivity of $\PT{W}\Lambda\PT{V}$ for every pair rather than
for the diagonal ones. Equation~\eqref{eq:choimismatch} makes the consequence
explicit. Taking $V = \emptyset$ and $W$ the full set of output labels, the
requirement becomes
\begin{equation}
  \label{eq:inoutppt}
  \bigl( J(\Lambda) \bigr)^{\mathsf{T}_{\rm out}} \;\ge\; 0 ,
\end{equation}
that is, the Choi matrix must have positive partial transpose across the cut
separating inputs from outputs.

No unitary survives this. For $\Lambda = U \cdot U^{\dagger}$ on a
$d$-dimensional space, $J(\Lambda) = \ketbra{U}{U}$ with
$\ket{U} = \sum_{i} \ket{i} \otimes U\ket{i}$, whose Schmidt rank across the
input:output cut is $d$. A rank-one projector onto a vector with Schmidt
coefficients $\lambda_{1},\dots,\lambda_{d}$ has partial transpose with
eigenvalues $\lambda_{i}^{2}$ and $\pm\lambda_{i}\lambda_{j}$ for $i < j$, so
its least eigenvalue is $-\max_{i<j}\lambda_{i}\lambda_{j}$, which is strictly
negative as soon as $d \ge 2$. Independent input and output frames would
therefore leave the simulation with no reversible dynamics at all.

\paragraph*{An explicit exclusion.}
The same identity, used with $W = V$ as in~\eqref{eq:opsdp}, shows why
\textsc{cnot} is not an operation of $\Th$. Its Choi matrix is a rank-one
projector onto a vector of Schmidt rank two across the cut separating the two
parties, taking both their input and output wires together, so the
corresponding partial transpose is not positive. The numerical value is
recorded in Appendix~\ref{app:numerics}.

\section{The frame-extension channel}
\label{app:frameext}

Section~\ref{sec:enc:frames} introduces the channel by which a wire created
inside a circuit acquires a frame aligned with an existing one, and states its
properties. Here they are verified, and the failure of the coherent
alternative is made quantitative.

\paragraph*{Definition and elementary properties.}
The channel $\mathcal{F}$ takes one frame register to two and has Kraus
operators $F_{\pm} = \ket{y_{\pm}y_{\pm}}\bra{y_{\pm}}$. It is trace
preserving,
\begin{equation}
  \label{eq:extTP}
  F_{+}^{\dagger}F_{+} + F_{-}^{\dagger}F_{-} \;=\; P_{+} + P_{-} \;=\; \Id ,
\end{equation}
and it aligns, since
$\sum_{\pm} F_{\pm} P_{s} F_{\pm}^{\dagger} = P_{s} \otimes P_{s}$ for
$s = \pm$, the cross terms vanishing by orthogonality of $P_{+}$ and $P_{-}$.
Its Choi matrix has rank two, one term for each outcome.

\paragraph*{Reality.}
The Kraus operators are not real, since $\ket{y_{\pm}}$ are not real vectors.
That is not what is required. What the encoding needs is that the
\emph{channel} commute with complex conjugation, and it does, since conjugation sends
$F_{\pm}$ to $F_{\mp}$, returning the Kraus family to itself, so
$\mathcal{F}(\bar X) = \overline{\mathcal{F}(X)}$ for every $X$. This is the
same lock as in Secs.~\ref{sec:enc:constr} and~\ref{sec:enc:ops}: conjugation
exchanges the two sectors rather than fixing each.

\paragraph*{Admissibility in \texorpdfstring{$\Th$}{T}.}
The condition~\eqref{eq:defops} has to be checked on all three wires, the input
frame and the two output frames. It holds for a structural reason rather than
by computation. The channel measures and reprepares, so it is entanglement breaking and its
Choi matrix is separable across the cut between input and output. Its output
is moreover the product $P_{s} \otimes P_{s}$ for each outcome, so the Choi
matrix is a sum of products of positive operators across all three
factors,
\begin{equation}
  \label{eq:extchoi}
  J(\mathcal{F}) \;=\; \sum_{s = \pm} P_{s}^{\mathsf T} \otimes P_{s} \otimes P_{s} .
\end{equation}
A fully separable operator has positive partial transpose across every
bipartition, so $\mathcal{F} \in \opset$.

\paragraph*{The coherent copy is not admissible.}
The obvious alternative is the isometry sending $\ket{y_{\pm}}$ to
$\ket{y_{\pm}y_{\pm}}$, which performs the same alignment while preserving
superpositions. Its Choi matrix is the rank-one projector onto
$\sum_{s} \ket{y_{s}} \otimes \ket{y_{s}y_{s}}$, a vector of Schmidt rank two
across the cut separating the input from the outputs, so by the spectral
formula of Appendix~\ref{app:sectors} its partial transpose there has least
eigenvalue $-1$. The coherent copy is excluded from $\Th$, and the restriction
on operations is doing work at this point rather than being carried along
inertly.

Nothing is lost by the incoherent choice. The states that pass through
$\mathcal{F}$ are the frame registers of encoded sources, which by
Eq.~\eqref{eq:stateenc} are diagonal in the sign basis, and no coherence
between sectors is ever called upon by the construction. The one thing
$\mathcal{F}$ must preserve is the perfect correlation of the signs, and
Eq.~\eqref{eq:extchoi} preserves it exactly.

\paragraph*{Numerical values.}
Trace preservation and the two alignment identities hold to $10^{-16}$. The
Choi matrix of $\mathcal{F}$ is positive and has positive partial transpose
across all four inequivalent cuts of its three wires, with worst value
$-1.1 \times
10^{-16}$. The channel commutes with conjugation exactly, and the coherent copy
has least eigenvalue $-1.0000$ across the worst cut. Details in
Appendix~\ref{app:numerics}.

\section{A classical--\texorpdfstring{$\Th$}{T} gap in the bilocality scenario}
\label{app:bilocal}

\begin{table*}[!t]
\caption{\label{tab:numerics}Numerical checks. ``Worst'' is the largest
violation of the stated condition over the trials, or the value of the
quantity being computed. Negative controls, which must fail, are marked.}
\centering
\begin{tabular}{cp{0.40\textwidth}cp{0.34\textwidth}}
\hline\hline
 & Check & Trials & Worst value \\ \hline
 1 & Lemma~\ref{lem:psym}: spectra of $\ptr{P_{\rm sym}}{A}$, $\ptr{P_{\rm asym}}{A}$ & exact
   & $\{\tfrac12,\tfrac12,\tfrac12,\tfrac32\}$
     and $\{-\tfrac12,\tfrac12,\tfrac12,\tfrac12\}$ \\
 2 & Lemma~\ref{lem:positivity}: $\xieff(E)$ admissible iff
     $E \in \effset$ & 400
   & 0 violations; spectra agree to $2.0\times10^{-15}$ \\
 3 & encoded sources real, symmetric, PSD, unit trace & 300
   & $2.8\times10^{-17}$, $0$, $-1.3\times10^{-16}$, $1.1\times10^{-15}$ \\
 4 & \cpptp{} excludes \textsc{cnot}, admits local unitaries & exact
   & $-2.0000$ (control), $-3.5\times10^{-16}$ \\
 5 & \textsc{cnot} pulls $\proj{+0}$ back to the Bell projector & exact
   & $-0.5000$ (control) \\
 6 & Prop.~\ref{prop:validgpt}(f): $\Lambda^{*}(\effset) \subseteq \effset$ & 300
   & $0$ \\
 7 & $\PT{V}\Lambda\PT{V}$ trace preserving, every $V$ & $200 \times 4$
   & $1.0\times10^{-15}$ \\
 8 & Prop.~\ref{prop:validgpt}(d): steering, effect side & 400, 120
   & $0$, $0$ \\
 9 & Prop.~\ref{prop:validgpt}(d): steering, state side & 400
   & $0$ \\
10 & Prop.~\ref{prop:validgpt}(e): parallel composition & 60
   & $-2.2\times10^{-15}$; control $-4.0000$ \\
11 & Prop.~\ref{prop:validgpt}(e): sequential composition & 60
   & $-3.8\times10^{-16}$ \\
12 & frame extension: TP, alignment, PPT, conjugation & 200
   & $\le 1.1\times10^{-16}$ throughout \\
13 & the coherent copy is inadmissible & exact
   & $-1.0000$ (control) \\
14 & Theorem~\ref{thm:main} in bilocality: $p_{\Th} = p_{\sTh}$ & 25, 6
   & $4.4\times10^{-16}$, $3.5\times10^{-16}$ \\
15 & Sec.~\ref{sec:theory:tomo}: \textsc{chsh} with product effects & exact
   & $2.8284271247$ \\
16 & App.~\ref{app:bilocal}: classical--$\Th$ gap in bilocality & exact
   & $\Pr[a{=}j] = 1$, \textsc{chsh} $= 2.8284271247$ \\
17 & App.~\ref{app:bilocal}: the BGP inequality on the same
     measurement & exact
   & $2^{1/4}$; controls $\sqrt2$ and ${<}1$; crossing at
     $\varepsilon = 0.535$ \\
18 & Sec.~\ref{sec:sectorial:not}: the Horodecki candidate & exact
   & ranks $7$ and $6$; realignment $1.0005$--$1.0030$ \\
19 & Sec.~\ref{sec:sectorial:tight}: $\kappa$ at $P_{\rm sym}$, and
     Eq.~\eqref{eq:scseries} & $3 \times 1500$
   & $\kappa = (1+d)/2$ exactly, not exceeded; residual $2\times10^{-16}$ \\
20 & App.~\ref{app:cpptp}: on four wires the singleton cuts do not imply the
     pair cuts & 4000 & singleton slack $\ge 0$, pair violation $-0.0869$;
     the exact optimum is $-\tfrac12$, by Eq.~\eqref{eq:imagwitness} \\
21 & Prop.~\ref{prop:noswap}: the swapped state of $\Th$ is always PPT
   & $2 \times 400$ & $0$; control, Bell projector, $-0.5000$ \\
22 & App.~\ref{app:e1}: $E_{t}$ lies in $\effset_{1}$ iff
     $|t| \le \tfrac12$, in $\effmax$ iff $|t| \le \tfrac14$ & exact
   & thresholds attained \\
23 & Eq.~\eqref{eq:opsvalid} against the definition it encodes
   & $5 \times 300$ & four channels admitted, {\sc cnot} pull-back at
     $-0.5000$ \\
24 & Prop.~\ref{prop:noswap} with four sources & exact
   & $1$:$3$ cuts at $0$, $2$:$2$ at $-0.0714$; control $-0.5$ \\
25 & Eq.~\eqref{eq:imagwitness}: an escape with invariant part $\Id/2$
   & exact & $H^{\mathsf T} + H = 0$, $r = 2$; control $r = 1$ \\
26 & Eq.~\eqref{eq:opschain}: the certificate, thirteen closed-form
     eigenvalue claims & exact
   & pairing $-0.0573$, worst margin $+1.6\times10^{-3}$ \\
\hline\hline
\end{tabular}
\end{table*}

This appendix supplies the detail behind the first of the four checks in
Sec.~\ref{sec:main:necessary}. Nothing in it is new as a statement about networks. That separable
measurements at the central node already suffice for network nonlocality is
known~\cite{Andreoli2017,Polino2026}. What has to be
shown here is only that a witness of that kind can be built from effects of
$\Th$, which the restriction~\eqref{eq:defeffects} makes a nontrivial demand.

\paragraph*{The protocol.}
Both sources emit $\proj{\phi^{+}}$. The central party holds one wire from
each, labelled $B_{1}$ and $B_{2}$, and has no setting. It measures $B_{1}$ in
the eigenbasis of $Z$ with outcome $j$, and then measures $B_{2}$ with a
setting determined by $j$, obtaining $m$. Its outcome is $b = (j,m)$ and its
POVM is
\begin{equation}
  \label{eq:bilocalPOVM}
\begin{split}
  E_{(j,m)} \;=\; \proj{j} \otimes \tfrac12 \bigl(
  &\Id + (-1)^{m} \cos\theta_{j} \, Z \\
  &+ (-1)^{m}\sin\theta_{j} \, X \bigr) ,
\end{split}
\end{equation}
with $\theta_{0} = \pi/4$ and $\theta_{1} = -\pi/4$. Alice measures $Z$ on her
wire, and Charlie measures $Z$ or $X$ according to his setting $z$.

Every element of~\eqref{eq:bilocalPOVM} is a product of two quantum effects,
hence an effect of $\Th$ by Sec.~\ref{sec:theory:tomo}. This is the only point
at which the restriction is in question, and it is where an entangled joint measurement would have failed. A Bell-state measurement
has elements with
negative partial transpose and is unavailable in $\Th$.

\paragraph*{The classical bound.}
Alice's outcome $a$ equals $j$ with probability one, since both are outcomes of
$Z$ measurements on the two halves of $\proj{\phi^{+}}$. In a classical model
of the form~\eqref{eq:bilocalclassical} an event of probability one must occur
for almost every pair $(\lambda_{1},\lambda_{2})$, so $a$ and $j$ are both
deterministic and equal there. As $a$ depends only on $\lambda_{1}$, so
does $j$. Write $j = f(\lambda_{1})$.

Conditioning on the value of $j$ therefore conditions only on $\lambda_{1}$,
and by independence of the sources the conditional distribution of
$\lambda_{2}$ is unchanged, $q_{2}(\lambda_{2}|j) = q_{2}(\lambda_{2})$. What
remains is
\begin{equation}
  \label{eq:bilocalreduced}
  P(m,c \,|\, j,z)
  \;=\; \int q_{2}(\lambda_{2}) \, R(m|j,\lambda_{2}) \, P(c|z,\lambda_{2}) ,
\end{equation}
a local hidden-variable model in which $j$ plays the part of a setting for the
central party and $\lambda_{2}$ is the shared variable. Its \textsc{chsh} value is
therefore at most $2$. Note where the causal structure enters. It is the independence of the two
sources, and nothing else, that makes $q_{2}$ in
\eqref{eq:bilocalreduced} free of $j$.

\paragraph*{The quantum value.}
With the settings above, the correlator between $m$ and $c$ conditioned on $j$
is $\cos(\theta_{j} - \varphi_{z})$ with $\varphi_{0} = 0$ and
$\varphi_{1} = \pi/2$, giving
\begin{equation}
  \label{eq:bilocalchsh}
  \sum_{j,z} (-1)^{jz} \, \langle m \, c \rangle_{j,z}
  \;=\; 2\sqrt2 .
\end{equation}
The distribution is therefore not classically realisable in the bilocality
causal structure, and $\Th$ exhibits a gap against classical probability theory
there, as Proposition~2 of Appendix~J of Ref.~\cite{Ying2025foil} requires.

\paragraph*{The same measurement against the standard inequality.}
The argument above is self-contained. It is instructive nonetheless to see the
protocol against the bilocality inequality of Branciard, Rosset, Gisin and
Pironio~\cite{Branciard2012}, which is the standard test for the scenario and
which refines the analysis begun in Ref.~\cite{Branciard2010}. Label the central outcome by the pair
$(b^{0},b^{1}) = (j \oplus m, m)$ and define $I$ and $J$ as in
Ref.~\cite{Branciard2012}, so that bilocal-classical correlations satisfy
$\sqrt{|I|} + \sqrt{|J|} \le 1$. Give Alice a second setting, taken to be the
trivial one, and let Charlie measure $Z$ and $X$. The resulting distribution
gives
\begin{equation}
  \label{eq:bgpvalue}
\begin{split}
  |I| \;=\; |J| &\;=\; \tfrac{1}{2\sqrt2} , \\
  \sqrt{|I|} + \sqrt{|J|} &\;=\; 2^{1/4} \;\approx\; 1.1892 ,
\end{split}
\end{equation}
against the entangled-measurement value of $\sqrt2$. So $\Th$ violates the
inequality outright, with a measurement all of whose elements are rank-one
product projectors. The violation is not an artefact of the trivial setting.
Replacing that setting by the two-outcome measurement with first element
$(1-\varepsilon)\Id + \tfrac{\varepsilon}{2}(\Id + Z)$, which interpolates
between the trivial setting at $\varepsilon = 0$ and a projective one at
$\varepsilon = 1$, leaves the value above $1$ up to
$\varepsilon \approx 0.535$, with $1.1095$ at $\varepsilon = 0.25$ and
$1.0151$ at $\varepsilon = 0.5$.

That separable measurements can violate this family of inequalities is not
new. Branciard \emph{et al.} observe it themselves for the variant in which the
central party has a setting~\cite{Branciard2012}. What matters
here is only that the violating measurement can be assembled from effects of
$\Th$, which the restriction~\eqref{eq:defeffects} makes a real constraint.

One feature of this is easy to misread. If Alice's two
settings are both projective, the same measurement gives
$\sqrt{|I|} + \sqrt{|J|}$ equal to $0.5946$ or $0.8409$ depending on which two
she chooses, and the inequality is not violated at all, even though the
distribution remains nonbilocal by the argument above. The sensitivity of the
inequality here therefore depends on Alice's settings rather than on anything
the central party did, so the perfect-correlation argument and the inequality belong side by side.

\paragraph*{Numerical check.}
All of the above was verified directly. The four elements
of~\eqref{eq:bilocalPOVM} sum to the identity to machine precision. Each is a
product operator and satisfies $0 \le \ptr{E}{S} \le \Id$ for all four subsets
$S$, with worst violation $0.0$. The distribution is normalised for each $z$
and gives $\Pr[a=j] = 1$ to ten decimal places, and the \textsc{chsh} combination
evaluates to $2.8284271247$. The values of $\sqrt{|I|}+\sqrt{|J|}$ quoted
above are computed from the joint distribution by the definitions
of Ref.~\cite{Branciard2012} rather than from any operator expression, and the
convention is calibrated on two controls, an entanglement-swapping protocol
with a Bell-state measurement, which returns $\sqrt2$, and the requirement that
bilocal-classical distributions stay below $1$. Seed and script details are in
Appendix~\ref{app:numerics}.

\section{The effect set \texorpdfstring{$\effset_{1}$}{E1}}
\label{app:e1}

This appendix supplies what Sec.~\ref{sec:disc} defers about the effect set
$\effset_{1}$. Both witnesses below live on four qubit wires, which by the
argument there is the smallest number at which $\effset_{1}$ and $\effmax$
differ.

Along the ray $E = \Id/2 + sH$ with $H$ Hermitian, the
condition $0 \le \ptr{E}{S} \le \Id$ reads
$s \lVert \ptr{H}{S} \rVert_{\infty} \le 1/2$, so the boundary of
$\effset_{1}$ on that ray lies inside $\effmax$ if and only if
\begin{equation}
  \label{eq:ratio}
  r(H) \;:=\;
  \frac{\max_{|S| = 2} \lVert \ptr{H}{S} \rVert_{\infty}}
       {\max_{|S| \le 1} \lVert \ptr{H}{S} \rVert_{\infty}}
  \;\le\; 1 .
\end{equation}
The ratio is scale invariant, so no renormalisation is needed to compare
directions, and it is bounded: a $2$:$2$ transposition is a single-wire one
followed by a one-qubit transposition, whose completely bounded norm is two, so
$r \le 2$ always. Both of the witnesses below attain that ceiling.

A closed-form element of $\effset_{1} \setminus \effmax$ is
\begin{equation}
  \label{eq:Et}
  E_{t} \;=\; \tfrac12\bigl( \Id + t\,\SWAP_{12} \otimes \SWAP_{34} \bigr) ,
\end{equation}
whose spectra are exact. The swap operator has spectrum $\{\pm1\}$ and
$\ptr{\SWAP}{1} = d\proj{\phi^{+}}$ has spectrum $\{0,d\}$, so $E_{t}$ itself
has eigenvalues $(1\pm t)/2$, a single-wire partial transpose has
$(1\pm 2t)/2$ and $1/2$, and a partial transpose across a crossing $2$:$2$ cut
has
$(1+4t)/2$ and $1/2$. Hence $E_{t} \in \effset_{1}$ exactly when $|t| \le 1/2$ and
$E_{t} \in \effmax$ exactly when $|t| \le 1/4$, the upper bound
$\ptr{E}{S} \le \Id$ binding for positive $t$ and the lower bound for negative
$t$, and any $t$ in between separates
the two. Here $r = 2$.

The first witness is a real matrix and is therefore invariant under the
symmetry. The second is not, which is what Sec.~\ref{sec:disc} appeals to. Take
\begin{equation}
  \label{eq:imagwitness}
  H \;=\; \tfrac12 \bigl\{\, \SWAP_{12} \otimes \SWAP_{34} ,\;
  \Id Z X Y \,\bigr\} ,
  \qquad
  E \;=\; \tfrac12 ( \Id + H ) ,
\end{equation}
the anticommutator of the direction of $E_{t}$ with a Pauli string carrying an
odd number of $Y$ factors. Then $H^{\mathsf T} = -H$ exactly, so the invariant
part of $E$ is $\Id/2$ and does not move off centre at all. The spectra are
integers: $H$ and each of its single-wire partial transposes have spectrum
$\{-1,0,1\}$, and each of the four crossing $2$:$2$ transpositions has
$\{-2,0,2\}$. So $E$ and every $\ptr{E}{\{i\}}$ have spectrum
$\{0,\tfrac12,1\}$, saturating the conditions defining $\effset_{1}$ with zero
slack, while a crossing cut gives $\{-\tfrac12,\tfrac12,\tfrac32\}$ and
violates $\effmax$ by $\tfrac12$ at both ends. Again $r = 2$.

\section{Numerical verification}
\label{app:numerics}

Every claim in this paper is proved analytically. The checks recorded here are
corroboration, not evidence, and they cover the numerical claims of the body
rather than every closed-form spectrum stated in it. They are of a particular
kind. Each is a closed-form identity or an eigenvalue evaluation, so there is no solver, no
optimisation and no search over a feasible set. A check that passes verifies
algebra rather than failing to find a counterexample. The script accompanies this paper as an ancillary file and is deposited
separately~\cite{verifyT}. It uses only NumPy and runs in seconds. The certificate for Eq.~\eqref{eq:opschain} is deposited
alongside it, as fourteen operators in a NumPy archive together with a
solver-free checker. The checker distributed with it and the one used here were
written independently of each other and agree to the last digit.

Table~\ref{tab:numerics} collects the results. All figures come from a single
run with seed $20260823$ under NumPy $2.4.4$. Values of order $10^{-16}$ are at the level of double-precision
rounding.

Three caveats belong with the table, and none of them is cosmetic.

The random elements of $\opset$ in checks~6, 7, 10, 11 and~14 are sampled as
convex mixtures of local unitary channels, which form a proper subset of the
completely PPT-preserving class. Those checks therefore test sufficiency on
that subset only. Necessity is tested exactly rather than randomly, on
\textsc{cnot} in checks~4 and~5 and on $\mathrm{CNOT} \otimes \Id$ in check~10, and
on the coherent copy in check~13. These are the negative controls that keep
the tests from being vacuous.

The Kraus operators of the frame-extension channel are not real, and check~12
does not claim otherwise. What the encoding requires is that the channel
commute with complex conjugation, which is the quantity reported.

Finally, none of this addresses necessity of the restriction on operations, in
the sense of Sec.~\ref{sec:main:scope}. Checks~4, 5 and~13 show that specific
channels fall outside $\opset$ and that the encoding fails for them. They do
not show that no other encoding succeeds, and no numerical experiment could.

\bibliographystyle{quantum}
\bibliography{refs}

\end{document}